\newtheorem{theorem}{Theorem}
\newtheorem{lemma}[theorem]{Lemma}
\newtheorem{definition}{Definition}
\newtheoremstyle{introstyle}
{}
{}
{\itshape}
{}
{\bfseries}
{.}
{ }
{}
\theoremstyle{mystyle}
\newtheorem*{theoremintro}{Theorem}
\DeclarePairedDelimiter\ceil{\lceil}{\rceil}
\DeclarePairedDelimiter\floor{\lfloor}{\rfloor}
\newcommand{\mchi}{\mathcal{\hat{C}}_i}
\newcommand{\mci}{\mathcal{{C}}_i}
\newcommand{\lcc}{\textbf{LamPrime}}
\newcommand{\lp}{\textbf{LP}}
\newcommand{\opt}{\textbf{OPT}}
\newcommand{\cut}{\textbf{cut}}
\DeclareMathOperator*{\minimize}{minimize}
\DeclareMathOperator{\argmax}{argmax}
\providecommand{\subjectto}{\ensuremath{\text{subject to}}}
\newcommand{\NEPPC}{\ensuremath{\mathit{NEPPC}}}
\newcommand{\eps}{\ensuremath{\varepsilon}}
\newcommand{\tony}[1]{}
\renewcommand{\tony}[1]{{\textcolor{blue}{[{#1}\ --\ AIW]}}}
\newcommand{\lmd}{\lambda}
\newcommand{\curC}{\mathscr{C}}
\newcommand{\mbx}{\mathbf{x}}
\newcommand{\mby}{\mathbf{y}}
\newcommand{\mbc}{\mathbf{c}}
\newcommand{\mbd}{\mathbf{d}}
\newcommand{\mbb}{\mathbf{b}}
\newcommand{\lpcc}{LambdaPrime\xspace}
\newcommand{\orlp}{\textbf{ORLP}$(\mbx^*, s,\lmd, \eps)$\xspace}
\newcommand{\lpPlus}{\textbf{ORLP}$(\mbx^*, 1,\lmd_0, \eps)$\xspace}
\newcommand{\lpMinus}{\textbf{ORLP}$(\mbx^*, -1,\lmd_0, \eps)$\xspace}
\newcommand{\lcclp}{\textbf{LP}$(\lmd)$\xspace}
\newcommand{\lcclpZero}{\textbf{LP}$(\lmd_0)$\xspace}
\newcommand{\lcclpNone}{\textbf{LP}}
\begin{document}
\title{Graph Clustering in All Parameter Regimes}
%
%
%


\author{Junhao Gan \\ The University of Melbourne \\ junhao.gan@unimelb.edu.au \and David F. Gleich\thanks{Supported by NSF grants IIS-1546488, CCF-1909528, CCF-0939370, NSF Center for Science of Information ST, the Sloan Foundation, Purdue IDSI Grant, and NASA.} \\ Purdue University \\dgleich@purdue.edu \and Nate Veldt\thanks{Part of the work was done when the author was visiting the University of Melbourne. Research was supported by the Melbourne School of Engineering.} \\ Cornell University  \\ nveldt@cornell.edu
\and Anthony Wirth \thanks{Supported by the Melbourne School of Engineering} \\
The University of Melbourne \\
awirth@unimelb.edu.au 
\and
Xin Zhang \\
The University of Melbourne \\
xinz11@student.unimelb.edu.au
}

%
%
%
\date{}
\maketitle

\begin{abstract}
Resolution parameters in graph clustering represent a size and quality trade-off. We address the task of efficiently
solving a parameterized graph clustering objective for \emph{all} values of a resolution parameter. Specifically, we
consider an objective we call LambdaPrime, involving a parameter~$\lambda \in (0,1)$. This objective is related to other parameterized clustering problems, such as parametric generalizations of modularity, and captures a number of specific clustering problems as special cases, including sparsest cut and cluster deletion.
While previous work provides approximation results for a single resolution parameter, we seek a set of approximately optimal clusterings for all values of~$\lambda$ in polynomial time. In particular, we ask the question, how small a family of
clusterings suffices to optimize -- or to approximately optimize -- the LambdaPrime objective over the full possible
spectrum of~$\lambda$?

We obtain a family of logarithmically many clusterings by solving the parametric linear
programming relaxation of LambdaPrime at a logarithmic number of parameter values, and round their solutions using
existing approximation algorithms. We prove that this number is tight up to a constant factor. Specifically, for a
certain class of ring graphs, a logarithmic number of feasible solutions is required to provide a constant-factor
approximation for the LambdaPrime LP relaxation in all parameter regimes. We additionally show that for any graph with~$n$
nodes and~$m$ edges, there exists a set of $m$ or fewer clusterings such that for every $\lambda \in (0,1)$, the
family contains an exact solution to the LambdaPrime objective.
There also exists a set of~$O(\log n)$ clusterings that
provide a $(1+\varepsilon)$-approximate solution in all parameter regimes; we demonstrate simple graph classes for which
these bounds are tight.
\end{abstract}

\section{Introduction}
Graph clustering is the task of separating a graph into large, disjoint sets of nodes that share more edges with each
other than the rest of the graph. This often involves, implicitly or explicitly, a trade-off between size and
edge-density. Hence, there are a number of combinatorial objective functions for graph clustering that rely on a tunable
resolution parameter, which  controls the edge density of clusters formed by optimizing the objective. Solving such an
objective for a range of parameters enables detecting different types of clustering structure in the same graph. This has applications to hierarchical clustering~\cite{Delvenne2010stabilitypnas,Jeub2018multiresolution,ReichardtBornholdt2004} and the detection of robust and stable clusterings that remain optimal over a range of parameter settings~\cite{Arenas2008analysis,Delvenne2010stabilitypnas,Jeub2018multiresolution,schaub2012multiscale}. In this paper we develop techniques for finding small families of clusterings that, for all values of the resolution parameter, contain an exact or approximate solution to a graph clustering objective. We also prove fundamental lower bounds on the number of clusterings needed to exactly or approximately optimize a certain objective function for all values of the parameter.  

%

\subsection{Parametric Graph Clustering}
Our work specifically considers a simple parametric clustering objective we call LambdaPrime. This objective can be viewed as a slight variation of the LambdaCC graph clustering framework~\cite{Veldt:2018:CCF:3178876.3186110}, which is itself a parameterized variant of Correlation Clustering~\cite{Bansal2004correlation}, with resolution parameter $\lambda$. Formally, given an undirected graph $G = (V,E)$, the LambdaPrime graph clustering framework partitions $G$ by optimizing the following objective function
\begin{equation}
\label{eq:lcc}
\minimize \,\,  \sum_{S \in \mathcal{C}} \left( \frac{1}{2}\, \cut(S) + \lambda {|S| \choose 2} \right),
\end{equation}
where $\mathcal{C}$ is a non-overlapping clustering of the nodes and $\cut(S)$ denotes the number of edges leaving a cluster $S\in \mathcal{C}$. The value of $\lambda$ controls the size and density of clusters formed by optimizing the objective. Furthermore, due to its relationship with LambdaCC~\cite{Veldt:2018:CCF:3178876.3186110} (Section~\ref{lprime-lcc} and Appendix~\ref{app:lamcc}), it is known that several other well-studied objective functions for graph clustering are captured as special cases of LambdaPrime for fixed values of $\lambda$. This includes relationships with sparsest cut, cluster deletion~\cite{ShamirSharanTsur2004}, and modularity clustering~\cite{newman2004modularity}. 

A number of other closely related graph clustering objective functions also rely on tunable resolution parameters~\cite{Arenas2008analysis,Delvenne2010stabilitypnas,ReichardtBornholdt2004,ReichardtBornholdt2006,traag2011narrow}. Many of these can be viewed as generalizations of the popular modularity clustering objective~\cite{newman2004modularity}. In this manuscript we broadly refer to these as \emph{parametric} graph clustering objective functions, where the parameter in question is specifically a resolution parameter that controls edge density within clusters. The characteristic that most distinguishes LambdaPrime from other parametric objectives is its equivalence with a weighted version of Correlation Clustering~\cite{Bansal2004correlation}, a problem that has been studied extensively from the perspective of approximation algorithms. The modularity objective is NP-hard to approximate to within any multiplicative factor~\cite{Dinh:2015:NCV:2919336.2920600}, and thus techniques typically used for optimizing generalizations of modularity are heuristics with no approximation guarantee~\cite{Arenas2008analysis,Delvenne2010stabilitypnas,genLouvain_software,ReichardtBornholdt2004,traag2011narrow,ReichardtBornholdt2006,traag2019leiden}. However, LambdaPrime corresponds to a linear transformation of the modularity objective with a resolution parameter, and permits several algorithmic guarantees based on applying algorithmic techniques developed for Correlation Clustering. In particular, one can obtain an $O(\log n)$ approximation for any value of $\lambda \in (0,1)$ using standard weighted Correlation Clustering algorithms~\cite{CharikarGuruswamiWirth2005,demain2006ccgen}. Several other linear programming based algorithms specifically for the related LambdaCC problem have been developed as well for different parameter regimes~\cite{gleich2018ccgen,Veldt:2018:CCF:3178876.3186110}. 

\subsection{Clustering in All Parameter Regimes}
Optimizing a parametric clustering objective over a wide range of resolution parameters is a useful approach for
detecting different types of clustering structure in the same network. This is standard practice in regularized
statistical fitting, another area with parameterized objectives. As mentioned, however, most previous approaches for parametric graph
clustering rely on applying heuristic clustering techniques with no
guarantees.
Meanwhile, although approximation algorithms for LambdaPrime can be directly derived from existing work on LambdaCC and
Correlation Clustering, these only provide approximation guarantees for a single fixed value of $\lambda$. In this
paper, we focus on finding \emph{families} of clusterings that come with rigorous optimality guarantees for an entire
range of parameter values in a parametric graph clustering objective. More precisely, we say that a family of clusterings solves (or approximates) a parametric objective \emph{in all parameter regimes} if, for every value of the resolution parameter, the family contains a solution (or approximate solution) to the objective. In our work, we seek families satisfying guarantees both in terms of the approximation factor, as well as in terms of the number of clusterings needed to attain such an approximation factor for all values of a resolution parameter. 



 \subsection{Overview of Contributions}
 In our work, we provide new lower bounds and techniques for exactly or approximately solving the LambdaPrime objective in all parameter regimes. We provide an outline of our paper, including informal statements of our main results and a discussion of their significance
 
 \paragraph*{Bounding the Size of Optimal Solution Families}
 We begin by proving a bound on the number of clusterings needed to \emph{optimally} solve LambdaPrime in all parameter regimes.
 \begin{theoremintro}
 	\label{thm:optimal2}
 	(Section~\ref{sec:optimal}, Theorem~\ref{thm:optimal})
 	Given a graph $G = (V,E)$, there exists a family of~$|E|$ or fewer clusterings which, for every value of $\lambda \in (0,1)$, contains an optimal LambdaPrime clustering for that $\lambda$. On star graphs, this bound is tight (Section~\ref{sec:star}).
 \end{theoremintro}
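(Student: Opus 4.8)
The plan is to reinterpret the LambdaPrime objective as a family of lines in the variable $\lambda$, one per clustering, and to study their lower envelope. For a fixed clustering $\mathcal{C}$, note that $\frac{1}{2}\sum_{S\in\mathcal{C}}\cut(S)$ counts each inter-cluster edge once, so the objective in \eqref{eq:lcc} equals $a(\mathcal{C}) + \lambda\, b(\mathcal{C})$, where $a(\mathcal{C})$ is the number of cut (inter-cluster) edges and $b(\mathcal{C}) = \sum_{S\in\mathcal{C}}\binom{|S|}{2}$ is the number of intra-cluster pairs. Both are non-negative integers, with $a(\mathcal{C})\in\{0,1,\dots,|E|\}$. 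Hence $\OPT(\lambda)=\min_{\mathcal{C}}\bigl(a(\mathcal{C})+\lambda\,b(\mathcal{C})\bigr)$ is the pointwise minimum of finitely many lines, so a concave, piecewise-linear function of $\lambda$, and on each linear piece a single clustering attains the minimum. A family containing an optimal clustering for every $\lambda$ need only include one representative per piece (by continuity such a representative stays optimal on the closed piece, covering the breakpoints), so it suffices to bound the number of pieces.

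Next I would exploit the monotonic structure of the envelope. Sweeping $\lambda$ upward, the slopes $b(\mathcal{C})$ of successive active lines strictly decrease, since two lines of equal slope never both appear. I then claim the intercepts $a(\mathcal{C})$ strictly \emph{increase}: at a breakpoint two active lines cross, and since the earlier line has the larger slope yet lies below just to the left of the crossing, it must have the smaller intercept. As the intercepts are strictly increasing integers in $\{0,1,\dots,|E|\}$, there are at most $|E|+1$ pieces. The final and most delicate step removes the extra unit. The only clustering with $a(\mathcal{C})=|E|$ that could lie on the envelope is all-singletons (any other clustering cutting every edge shares this intercept but has strictly larger $b$), and I would argue it is never optimal for $\lambda<1$: starting from all singletons and merging the two endpoints of any single edge changes the objective by $-1+\lambda<0$. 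Thus no piece has intercept $|E|$ on $(0,1)$, the intercepts lie in $\{0,1,\dots,|E|-1\}$, and the number of pieces, hence the size of the family, is at most $|E|$.

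The main obstacle is precisely this sharpening from $|E|+1$ to $|E|$, which hinges on strict intercept monotonicity together with the exclusion of all-singletons on the open interval; the remainder is a routine lower-envelope argument. For tightness on star graphs $K_{1,n-1}$ (where $|E|=n-1$), I would first observe that grouping two leaves together, or withholding a leaf from the center's cluster, is never beneficial, so every candidate optimal clustering places the center with some $k$ leaves in one cluster and isolates the remaining $n-1-k$ leaves. Such a clustering has objective $f_k(\lambda) = (n-1-k) + \lambda\binom{k+1}{2}$, and the difference $f_k(\lambda) - f_{k-1}(\lambda) = -1 + \lambda k$ shows that each value $k=1,2,\dots,n-1$ is uniquely optimal on the nonempty subinterval where $\tfrac{1}{k+1} < \lambda < \tfrac{1}{k}$ (with $k=n-1$ optimal on $(0,\tfrac{1}{n-1})$). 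Since these $n-1=|E|$ clusterings are pairwise distinct and each is the unique optimum somewhere in $(0,1)$, no family of fewer than $|E|$ clusterings can be optimal for all $\lambda$, establishing tightness.
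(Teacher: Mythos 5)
Your proof is correct, and its core coincides with the paper's: both arguments view $\opt(\lambda)$ as the lower envelope of the lines $a(\mathcal{C}) + \lambda\, b(\mathcal{C})$ and prove that the intercepts (the number of cut edges, the paper's $P_t$) strictly increase across breakpoints, via exactly the same crossing-point computation. Where you genuinely diverge is in how the count is capped. The paper invokes its interpolation result (Theorem~\ref{thm:lcc}): the last active clustering is an optimal cluster deletion solution, so the intercepts are integers in $\{0,1,\dots,c\}$, giving the sharper bound $c+1 \leq |E|$, where $c$ is the cluster deletion value. You instead cap the intercepts at $|E|$ and shave one unit by an elementary exclusion argument: the only envelope-relevant line with intercept $|E|$ is all-singletons, and merging the endpoints of any edge improves its objective by $\lambda - 1 < 0$, so no piece on $(0,1)$ has intercept $|E|$. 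Your route buys self-containedness --- it needs neither Theorem~\ref{thm:lcc} nor the cited concavity result for parametric ILPs, since the finite-envelope view gives concavity and piecewise linearity for free --- at the price of proving only the $|E|$ bound of the statement rather than the paper's stronger $c+1$. (Both proofs tacitly assume $E \neq \emptyset$; the edgeless graph is a degenerate counterexample to either phrasing, so this is not held against you.) Your tightness argument on stars is actually more rigorous than the paper's sketch: you verify that each of the $n-1$ candidate clusterings is the \emph{unique} optimum on a nonempty open subinterval, which is precisely what a lower bound on family size requires. One phrasing slip: ``withholding a leaf from the center's cluster is never beneficial'' is false as literally stated (for large $\lambda$, withholding leaves is exactly what the optimum does); fortunately your argument never uses that reading --- the observation that leaf-only clusters should be split into singletons already yields the claimed candidate form.
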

This theorem tells us that even though there are an exponential number of ways to cluster a graph, a linear number of clusterings is sufficient to characterize an optimal family of clusterings. Furthermore, since the theorem is specifically proven for optimal clusterings, the same result also holds for related parametric clustering objectives that correspond to linear transformations of LambdaPrime and its weighted variants, including LambdaCC~\cite{Veldt:2018:CCF:3178876.3186110}, the constant Potts model of Traag et al.~\cite{traag2011narrow}, a related Potts model of Reichardt and Bornholdt~\cite{ReichardtBornholdt2004}, and variants of modularity clustering with a resolution parameter~\cite{Arenas2008analysis,ReichardtBornholdt2006}. Finally, this theorem is equivalent to proving that the parametric integer linear program (ILP) corresponding to the LambdaPrime objective, a piecewise-linear function, has a linear number of \emph{breakpoints}, i.e., points where the slope changes. This is significant given that in general, parametric ILPs may contain an exponential number of breakpoints~\cite{Carstensen:1983:CPI:3119410.3119525}.

\paragraph*{Obtaining Approximate Solutions in All Parameter Regimes}
In practice it is NP-hard to solve LambdaPrime for even a single value of the resolution parameter. For a fixed value of $\lambda \in (0,1)$, one can obtain an approximately optimal solution by solving a linear programming relaxation for LambdaPrime and rounding it, with existing techniques~\cite{demain2006ccgen,gleich2018ccgen,Veldt:2018:CCF:3178876.3186110}.
 However, when $\lambda$ is treated as a varying parameter, the LP relaxation of LambdaPrime corresponds to a parametric linear program. In general, one may need an exponential number of feasible solutions to solve a parametric LP in all parameter regimes~\cite{Murty1980}. Despite the bound we prove on the number of breakpoints of the LambdaPrime ILP, this does not hold for the LP relaxation. We overcome this challenge by bounding the number of solutions needed to \emph{approximate} the LP in all regimes.
\begin{theoremintro}
	(Section~\ref{sec:log_lp_eval}, Theorems~\ref{thm:approx} and~\ref{cor:approxlamcc})
	For any $\varepsilon > 0$, there exists a (poly-time computable) family of $O\big(\frac{\log n}{\log (1+\varepsilon)}\big)$ feasible LP solutions which, for any value of $\lambda$, contains a $(1+\varepsilon)$-approximate solution to the LambdaPrime LP relaxation. These can be rounded to produce a family of clusterings that, for every $\lambda \in (0,1)$, contains an $O(\log n)$-approximate solution to the LambdaPrime objective. 
\end{theoremintro}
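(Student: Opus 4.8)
The plan is to work with the LP relaxation written in the Correlation-Clustering form. Writing $x_{ij}\in[0,1]$ for the (relaxed) indicator that $i,j$ are separated and imposing the triangle inequalities, the LambdaPrime LP objective splits as $A(\mathbf x)+\lambda B(\mathbf x)$, where $A(\mathbf x)=\sum_{(i,j)\in E}x_{ij}$ is the cut part and $B(\mathbf x)=\sum_{i<j}(1-x_{ij})$ is the together part; both are nonnegative, so $\mathrm{LP}(\lambda):=\min_{\mathbf x}\,A(\mathbf x)+\lambda B(\mathbf x)$ is concave and nondecreasing in $\lambda$. First I would record the fact that controls the top of the spectrum: grouping the edge terms gives $\mathrm{LP}(\lambda)\ge\lambda\abs{E}$ for every $\lambda\in(0,1)$, so the all-singletons solution (value $\abs{E}$) is a $(1+\varepsilon)$-approximation whenever $\lambda\ge 1/(1+\varepsilon)$, dispatching the high-$\lambda$ range with a single solution.

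The core covering step is a geometric-grid lemma: if $\mathbf x^{(t)}$ is optimal at $\lambda_t$, then it is a $(1+\varepsilon)$-approximate LP solution for every $\lambda\in[\lambda_t,(1+\varepsilon)\lambda_t]$. This follows from $A(\mathbf x^{(t)})+\lambda B(\mathbf x^{(t)})\le A(\mathbf x^{(t)})+(1+\varepsilon)\lambda_t B(\mathbf x^{(t)})\le(1+\varepsilon)\mathrm{LP}(\lambda_t)\le(1+\varepsilon)\mathrm{LP}(\lambda)$, using $A,B\ge 0$ and monotonicity of $\mathrm{LP}$. Thus a single LP evaluation covers a multiplicative window of width $(1+\varepsilon)$ in $\lambda$, and $O(\log_{1+\varepsilon}(\lambda_{\max}/\lambda_{\min}))$ evaluations cover any interval $[\lambda_{\min},\lambda_{\max}]$.

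The crux, and the step I expect to be the main obstacle, is showing the relevant interval has only polynomial width, i.e. producing a lower threshold $\lambda_{\min}=\Omega(1/\mathrm{poly}(n))$ below which one fixed solution is optimal. I would prove that for $\lambda\le 1/\binom{n}{2}$ the connected-components clustering $\mathbf x^{(0)}$ (which has $A=0$ and $B=B_0:=\sum_{C}\binom{\abs{C}}{2}$) is exactly optimal. The key estimate is $B_0-B(\mathbf x)\le\binom{n}{2}\,A(\mathbf x)$ for every feasible $\mathbf x$: for a same-component pair $(p,q)$ the triangle inequality bounds $x_{pq}$ by the $\mathbf x$-length of a shortest graph path between them, and swapping the order of summation charges each edge to at most $\binom{n}{2}$ such pairs. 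Plugging this in yields $A(\mathbf x)+\lambda B(\mathbf x)\ge \lambda B_0 + A(\mathbf x)\,\bigl(1-\lambda\binom{n}{2}\bigr)\ge \lambda B_0$ whenever $\lambda\le 1/\binom{n}{2}$, so $\mathbf x^{(0)}$ is optimal on $(0,1/\binom{n}{2}]$. Combining the three regimes, the family consisting of $\mathbf x^{(0)}$, grid solutions $\mathbf x^{(1)},\dots,\mathbf x^{(K)}$ at $\lambda_k=(1+\varepsilon)^k/\binom{n}{2}$ run up to $\lambda\ge 1/(1+\varepsilon)$, and all-singletons has size $K+2=O(\log n/\log(1+\varepsilon))$ and $(1+\varepsilon)$-approximates the LP at every $\lambda\in(0,1)$.

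For the clustering statement I would round each grid solution $\mathbf x^{(k)}$ with the existing region-growing rounding for (weighted) Correlation Clustering / LambdaCC, which bounds the integral LambdaPrime cost of the resulting clustering $\mathcal C^{(k)}$ at $\lambda_k$ by $O(\log n)\cdot\mathrm{LP}(\lambda_k)$. Applying the same window manipulation to the integral cost $A(\mathcal C^{(k)})+\lambda B(\mathcal C^{(k)})$ — using only nonnegativity and $\mathrm{LP}(\lambda_k)\le\mathrm{LP}(\lambda)\le\mathit{OPT}(\lambda)$ for $\lambda\ge\lambda_k$ — shows $\mathcal C^{(k)}$ is an $O(\log n)$-approximate clustering throughout its window, while the two extreme clusterings are already integral. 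The one point requiring care is that the rounding guarantee must hold per input solution, bounding the integral cost by $O(\log n)$ times the fractional objective of the rounded $\mathbf x^{(k)}$ rather than only relative to the optimal LP value; this is exactly what the region-growing analysis provides.
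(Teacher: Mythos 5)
Your proposal is correct, and its core is the same as the paper's: evaluate the LP on a geometric grid of $\lambda$ values, observe that an optimum at $\lambda_t$ stays a $(1+\varepsilon)$-approximation over a multiplicative window (your window lemma is the forward half of the paper's Lemma~\ref{lem:approx}, proved from the same three ingredients -- nonnegativity of the two parts of the objective, optimality at $\lambda_t$, and monotonicity of $\lp$), and then round each grid solution with region-growing, transferring the guarantee across each window exactly as in the proof of Theorem~\ref{cor:approxlamcc}. Where you genuinely differ is at the two ends of the parameter range, and your treatment there is a real improvement. The paper anchors its grid at $\lambda_1 = 4/n^2$, justified by the sparsest-cut threshold of Theorem~\ref{thm:lcc} -- a statement about the \emph{integral} optimum -- and simply asserts that LP solutions below this value need not be considered; it also spends an extra LP solve at $\lambda = 1/(1+\varepsilon)$ to cover the top. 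You instead prove a self-contained LP fact at the bottom: the connected-components solution is \emph{exactly} LP-optimal for all $\lambda \le 1/\binom{n}{2}$, via the path-charging bound $B_0 - B(\mathbf{x}) \le \binom{n}{2}\,A(\mathbf{x})$ obtained from the triangle inequalities; and you cover the top with the integral all-singletons solution using $\lp(\lambda) \ge \lambda |E|$. This makes the ``for any value of $\lambda$'' claim literally true for the LP relaxation (the paper's Theorem~\ref{thm:approx} formally covers only $(4/n^2, 1)$, with the extension below $4/n^2$ resting on an ILP property), and it also handles disconnected graphs, where the paper's bound $\lambda^* \ge 4/n^2$ presumes connectivity. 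The only cost is cosmetic: your windows extend only forward from each grid point, so your spacing is $(1+\varepsilon)$ rather than the paper's two-sided $(1+\varepsilon)^2$, roughly doubling the family size while leaving it $O\big(\frac{\log n}{\log(1+\varepsilon)}\big)$.
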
 
We note that as $\varepsilon \rightarrow 0$, the number of clusterings in our computed family behaves as
$O(\frac{1}{\varepsilon} \log n)$. However, given that our aim is simply to round LambdaPrime LP solutions to produce
clusterings that are within $O(\log n)$ of optimal, it suffices to treat~$\varepsilon$ as a constant (e.g., $\varepsilon
= 1$ or larger). Thus, in practice, the size of the approximating family is~$O(\log n)$.

Our ability to approximate the LP relaxation of a clustering objective in all parameter regimes is useful even when LP solutions are not rounded to produce approximate clusterings. Solutions to the LP provide lower bounds for evaluating the performance of heuristic clustering techniques, and can also be useful for learning how to set graph clustering resolution parameters in practice~\cite{Veldt2019learning}. 

\paragraph*{Asymptotic Tightness of Results}
One of the central contributions of our work is a proof that our logarithmic upper bound for approximating the LP relaxation in all parameter regimes is in fact asymptotically tight.
\begin{theoremintro}
	\label{thm:logn2}
	(Section~\ref{sec:rings}, Theorem~\ref{thm:logn}.)
	For the class of ring graphs with $n = 2^k$ nodes ($k \in \mathbb{N}$), for every $\varepsilon > 0$, at least $\Omega(\log n)$ feasible LP solutions are needed in order to approximate the LP relaxation of LambdaPrime in all parameter regimes to within a factor $(1+\varepsilon)$.
\end{theoremintro}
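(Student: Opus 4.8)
The plan is to reduce the approximation question to a one-dimensional covering problem governed by the concavity of the optimal LP value. First I would rewrite the LambdaPrime LP relaxation in pairwise form: with a variable $x_{ij}\in[0,1]$ for every pair and the usual triangle-inequality and box constraints (none of which depend on $\lambda$), the objective becomes $\sum_{(i,j)\in E} x_{ij} + \lambda\sum_{\{i,j\}} (1-x_{ij})$. For a fixed feasible $\mathbf{x}$ this is an affine function $A(\mathbf{x})+\lambda\,B(\mathbf{x})$ of $\lambda$ with $A,B\ge 0$, i.e.\ a line of nonnegative slope and intercept, and the same $\mathbf{x}$ is feasible for every $\lambda$. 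Consequently $\OPT(\lambda)=\min_{\mathbf{x}}\bigl(A(\mathbf{x})+\lambda B(\mathbf{x})\bigr)$ is a concave, nondecreasing function of $\lambda$, and a family $\mathcal{F}$ of feasible solutions $(1+\varepsilon)$-approximates the LP in all regimes precisely when the lower envelope of the $|\mathcal{F}|$ lines it contributes lies within a factor $(1+\varepsilon)$ of $\OPT(\lambda)$ for every $\lambda$; each such line automatically lies on or above $\OPT$.

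Next I would pin down $\OPT(\lambda)$ for the ring $C_n$. Using the cyclic symmetry, any feasible solution can be averaged over the rotation group to a circulant one $x_{ij}=\phi(d(i,j))$ depending only on the ring distance $d(i,j)$, without increasing the (symmetric, linear) objective or violating the (convex) constraints; it therefore suffices to analyze circulant solutions, which reduces the LP to a one-dimensional problem in $\phi(1),\dots,\phi(n/2)$. For the upper bound, the truncated-linear profile $\phi(d)=\min(d/s,1)$ with $s\approx\sqrt{1/\lambda}$---the fractional analogue of cutting the ring into arcs of length $s$, which is clean because $n=2^k$---is feasible and has value $O(n\sqrt{\lambda})$. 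The lower bound is the crux of the proof: invoking the triangle inequality in the form $\phi(d)\le d\,\phi(1)$, I would bound the within-cluster savings $\sum_d(\text{count at }d)(1-\phi(d))$ from below by truncating the sum at $d\approx 1/\phi(1)$, and then minimize $n\,\phi(1)+\Theta(\lambda n/\phi(1))$ over $\phi(1)$ to obtain $\OPT(\lambda)\ge c_1 n\sqrt{\lambda}$. A short case split (on whether $\phi(1)$ falls below $\Theta(1/n)$, where the antipodal pair term alone already dominates) extends this across the whole range $\lambda\in[\,\Theta(1/n^2),\,\Theta(1)\,]$. This yields the sandwich $\OPT(\lambda)=\Theta(n\sqrt{\lambda})$ over a $\lambda$-interval of multiplicative width $\Theta(n^2)$.

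With the sandwich in hand, the counting is a clean convexity argument. For a single feasible line $\ell(\lambda)=a+b\lambda$, substitute $u=\sqrt{\lambda}$ and consider $\psi(u)=\ell(\lambda)/(n\sqrt{\lambda})=\alpha/u+\beta u$, which is convex and unimodal in $u$. Feasibility ($\ell\ge\OPT\ge c_1 n\sqrt{\lambda}$) forces $\min_u\psi=2\sqrt{\alpha\beta}\ge c_1$, while the set of $\lambda$ on which $\ell$ is a $(1+\varepsilon)$-approximation is exactly $\{u:\psi(u)\le (1+\varepsilon)c_2\}$, an interval $[u_-,u_+]$. Writing the endpoint relations $u_+u_-=\alpha/\beta$ and $u_++u_-=(1+\varepsilon)c_2/\beta$ gives $u_+/u_-+u_-/u_+=(1+\varepsilon)^2 c_2^2/(\alpha\beta)-2\le 4(1+\varepsilon)^2 c_2^2/c_1^2-2$, a constant $M(\varepsilon)$; hence the covered $\lambda$-window $\lambda_+/\lambda_-=(u_+/u_-)^2$ has multiplicative width at most some constant $W(\varepsilon)$ (the degenerate case where the minimizer of $\psi$ leaves the range is handled by the same two bounds and only improves the estimate). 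Each feasible solution therefore approximates $\OPT$ on only a bounded-width window, so covering the width-$\Theta(n^2)$ range requires at least $\log_{W(\varepsilon)}\Theta(n^2)=\Omega(\log n)$ solutions, which is the claimed bound.

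I expect the LP lower bound to be the principal obstacle: because the relaxation may use fractional $x_{ij}$, one must rule out that fractional ``smearing'' drives the objective substantially below the integral value $\Theta(n\sqrt{\lambda})$, and the circulant reduction together with the inequality $\phi(d)\le d\,\phi(1)$ is what makes this tractable. The boundary cases---antipodal pairs at distance $n/2$, and profiles with extremely small $\phi(1)$ near the low-$\lambda$ end of the range---require minor separate handling but do not affect the asymptotics. By contrast, once $\OPT(\lambda)=\Theta(n\sqrt{\lambda})$ is established, the reduction to lines and the convex-ratio covering bound are routine.
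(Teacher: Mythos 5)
Your proposal is correct, and it follows the same two-step strategy as the paper -- sandwich the ring LP value by $\Theta(n\sqrt{\lambda})$, then argue that any single feasible solution (a line $a+b\lambda$) can only $(1+\varepsilon)$-approximate a $\sqrt{\lambda}$-like curve on a $\lambda$-window of bounded multiplicative width, so that $\Omega(\log n)$ lines are needed to cover a range of multiplicative width $\Theta(n^2)$ -- but you execute both steps differently. For the sandwich, the paper first proves an exact characterization of the ring LP (Theorem~\ref{thm:ringlp}: all positive-edge distances equal a common value $c$ with $1/c$ an integer, so $\lp(\lambda)=\min_{t\in\mathbb{N}}\frac{n}{t}\bigl(1+\lambda\binom{t}{2}\bigr)$) and then derives explicit constants ($g \le \lp \le \opt \le \sqrt{2}\,g$, hence a $\frac{4\sqrt{2}}{3}$-sandwich around $\frac{3n}{4}\sqrt{2\lambda}$); you keep only the symmetrization step and replace the rest by a direct two-sided estimate -- the truncated-linear profile for the upper bound and the chained triangle inequality $\phi(d)\le d\,\phi(1)$ with truncation for the lower bound -- which avoids the delicate integrality-of-$1/c$ argument entirely, at the cost of unspecified constants. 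For the counting step, the paper transfers the problem to the square-root function (Lemma~\ref{lemma:hardtransfer}) and invokes the Magnanti--Stratila lower bound (Theorem~\ref{thm:magnanti}) as a black box, whereas your quadratic-roots computation ($u_+u_-=\alpha/\beta$, $u_++u_-=(1+\varepsilon)c_2/\beta$, hence $u_+/u_-$ bounded by a constant) is a self-contained re-derivation of exactly the piece of that theorem you need. What the paper's route buys is an exact LP characterization of independent interest and an explicit count $\lceil\frac{2}{3}\log_{\gamma(pM)}\frac{n}{4}\rceil$; what yours buys is a shorter, fully elementary proof that generalizes immediately to any objective sandwiched between $c_1 n\sqrt{\lambda}$ and $c_2 n\sqrt{\lambda}$. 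One small caution: your remark that the case where the minimizer of $\psi(u)=\alpha/u+\beta u$ falls outside the valid $\lambda$-range ``only improves the estimate'' is too quick as stated -- when the minimizer lies left of the range you can no longer conclude $2\sqrt{\alpha\beta}\ge c_1$, and you instead need a short case split (on whether $\beta u_{\min}$ or $\alpha/u_{\min}$ contributes at least $c_1/2$ to $\psi(u_{\min})$, using $\alpha<\beta u_{\min}^2$ in the latter case) to recover a constant window-width bound; this works and yields the same conclusion, but it is a genuine case analysis rather than a degenerate non-issue.
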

The proof of this result relies on several connections between our parametric clustering problem and concave function approximation~\cite{magnati2004ipco}. The optimal solution curve for the LambdaPrime LP relaxation corresponds to an increasing, concave, and piecewise-linear function in terms of $\lambda$~\cite{Carstensen:1983:CPI:3119410.3119525}. Approximating the LP relaxation in all parameter regimes with a small number of feasible solutions is therefore equivalent to finding another piecewise-linear curve with a small number of linear pieces. Previously, Magnanti and Stratila~\cite{magnanti2012separable} demonstrated that in order to approximate the square root function $\mathit{sqrt}(x) = \sqrt{x}$ via a piecewise-linear upper bound over the interval $[a,b]$, at least $\Omega(\log \frac{b}{a})$ linear pieces are needed. Although this bound does not immediately imply any result for parametric clustering, we use this as a step in proving a similar lower bound for the LambdaPrime LP relaxation.

To show our full result, we prove several special properties satisfied by the LambdaPrime LP relaxation on ring graphs. We then prove a sequence of upper and lower bounds on the LP relaxation values, ultimately bounding it above and below in terms of the square root function. Finally, we demonstrate that these bounds, in conjunction with the results of Magnanti and Stratila~\cite{magnanti2012separable}, imply that $\Omega(\log n)$ feasible LP solutions are needed to approximate the LP relaxation in all parameter regimes. The bounds we show for the LP relaxation are loose enough that our lower bound is $\Omega(\log n)$ independent of the size of $\varepsilon$. Thus, as $\varepsilon \rightarrow 0$, the size of our approximate solution family behaves as $O(\frac{1}{\varepsilon} \log n)$, whereas our lower bound is still $\Omega(\log n)$. Nevertheless, if
we treat~$\varepsilon$ as a constant, which is the most natural choice for our setting, our upper and lower bounds
are asymptotically tight.

\paragraph*{Overcoming the Logarithmic Barrier}
 Although $\Omega(\log n)$ feasible solutions are required to approximate the LP relaxation in all parameter regimes in
the worst case, this is not the case for all graphs. We demonstrate that on star graphs, a single LP solution is
sufficient to optimize the LP relaxation in all non-trivial parameter regimes. Motivated by this observation, we develop
a new approach for more carefully selecting values of $\lambda$ at which to solve the LP relaxation. We adapt previous
techniques on sensitivity analysis in parametric linear programming~\cite{jansen1997sensitivity,nowozin2009solution} to
develop an approach for computing the exact range of $\lambda$ values for which a feasible LP solution is
\emph{approximately} optimal for the LambdaPrime LP. We hence find a family of LP solutions whose size is bounded in terms of the \emph{minimum} number of solutions needed to approximate the LP in all parameter regimes.
Thus, in certain cases this approach could release us from computing the LP relaxation at a logarithmic number of~$\lambda$
values.
Furthermore, by retroactively refining the family of solutions we obtain using our technique, we can prove the following result:
 \begin{theoremintro}
 (Section~\ref{sec:fe}, Theorem~\ref{thm:2approx}). 
 For $\varepsilon > 0$, let $M_\varepsilon$ be the minimum number of LP solutions needed to approximate the LambdaPrime LP relaxation in all parameter regimes to within a factor~$(1+\varepsilon)$. We obtain a family of $2 M_\varepsilon$ or fewer LP solutions that contains a $(1+\varepsilon)$-approximate solution to the LP in every parameter regime. 
 \end{theoremintro}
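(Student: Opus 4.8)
The plan is to recast the statement as a one-dimensional covering problem and then give a greedy-with-refinement algorithm that is $2$-competitive against the optimal cover. First I would exploit the structure of the parametric LP. For a fixed feasible solution $\mathbf{x}$ the LP objective is an affine function of $\lambda$, say $\ell_{\mathbf{x}}(\lambda)$, and since the LP-optimal value $g(\lambda)$ is the pointwise minimum over all feasible solutions, $g$ is concave, increasing, and piecewise linear, while every $\ell_{\mathbf{x}}$ is a line lying weakly above $g$. A family of solutions $(1+\varepsilon)$-approximates the LP in all regimes exactly when the lower envelope of its lines stays below $(1+\varepsilon)\,g(\lambda)$ for all $\lambda$. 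Because pushing any line down until it supports $g$ only lowers the envelope while remaining above $g$ (concavity), and because the LP feasible region is convex so that every supporting line of $g$ is realized by some feasible solution, I may assume without loss of generality that both the optimal family and the family I construct consist of supporting (``tangent'') lines of $g$. This reduces the task to the following: to each point $\mu$ associate its validity interval $I(\mu) = \{\lambda : \ell_{\mathbf{x}(\mu)}(\lambda) \le (1+\varepsilon)\, g(\lambda)\}$, where $\mathbf{x}(\mu)$ is LP-optimal at $\mu$; then $M_\varepsilon$ is the minimum number of such intervals needed to cover the parameter range, and I must produce a cover of size at most $2 M_\varepsilon$.

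The second step is to show that each $I(\mu)$ is genuinely an interval and that its endpoints are exactly computable. Since $\ell_{\mathbf{x}(\mu)}$ is affine and $g$ is concave, the map $\lambda \mapsto \ell_{\mathbf{x}(\mu)}(\lambda) - (1+\varepsilon)\, g(\lambda)$ is convex, so its $0$-sublevel set $I(\mu)$ is an interval $[a(\mu), b(\mu)]$ containing $\mu$. Its endpoints are the two values of $\lambda$ at which the supporting line meets the curve $(1+\varepsilon)\, g$; I would obtain them by adapting the cited parametric-LP sensitivity analysis, which tracks the optimal basis as $\lambda$ varies and thereby returns the exact range over which a computed solution remains $(1+\varepsilon)$-optimal. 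I would also record the monotonicity facts that $a(\cdot)$ and $b(\cdot)$ are nondecreasing, which follow from the ordering of supporting slopes of the concave curve $g$.

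With the validity-interval primitive in hand, the algorithm sweeps the parameter range left to right. From the current frontier $f$ it solves the LP at $f$, obtaining a solution with right reach $b(f)$; it then re-anchors by solving again near $b(f)$, retaining whichever (one or two) of these solutions is needed to cover contiguously from $f$ to the new frontier. The ``retroactive refinement'' is precisely this re-anchoring: rather than committing to the frontier solution (which is anchored at $f$ and may waste its leftward reach), I push the anchor rightward as far as it can go while still covering $f$, so that each advance of the frontier subsumes a full step of the optimal best-reach cover. The competitive analysis then charges each produced solution to the optimal validity interval containing the point where that solution becomes tight, and argues, via the supporting-line geometry and the monotonicity of $a(\cdot), b(\cdot)$, that no optimal interval is charged more than twice, yielding the bound $2 M_\varepsilon$.

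I expect the competitive analysis to be the main obstacle. The difficulty is that the naive ``solve at the frontier'' greedy is \emph{not} $2$-competitive: for curves that behave like $x^{1/3}$ rather than $\sqrt{x}$ the validity intervals are asymmetric about their anchor, the product of the left and right reach ratios drops below one, and a frontier-anchored sweep can overshoot $2 M_\varepsilon$ by a small factor. Making the charging argument airtight therefore requires the re-anchoring step together with a careful two-sided accounting: I would show that between the point where one anchor of my family becomes tight on its right and the point where the next does, at most one boundary of the optimal cover can be skipped, so that the map from tight points to optimal intervals is at most two-to-one. Handling the breakpoints of the piecewise-linear curve $g$ --- where supporting slopes are set-valued and ties must be broken consistently --- and verifying that the sensitivity analysis returns the interval endpoints exactly at such breakpoints are the remaining technical hurdles.
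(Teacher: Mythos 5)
Your reduction to one-dimensional interval covering, the convexity argument that each validity set is an interval, and the use of sensitivity analysis to compute $[a(\mu),b(\mu)]$ all match the paper's framework (its \textbf{ORLP} primitive and its notion of an $\eps$-cover). But your algorithm has a genuine gap at its central step: the ``re-anchoring.'' The sensitivity-analysis primitive you invoke computes the \emph{forward} map --- given an anchor $\mu$, it returns the validity interval $[a(\mu),b(\mu)]$ of the solution computed at $\mu$. Your sweep instead needs the \emph{inverse} map: given the frontier $f$, find the largest anchor $\mu$ with $a(\mu)\le f$. That is a different computational problem; written out, it asks for the largest $\mu$ admitting a solution $\mbx$ that is simultaneously LP-optimal at $\mu$ and a $(1+\eps)$-approximation at $f$, and the certificate of optimality at $\mu$ (strong duality or complementary slackness) contains products of the variable $\mu$ with the variable $\mbx$, so it is bilinear, not an LP, and is not delivered by \textbf{ORLP}. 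Your fallback phrasing, ``solving again near $b(f)$,'' does not repair this: the solution anchored at $b(f)$ need not cover $f$ at all, precisely because validity intervals are asymmetric about their anchor --- the very phenomenon you point out with the $x^{1/3}$ example. (For the multiplicatively symmetric $\sqrt{\cdot}$ curve, anchoring at $b(f)$ happens to work, but nothing guarantees the LambdaPrime LP curve behaves that way on a general instance.) Without an implementable re-anchoring step, your algorithm collapses to the frontier-anchored sweep, which you yourself note is not $2$-competitive; and the charging argument you defer as ``the main obstacle'' is exactly the part that would have to carry the proof.

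The paper avoids the inverse map altogether, and this is the real difference. It first runs the frontier-anchored sweep (FE), which is allowed to produce a cover much larger than $M_\varepsilon$ (only $O(\log_{1+\eps} n)$ is guaranteed); contiguity of the FE cover is ensured not by re-anchoring but by Lemma~\ref{lem:approx}: the next anchor is placed at $(1+\eps)\lambda_+$, so that the next optimal solution automatically covers the gap leftward down to $\lambda_+$. The factor $2$ then comes from a separate structural lemma (Lemma~\ref{le:it_takes_two}): by concavity of $\lp(\lambda)$, any solution of an optimal cover that FE skips has its $\eps$-approximate range contained in the union of the ranges of the two adjacent FE solutions; hence the FE cover is guaranteed to \emph{contain} a sub-cover of size at most $2M_\varepsilon$. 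Finally, FEBE computes both endpoints $[a_i,b_i]$ of every FE solution (at the cost of doubling the number of \textbf{ORLP} calls) and extracts the minimum sub-cover by the classic greedy over this now-\emph{finite} family of intervals; since that greedy is exactly optimal among subfamilies of the FE cover, the output has size at most $2M_\varepsilon$. If you want to salvage your one-pass design, you would need either (i) an exact algorithm for the inverse map --- which, note, would yield $M_\varepsilon$ outright by the standard greedy-covering argument, not merely $2M_\varepsilon$ --- or (ii) the paper's decoupling of cover \emph{generation} from cover \emph{extraction}, which is what makes the factor $2$ provable using only the forward primitive.
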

 We note that the above bound applies to the \emph{size} of the family of LP solutions we obtain using our techniques.
In practice, however, we will need to evaluate the LP relaxation more than $2M_\varepsilon$ times to actually obtain
this family. We also prove several results bounding the number of times we need to solve an LP, in terms of values which
depend on a minimum-size solution family. In the worst case, we may still need to evaluate the LP $O(\log n)$ times.
Nevertheless, this result shows that, without changing our worst-case asymptotic runtime, we can find a family of LP solutions that is nearly optimal in terms of output size. These can then be rounded to obtain a provably small family of clusterings that captures the full clustering structure of a given input graph.

\section{The LambdaPrime Objective}
LambdaPrime is an objective function for clustering graphs based on a tunable resolution parameter $\lambda \in (0,1)$. The objective seeks to minimize the number of edges crossing between clusters, subject to a regularization term that controls cluster size. Formally, for a graph $G = (V,E)$, resolution parameter $\lambda$, and a clustering $\mathcal{C}$, the LambdaPrime score for $\mathcal{C}$ is
\begin{equation}
\label{eq:lcc2}
\lcc(\mathcal{C},\lambda) = \sum_{S \in \mathcal{C}} \left( \frac{1}{2}\, \cut(S) + \lambda {|S| \choose 2} \right),
\end{equation}
where $S \in \mathcal{C}$ is used to denote an individual cluster in $\mathcal{C}$ and $\cut(S)$ is the number of edges that are incident on exactly one node in $S$. The LambdaPrime objective can also be expressed formally as an integer linear program (ILP):
\begin{equation}
\label{eq:lccilp}
\begin{array}{lll} \text{minimize }  \opt(\lambda) &= \sum_{(i,j) \in E} x_{ij} +\, &\sum_{i< j}\lambda (1-x_{ij}) \\ \subjectto  & x_{ij} \leq x_{ik} + x_{jk} & \text{ for all $i,j,k$} \\ & x_{ij} \in \{0,1\} & \text{ for all $i,j$.} \end{array}
\end{equation}
The variable $x_{ij}$ represents the binary \emph{distance} between nodes in a clustering. If $x_{ij} = 1$, nodes $i$ and $j$ are in different clusters, whereas $x_{ij} = 0$ indicates they are clustered together. In this way, clusterings of a graph are in one-to-one correspondence with feasible solutions to the ILP. Note that the number of clusters to form is not determined ahead of time, but is implicitly controlled by the parameter $\lambda$. 

\subsection{Correlation Clustering and LambdaCC}
\label{lprime-lcc}
LambdaPrime is equivalent to a special weighted instance of Correlation Clustering~\cite{Bansal2004correlation}. The latter problem clusters a dataset based on pairwise similarity and dissimilarity scores between data objects. These scores are typically modeled as signed and weighted edges between nodes in a graph. An instance of LambdaPrime corresponds specifically to a signed graph in which \emph{some} node pairs share a positive edge with weight one, and \emph{all} node pairs share a negative edge with weight $\lambda$. Placing a pair of nodes in the same cluster results in a penalty of $\lambda$, while separating nodes that have a positive edge results in a penalty of one. 

LambdaPrime is best viewed as a slight variant of a previously introduced framework for graph clustering called Lambda Correlation Clustering (LambdaCC)~\cite{Veldt:2018:CCF:3178876.3186110}. In the LambdaCC framework, an input graph $G = (V,E)$ is converted into a complete signed graph in which positive edges (corresponding to edges in $G$) have weight $(1-\lambda)$, and negative edges (corresponding to non-edges in $G$) have weight $\lambda$. Thus, each pair of nodes participates in a strictly positive or a strictly negative relationship. For any given $\lambda \in (0,1)$, LambdaCC and LambdaPrime share the same set of optimal solutions. However, these objectives differ by an additive term $\lambda |E|$, and thus differ in terms of approximations. Another difference is that the LambdaPrime solution value is strictly increasing with $\lambda$, though this is not the case for LambdaCC.

Throughout the manuscript, we focus on proving results for the LambdaPrime objective, given in framework~\eqref{eq:lcc}. This enables the clearest exposition of our techniques and results for parametric graph clustering, without changing the fundamental nature of the results. In Appendix~\ref{app:lamcc}, we discuss how to adapt our techniques to obtain nearly identical theorems and techniques for the LambdaCC objective. Though some results differ in terms of constant factors, for both objectives we can obtain a family of logarithmically many clusterings that, for every $\lambda \in (0,1)$, contains an approximation solution. The proof of asymptotic tightness given in Section~\ref{sec:rings} also holds up to differences in constant factors. In the Appendix we also discuss how our techniques can be applied to node-weighted variants of the LambdaPrime.

\paragraph*{Relation to other clustering objectives}
Despite its simplicity, LambdaPrime generalizes and interpolates between a number of previously studied objectives for graph clustering. In our previous work~\cite{Veldt:2018:CCF:3178876.3186110}, we showed that the optimal solutions of LambdaCC (and thus the optimal solutions of LambdaPrime) interpolate between solutions to the sparsest cut objective and the cluster deletion problem. The scaled sparsest cut of a graph $G = (V,E)$ is the bipartition $\{S, \bar{S} \}$ which solves the following objective:
\begin{equation}
\label{ssc}
\min_{S \subset V} \,\, \frac{\cut(S)}{|S| | \bar{S}|},
\end{equation}
where $\bar{S} = V \backslash S$ is the complement of a set $S \subset V$ and $\cut(S)$ equals the number of edges crossing from $S$ to $\bar{S}$. Cluster deletion is the problem of partitioning $G$ into cliques in a way that minimizes the number of edges between cliques. We formalize the relationship between LambdaPrime and these two objectives with a theorem, originally proven for the LambdaCC framework~\cite{Veldt:2018:CCF:3178876.3186110}.
\begin{theorem}
	\label{thm:lcc}
	Let $G = (V,E)$ be a graph, and define $\lambda^* = \min_{S \subset V} \cut(S)/ (|S| |\bar{S}|)$.
	\begin{itemize}
		\item For any $\lambda \leq \lambda^*$, placing all nodes in one cluster optimizes the LambdaPrime objective. 
		\item For any $\lambda \in (\lambda^*,1)$, the optimal LambdaPrime clustering will contain at least two clusters. There exists some $\lambda > \lambda^*$ such that the optimal LambdaPrime clustering will be the bipartition $\{S^*, \bar{S}^*\}$ which optimizes objective~\eqref{ssc}.
		\item For any $\lambda \in (|E|/ (1+ |E|),1)$, the optimal LambdaPrime clustering will be optimal for the cluster deletion problem. In other words, all clusters will be cliques, and the number of edges crossing between clusters will be minimized.
	\end{itemize}
\end{theorem}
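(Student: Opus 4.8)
The plan is to rewrite the LambdaPrime score in a form that isolates the comparison against one or two reference clusterings, and then treat the three regimes separately. For a clustering $\mathcal{C}=\{S_1,\dots,S_k\}$, grouping the ``cut'' and ``pair'' penalties gives $\lcc(\mathcal{C},\lambda)-\lcc(\{V\},\lambda)=C(\mathcal{C})-\lambda\,P(\mathcal{C})$, where $C(\mathcal{C})=\sum_{i<j}e(S_i,S_j)$ is the number of cut edges and $P(\mathcal{C})=\sum_{i<j}|S_i||S_j|$ is the number of cross-cluster pairs (here I use $\lcc(\{V\},\lambda)=\lambda\binom{n}{2}$ and $P(\{V\})=0$). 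The first claim then reduces to showing $C(\mathcal{C})\ge\lambda^*P(\mathcal{C})$ for every $\mathcal{C}$: summing the sparsest-cut inequality $\cut(S_i)\ge\lambda^*|S_i||\bar S_i|$ over all clusters and using that each cut edge and each cross pair is counted exactly twice yields $2\,C(\mathcal{C})\ge 2\lambda^*P(\mathcal{C})$. Hence for $\lambda\le\lambda^*$ we get $\lcc(\mathcal{C},\lambda)-\lcc(\{V\},\lambda)\ge(\lambda^*-\lambda)P(\mathcal{C})\ge0$, so $\{V\}$ is optimal.

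For the second regime, the non-triviality half is immediate from the same identity: the sparsest-cut bipartition satisfies $\lcc(\{S^*,\bar S^*\},\lambda)-\lcc(\{V\},\lambda)=\cut(S^*)-\lambda|S^*||\bar S^*|$, which is strictly negative exactly when $\lambda>\lambda^*$, so $\{V\}$ cannot be optimal and at least two clusters are forced. To produce an actual value of $\lambda$ at which $\{S^*,\bar S^*\}$ is globally optimal, I would exploit that $\opt(\lambda)$ is the lower envelope of the finitely many lines $\lambda\mapsto\lcc(\mathcal{C},\lambda)$, hence concave and piecewise linear, with $\lambda^*$ as its first breakpoint. At $\lambda=\lambda^*$ the quantity $C(\mathcal{C})-\lambda^*P(\mathcal{C})$ attains its minimum value $0$, realized by $\{V\}$ and by the sparsest-cut bipartitions; just above $\lambda^*$ the score decreases fastest for whichever minimizer has the largest $P(\mathcal{C})$, and finiteness of the set of clusterings guarantees a genuine interval $(\lambda^*,\lambda^*+\delta)$ on which a sparsest-cut bipartition strictly dominates.

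For the third regime I would rewrite the score once more in terms of internal structure. Writing $I(\mathcal{C})$ for the number of edges inside clusters and $N(\mathcal{C})$ for the number of non-edges inside clusters, one checks $\lcc(\mathcal{C},\lambda)=|E|-(1-\lambda)I(\mathcal{C})+\lambda N(\mathcal{C})$. If $\mathcal{C}$ has any internal non-edge, then $(1-\lambda)I(\mathcal{C})-\lambda N(\mathcal{C})\le(1-\lambda)|E|-\lambda$, which is negative precisely when $\lambda>|E|/(1+|E|)$; in that case $\lcc(\mathcal{C},\lambda)>|E|$, strictly worse than the all-singletons clustering, so every optimum must have $N(\mathcal{C})=0$, i.e.\ be a partition into cliques. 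Among clique partitions the score collapses to $|E|-(1-\lambda)I(\mathcal{C})$, which is minimized by maximizing internal edges, equivalently by minimizing the edges cut between cliques — exactly the cluster-deletion objective.

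I expect the two counting identities and the telescoping inequality to be routine; the main obstacle is the second half of the middle claim. The difficulty is degeneracy: when several sets are simultaneously sparsest cuts, a multi-cluster partition whose every part is a sparsest-cut set also attains $C(\mathcal{C})-\lambda^*P(\mathcal{C})=0$ and may carry strictly more cross-cluster pairs than any single bipartition, so it could dominate immediately above $\lambda^*$. I would address this by arguing that among all minimizers at $\lambda^*$ one can always extract a two-cluster solution witnessing a sparsest cut, and by assuming a unique sparsest cut without loss of generality (via tie-breaking or an infinitesimal perturbation of edge weights), after which the pointwise comparison at $\lambda^*$ upgrades to the desired interval statement.
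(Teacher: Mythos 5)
Your counting identities are correct, and they settle most of the theorem: the telescoping bound $2C(\mathcal{C})=\sum_i \cut(S_i)\ge\lambda^*\sum_i|S_i||\bar{S}_i|=2\lambda^*P(\mathcal{C})$ proves the first bullet, the comparison of a sparsest-cut bipartition against $\{V\}$ proves the first claim of the second bullet, and the rewriting $\lcc(\mathcal{C},\lambda)=|E|-(1-\lambda)I(\mathcal{C})+\lambda N(\mathcal{C})$ proves the third bullet. Note that the paper does not prove this theorem in-house; it imports it from the LambdaCC paper~\cite{Veldt:2018:CCF:3178876.3186110}, and your three arguments are the standard ones for these parts.

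The remaining claim is where the genuine gap lies, and your proposed repair does not close it. Assuming a unique sparsest cut is \emph{not} without loss of generality: the theorem asserts existence of such a $\lambda$ for every graph, LambdaPrime is an unweighted objective so an infinitesimal edge-weight perturbation changes the problem, and optima of the perturbed instance need not be optima of the original. In fact, the degeneracy you flagged is fatal to the statement as written, not merely to your proof strategy. Take three $k$-cliques $A,B,C$ with $k\ge4$, joined pairwise by a single edge. Then $\lambda^*=1/k^2$, and the only sparsest-cut bipartitions are $\{A,B\cup C\}$, $\{B,A\cup C\}$, $\{C,A\cup B\}$, each with $C(\mathcal{B})=2$ and $P(\mathcal{B})=2k^2$, whereas the tripartition $\mathcal{T}=\{A,B,C\}$ has $C(\mathcal{T})=3$ and $P(\mathcal{T})=3k^2$, so that
\[
\bigl(C(\mathcal{T})-\lambda P(\mathcal{T})\bigr)-\bigl(C(\mathcal{B})-\lambda P(\mathcal{B})\bigr)=1-k^2\lambda<0 \quad\text{for every }\lambda>\lambda^* .
\]
Thus no sparsest-cut bipartition is optimal at \emph{any} $\lambda>\lambda^*$, and the existence claim in the middle bullet fails on this graph. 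What your lower-envelope argument actually establishes is the correct general statement: for $\lambda$ in a small interval above $\lambda^*$, the optimum is the clustering maximizing $P(\mathcal{C})$ among clusterings all of whose clusters attain the ratio $\lambda^*$ (in particular, a sparsest cut can always be read off from such an optimum, since each of its clusters together with its complement is one). The bipartition form of the claim requires an explicit non-degeneracy hypothesis --- for example, that a sparsest-cut bipartition is the unique such maximizer --- stated as an assumption on the graph, not invoked as a WLOG.
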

Note that for any connected graph, $\lambda^* \geq 4/n^2$. This is tight for any graph that can be partitioned into two equal sized sets, with only a single edge crossing between the partitions. Thus, when searching for clusterings that optimize LambdaPrime, it suffices to consider $\lambda \in \left( \frac{4}{n^2}, 1 \right)$. 

 LambdaPrime is also related to other generalized clustering objectives that rely on tunable resolution parameters, including the constant Potts model~\cite{traag2011narrow}, and generalizations of the modularity objective that includes a resolution parameter~\cite{Arenas2008analysis,ReichardtBornholdt2006}. For an appropriate choice of parameter settings, these objectives are equivalent at optimality, though they differ in terms of approximations. 

\subsection{Approximations via Linear Programming}
Since LambdaPrime corresponds to a specially weighted variant of Correlation Clustering, we can obtain an $O(\log n)$ approximation guarantee for the objective for any value of the parameter~$\lambda$~\cite{demain2006ccgen}. This is obtained by solving and rounding the following LP relaxation:
\begin{equation}
\label{eq:lcclp}
\begin{array}{llll} \text{minimize } &  \lp(\lambda) = &\sum_{(i,j) \in E} x_{ij} \hspace{.2cm}+ &\sum_{i< j}\lambda (1-x_{ij}) \\ \subjectto  & & x_{ij} \leq x_{ik} + x_{jk} & \text{ for all $i,j,k$} \\ & & 0 \leq x_{ij} \leq 1 & \text{ for all $i,j$.} \end{array}
\end{equation}
Although better approximation guarantees exist for certain large values of~$\lambda$, in the worst case, the LP relaxation has an $\Omega(\log n)$ integrality gap, which can be shown by slightly adapting the integrality gap proof for LambdaCC~\cite{gleich2018ccgen}. In this paper, our goal is not to obtain new approximation guarantees for fixed values of $\lambda$. Instead, we show how to obtain approximately optimal solutions for \emph{all} values of the parameter using a small number of LP solves. When rounding LP solutions to produce approximately optimal clusterings, we default to considering the worst-case $O(\log n)$ rounding scheme, noting that in some parameter regimes, better guarantees are possible. 

\subsection{LambdaPrime and Parametric Programming}
If we do not treat $\lambda$ as a fixed value, the objective in problem~\eqref{eq:lccilp} corresponds to a parametric integer linear program in $\lambda$. We use $\opt(\lambda)$ to denote the optimal ILP score at a certain value of $\lambda$: this function $\opt$ is known to be be concave and piecewise-linear in $\lambda$~\cite{Carstensen:1983:CPI:3119410.3119525}. The \emph{breakpoints} of a parametric ILP are values of the parameter $\lambda$ at which a slope change occurs. In this context, a slope change corresponds to a parameter $\lambda$ at which the optimal clustering for LambdaPrime changes. Similarly,~\eqref{eq:lcclp} is a parametric linear program, whose solution we denote by $\lp(\lambda)$, and is also concave and piecewise linear in terms of $\lambda$. Breakpoints for the parametric LP are places at which the optimal feasible solution changes.

Previous work on parametric programming has shown that, in the worst case, parametric integer programs and parametric linear programs may have an exponential number of breakpoints~\cite{Carstensen:1983:CPI:3119410.3119525,Murty1980}. We will demonstrate that for the LambdaPrime parametric ILP, the number of breakpoints is in fact linear in terms of the number of edges, implying that a relatively small number of clusterings is able to capture all optimal LambdaPrime solutions. However, it remains NP-hard to find even one of these clusterings. Furthermore, for the LP relaxation it may not be the case that there are only a linear number of breakpoints, and it remains an open question whether the number of breakpoints is even polynomial. Instead, we demonstrate that the LP can be \emph{approximately} solved using only a logarithmic number of LP evaluations, which can then be rounded to approximately optimal solutions. 

\subsection{Concave Function Approximation}
Finding an optimal solution for either $\opt$ or $\lp$ at a single value of $\lambda$ corresponds to evaluating a function at a single point. Approximating either function over a range of $\lambda$ values is equivalent to approximating a concave, piecewise-linear function, using another concave and piecewise-linear function constructed from a set of clusterings (or feasible LP solutions in the case of the LP relaxation). 

Figure~\ref{fig:concave} displays the curves traced out by $\opt$ for a small synthetic graph. Each linear piece in the plot corresponds to a different clustering that remains optimal over a range of $\lambda$ values. In addition to being concave and piecewise linear, note that $\opt$ is strictly increasing in $\lambda$. Positive edges always have a fixed weight of one, and as $\lambda$ varies, making mistakes at negative edges becomes more expensive. Thus, as $\lambda$ increases, the objective value corresponding to every fixed clustering increases. Thus, for every graph, both the LambdaPrime objective~\eqref{eq:lccilp} and its LP relaxation~\eqref{eq:lcclp} are increasing functions. Due to the size and structure of the graph in Figure~\ref{fig:concave}, solutions to the LambdaPrime ILP and LP are in fact the same, i.e., $\opt(\lambda) = \lp(\lambda)$ for all $\lambda \in (0,1)$. Typically this will not be the case in practice. For larger graphs, it will be prohibitively expensive to compute ILP solutions, but solving the LP relaxation can still be accomplished in polynomial time. In recent work~\cite{veldt2019simods}, we showed how the linear programming relaxation of Correlation Clustering can be solved  in practice using memory-efficient projection methods.


\begin{figure}[t]
	\centering
	\begin{subfigure}[b]{0.2\linewidth}
		\includegraphics[width=\linewidth]{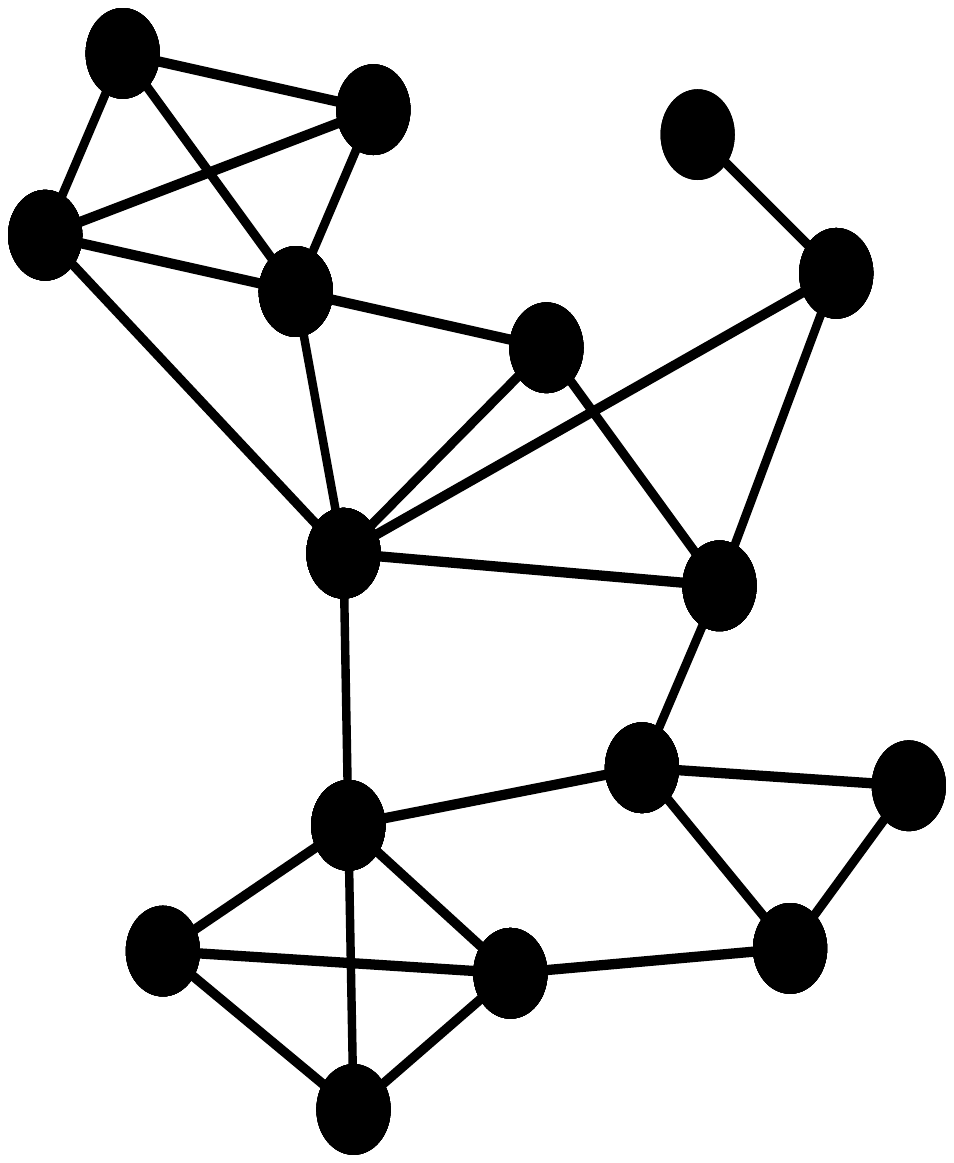}
		\caption{A small graph.}
	\end{subfigure}
\hfill
	\begin{subfigure}[b]{0.35\linewidth}
		\includegraphics[width=\linewidth]{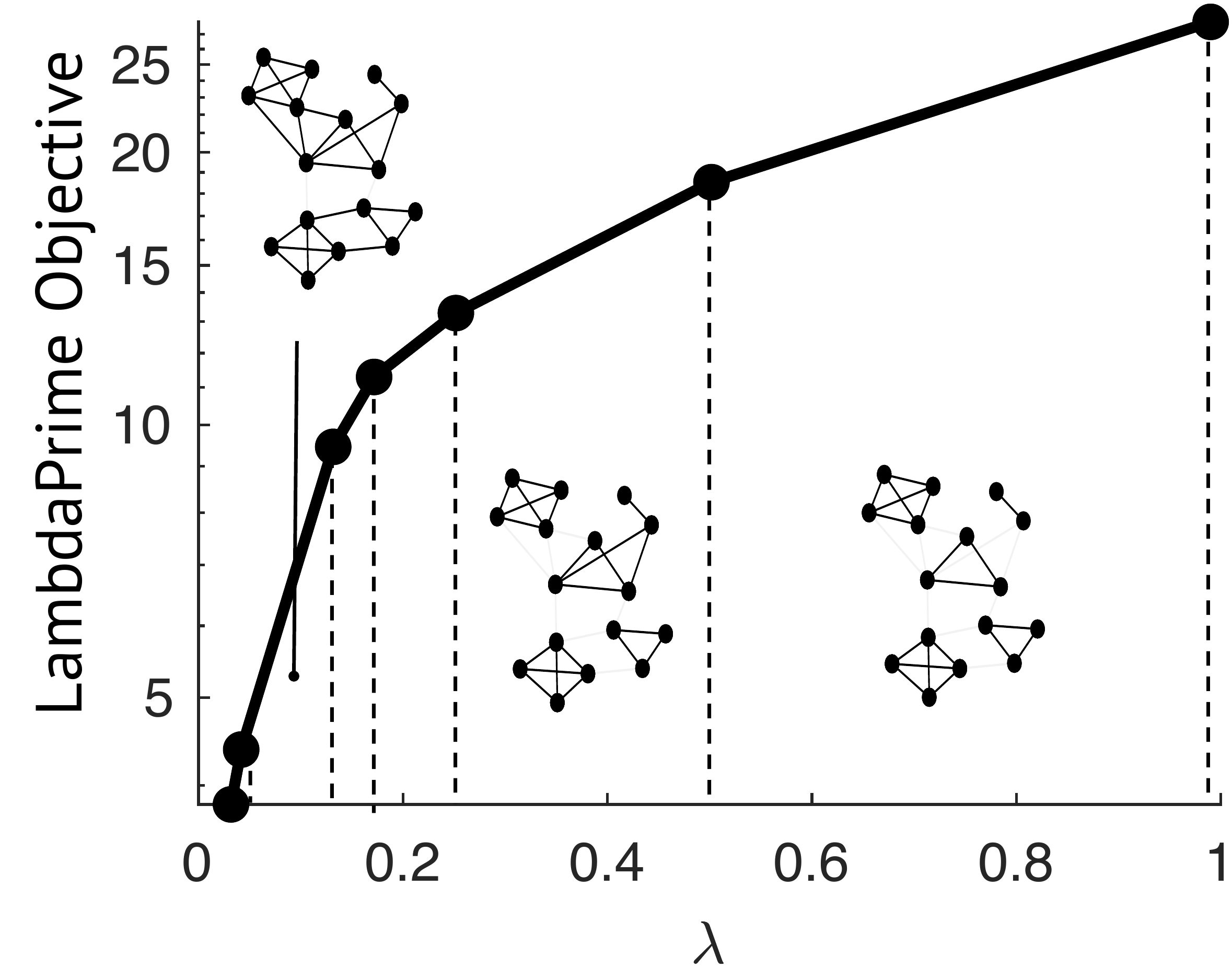}
		\caption{LambdaPrime solution curve.}
	\end{subfigure}
\hfill
	\begin{subfigure}[b]{0.35\linewidth}
		\includegraphics[width=\linewidth]{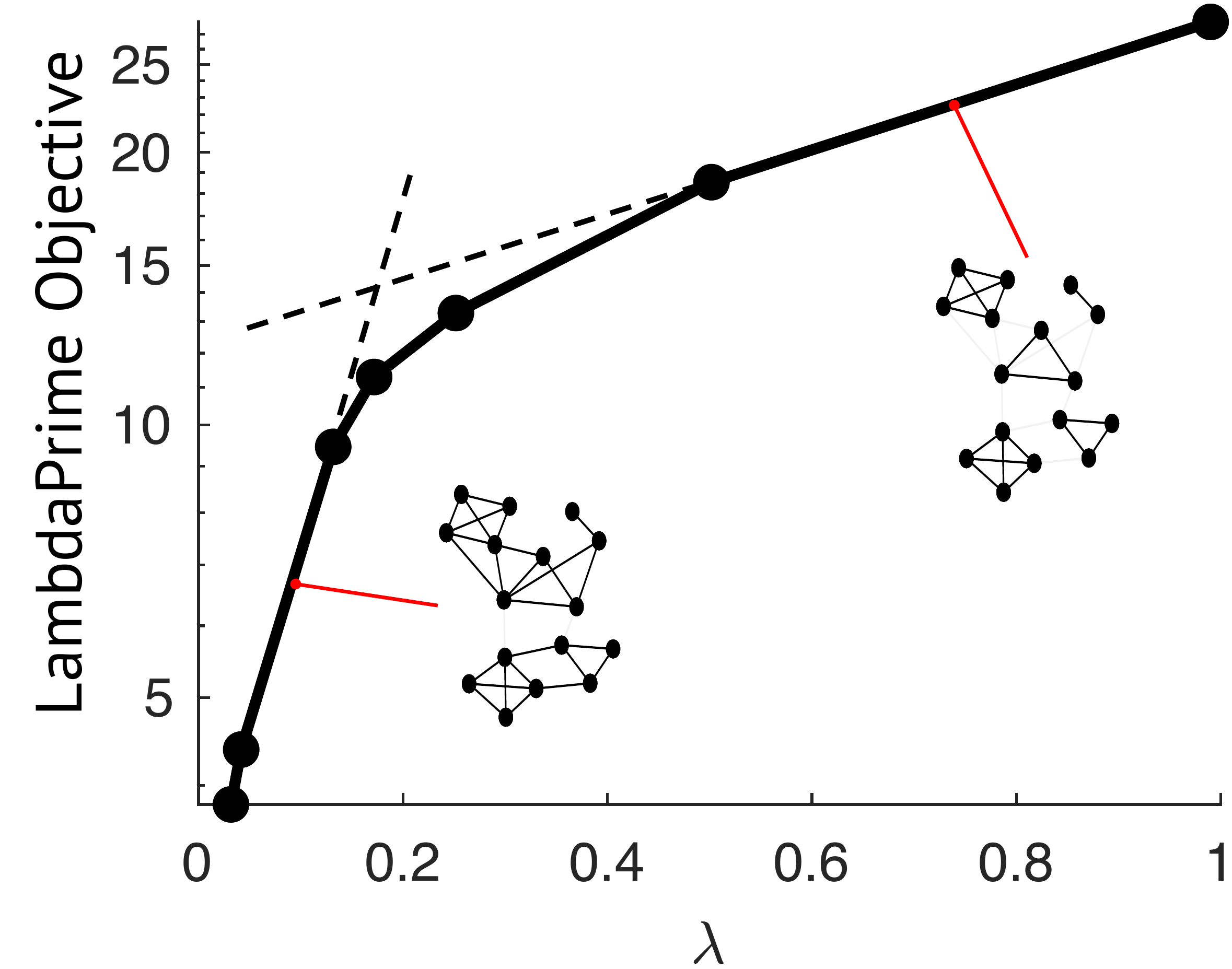}
		\caption{Clusterings $\rightarrow$ linear pieces.}
	\end{subfigure}
	\caption{The concave, piecewise-linear curve $\opt$ of optimal LambdaPrime solution scores for a small synthetic graph.}
	\label{fig:concave}
\end{figure}


Given any set of clusterings $\mathscr{C}$, we can define a new piecewise-linear function that approximates the LambdaPrime objective by identifying the clustering in $\mathscr{C}$ which best approximates LambdaPrime for a certain range of $\lambda$ values. In Figure~\ref{fig:concave} we illustrate this idea by extracting a sub-family of the optimal clusterings for the same small synthetic graph. The new approximate function has a smaller number of linear pieces, since we have selected a strict sub-family of clusterings, and upper bounds the function $\opt$. In general, the same principle holds for approximating the LambdaPrime LP relaxation for a certain graph in different parameter regimes. We will typically accomplish this by finding a set of feasible solutions, each of which \emph{exactly} minimizes the LP for some $\lambda$, and corresponds to one of the linear pieces of the function $\lp$. If this is done carefully, the resulting piecewise-linear curve will still remain a good approximation for $\lp$, despite containing far fewer linear pieces.


\section{Optimal LambdaPrime Clusterings}
\label{sec:optimal}
We begin by proving a bound on the number of clusterings needed to solve LambdaPrime in all parameter regimes.
\begin{theorem}
	\label{thm:optimal}
	Given any  graph $G = (V,E)$, let $c$ equal the minimum number of edges that must be removed in order to partition $G$ into cliques (i.e., $c$ is the cluster deletion solution). There exists a family of $c+1 \leq |E|$ or fewer clusterings, such that for every $\lambda \in (0,1)$, the family contains an optimal LambdaPrime clustering for that $\lambda$.
\end{theorem}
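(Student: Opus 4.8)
The plan is to combine the piecewise-linear, concave structure of $\opt$ with an integrality argument. First observe that for a fixed clustering $\mathcal{C}$ the score is an affine function of $\lambda$,
\[
\lcc(\mathcal{C},\lambda) = a_{\mathcal{C}} + \lambda\, b_{\mathcal{C}},
\qquad
a_{\mathcal{C}} = \tfrac12\sum_{S\in\mathcal{C}}\cut(S),
\qquad
b_{\mathcal{C}} = \sum_{S\in\mathcal{C}}\binom{|S|}{2},
\]
where the intercept $a_{\mathcal{C}}$ is precisely the number of cut edges and the slope $b_{\mathcal{C}}$ is the number of within-cluster node pairs; crucially, both are non-negative integers. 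Hence $\opt(\lambda)=\min_{\mathcal{C}}(a_{\mathcal{C}}+\lambda b_{\mathcal{C}})$ is the lower envelope of a family of lines, which re-derives concavity and piecewise-linearity. I would define the output family to be exactly those clusterings whose lines appear on this lower envelope over $(0,1)$; by construction it contains an optimal clustering for every $\lambda$, so the entire task reduces to bounding the number of linear pieces.

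Let $\mathcal{C}_0,\dots,\mathcal{C}_t$ be these clusterings, indexed by the contiguous subintervals of $(0,1)$ on which each is optimal (concavity guarantees each line is optimal on a single interval, so the pieces are distinct). The central step is to track the intercepts. At a breakpoint $\lambda_0\in(0,1)$ where the envelope switches from $\mathcal{C}_i$ to $\mathcal{C}_{i+1}$, the two lines are equal at $\lambda_0$ while $\mathcal{C}_i$ is smaller to the left; equating them and using $\lambda_0>0$ yields the standard relations $b_{\mathcal{C}_i}>b_{\mathcal{C}_{i+1}}$ and $a_{\mathcal{C}_i}<a_{\mathcal{C}_{i+1}}$. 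Thus $a_{\mathcal{C}_0}<a_{\mathcal{C}_1}<\dots<a_{\mathcal{C}_t}$ is a strictly increasing sequence of non-negative integers.

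It remains to bound the endpoints via Theorem~\ref{thm:lcc}. For $\lambda\le\lambda^*$ the single all-nodes cluster is optimal, so the leftmost piece has cut value $a_{\mathcal{C}_0}=0$ (for a disconnected $G$ the leftmost piece is instead the partition into connected components, which still has cut value $0$). For $\lambda$ near $1$ the optimal clustering solves cluster deletion, whose number of inter-clique edges equals $c$; since cut values increase along the sequence and this is the rightmost piece, $a_{\mathcal{C}_t}=c$. Hence the intercepts are distinct integers in $\{0,1,\dots,c\}$, forcing $t+1\le c+1$. Finally, when $G$ has at least one edge we may retain a single edge as a $K_2$ clique and delete the rest, so $c\le|E|-1$ and therefore $c+1\le|E|$, completing the bound.

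I expect the main obstacle to be identifying the correct monotone integer invariant. The slope $b_{\mathcal{C}}$ is monotone across breakpoints but has no obvious $O(|E|)$ bound, whereas the intercept $a_{\mathcal{C}}$ is simultaneously monotone and bounded above by the cluster-deletion value $c$ through Theorem~\ref{thm:lcc}. Recognizing that the cut value is the quantity to track, and that its integrality together with the two endpoint identities immediately yields a linear count, is the essence of the argument; the remaining care is purely in the (routine) lower-envelope slope/intercept computation and the disconnected-graph edge case.
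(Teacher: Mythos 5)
Your proof is correct and follows essentially the same route as the paper's: both track the cut value (your intercept $a_{\mathcal{C}}$, the paper's $P_t$) as a strictly increasing integer invariant across breakpoints, anchored at $0$ on the left and at the cluster deletion value $c$ on the right via Theorem~\ref{thm:lcc}. Your treatment is in fact slightly more careful than the paper's, since you explicitly handle disconnected graphs and justify $c \le |E|-1$, both of which the paper leaves implicit.
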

\begin{proof}
	As noted previously, LambdaPrime corresponds to a parametric ILP, whose solution curve $\opt$ is a concave, increasing, and piecewise-linear function in $\lambda \in (0,1)$. Let 
	\[0 < \lambda_1 < \lambda_2 < \cdots < \lambda_k < 1\]
	denote the breakpoints of the ILP, and use $\lambda_0 = 0$ and $\lambda_{k+1} = 1$ to denote the endpoints for our parameter space. For $t = 0, 1, 2, \hdots, k$, let $\textbf{x}^t = (x_{ij}^t)$ denote the feasible ILP solution that is optimal in the range $\lambda \in [\lambda_{t}, \lambda_{t+1}]$. Each $\textbf{x}^t$ encodes a clustering $\mathcal{C}_t$ of $G$ that is optimal in this range and corresponds to a linear piece of $\opt$. For each clustering $\mathcal{C}_t$, define
	\begin{align*}
	P_t = \sum_{(i,j) \in E} x_{ij}^t = \frac{1}{2}\sum_{S \in \mathcal{C}_t} \cut(S), \hspace{1cm}  N_t = \sum_{i<j} (1-x_{ij}^t) = \sum_{S \in \mathcal{C}_t} {|S| \choose 2},
	\end{align*}
	so that the LambdaPrime objective for an arbitrary $\lambda$ is
	\begin{align}
	\label{lamcct}
	\lcc(\mathcal{C}_t,\lambda) = P_t + \lambda N_t. 
	\end{align}
	From Theorem~\ref{thm:lcc}, we know that $\mathcal{C}_{k}$, which is optimal over $\lambda \in [\lambda_k, 1)$, will be an optimal solution for the cluster deletion objective. Thus, $P_k = c$. Note next that $P_t < P_{t+1}$ for $t = 0, 1, \hdots (k-1)$.
	To see why, note that $\mathcal{C}_t$ is optimal over $\lambda \in [\lambda_{t}, \lambda_{t+1}]$ and $\mathcal{C}_{t+1}$ is optimal for $\lambda \in [\lambda_{t+1}, \lambda_{t+2}]$. In particular, both clusterings are optimal at the breakpoint $\lambda_{t+1}$, and therefore
	\[P_{t} + \lambda_{t+1}N_t = P_{t+1} + \lambda_{t+1} N_{t+1}.\] 
	If $P_t = P_{t+1}$, then $N_t = N_{t+1}$, contradicting the fact that these clusterings are optimal for \emph{different} parameter ranges. If $P_t > P_{t+1}$, then $N_t < N_{t+1}$, which would imply that for $\lambda > \lambda_{t}$, $\mathcal{C}_t$ would be a better approximation than $\mathcal{C}_{t+1}$, another contradiction. Thus, $P_t < P_{t+1}$. Since the graph is unweighted, this means that the are at most $c +1 \leq |E|$ possible values for $P_t$ for $t = 0, 1, \hdots, k$, which includes the clustering in which all nodes are placed in a single cluster and $P_t = 0$. Thus, there are at most $c + 1$ clusterings in an optimal family.
\end{proof}
Given that LambdaPrime is NP-hard, we cannot hope to find families of optimal clusterings in practice. However, the above theorem tells us that although there are an exponential number of ways to cluster a graph, in practice only a small number of these clusterings are in fact needed to characterize the full clustering structure of one graph. Furthermore, in general, parametric integer linear programs may possess up to an exponential number of breakpoints~\cite{Carstensen:1983:CPI:3119410.3119525}. Our theorem shows that the LambdaPrime ILP avoids this worst-case scenario.

\subsection{Tightness on Star Graphs} 
\label{sec:star}
We end the section by observing that Theorem~\ref{thm:optimal} is in fact tight on star graphs. Consider an $n$-node star graph where node $1$ is the central node, and we refer to all other nodes as outer nodes. The optimal sparsest cut solution places one outer node with node 1, and all other nodes in a second cluster. For cluster deletion, the optimal solution places one outer node with node 1, and each other node in a singleton cluster, for a cluster deletion score of $c = n-2$. 

The LambdaPrime ILP interpolates between these solutions. The minimum scaled sparsest cut of the star graph is $\lambda^* = \frac{1}{n-1}$, so this will be the first breakpoint of the ILP. The final breakpoint is at $\lambda = \frac{1}{2}$, above which point it becomes suboptimal to make even a single negative edge mistake. Thus, when $\lambda \geq 1/2$, the cluster deletion solution is optimal. As $\lambda$ decreases from $\lambda = \frac{1}{2}$ to $\lambda = \frac{1}{n-1}$, the optimal solution will add outer nodes one by one to the cluster containing the central node. There will be exactly $c + 1 = n-1 = |E|$ such clusterings, counting down until all outer nodes have been merged with the central node. 

\section{Approximate LP Solutions in All Parameter Regimes}
\label{sec:approx}
Although the number of optimal ILP solutions needed to optimize the LambdaPrime objective exactly can be bounded above by $|E|$, the result does not hold for the LambdaPrime LP relaxation. In this section, however, we show that we can find a set of $O(\log n)$ feasible solutions to the LP relaxation which approximate the LP relaxation in every parameter regime. 

\subsection{Relation to Concave Function Approximation}
Before proceeding, we note a connection between our techniques and past results on concave function approximation. Recall that the LP relaxation~\eqref{eq:lcclp} is increasing, piecewise linear, and concave. Magnanti and Stratila~\cite{magnanti2012separable} showed how to approximate \emph{any} concave and increasing function $f: [a,b] \rightarrow \mathbb{R}^+$ (where $0 < a < b$), using a piecewise-linear function $p: [a,b] \rightarrow \mathbb{R}^+$. They accomplish this by evaluating $f$ at a logarithmic number of points between $a$ and $b$, and computing tangents at these points which can be joined to produce a piecewise-linear upper bound on $f$. In the context of solving the LambdaPrime LP relaxation, this strategy exactly corresponds to evaluating the LP at a logarithmic number of $\lambda$ values, and then using the resulting feasible solutions, each of which is optimal for some $\lambda$, as approximations for other nearby~$\lambda$. For the sake of completeness, below we provide self-contained proofs for approximating the LambdaPrime LP relaxation in all parameter regimes, specifically tailored to our graph clustering problem. 

\subsection{Logarithmic LP Evaluations}\label{sec:log_lp_eval}
Let $\varepsilon > 0$ be given and assume we wish to find a set of feasible solutions that provide a $(1+\varepsilon)$-approximate solution to the LambdaPrime LP relaxation for every $\lambda \in (4/n^2,1)$. Note that for $\lambda < 4/{n^2}$, the optimal clustering will place all nodes in a single cluster, so we do not need to consider LP solutions below this threshold. We first prove a lemma showing how well an optimal solution for the ILP (or LP) at one value of $\lambda$ approximates the ILP (respectively, the LP) when a nearby resolution parameter is used. 
\begin{lemma}
	\label{lem:approx}
Let $(x_{ij}^t)$ and $(x_{ij}^{t+1})$ be optimal solutions to the LambdaPrime ILP (respectively the LP relaxation) for resolution parameters $\lambda_t < \lambda_{t+1}$. Let $\delta = \frac{\lambda_{t+1}}{\lambda_t}$. Then $(x_{ij}^t)$ is a $\delta$-approximate solution for the LambdaPrime ILP (respectively, the LP relaxation) when~$\lambda_{t+1}$ is used, and $(x_{ij}^{t+1})$ is a $\delta$-approximate solution for the ILP (respectively, LP relaxation) when $\lambda_t$ is used.
\end{lemma}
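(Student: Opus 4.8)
The plan is to treat every fixed solution as an affine function of $\lambda$ and exploit the non-negativity of its two parts. For a feasible solution $x = (x_{ij})$, I would write its LambdaPrime score as $f_x(\lambda) = P_x + \lambda N_x$, where $P_x = \sum_{(i,j)\in E} x_{ij} \ge 0$ and $N_x = \sum_{i<j}(1-x_{ij}) \ge 0$. This decomposition is identical whether $x$ is integral or fractional, so a single argument will cover both the ILP and the LP claim, and the ``respectively'' in the statement costs nothing extra. The one observation that drives everything is a self-scaling bound: for any $\delta \ge 1$,
\[
f_x(\delta\lambda) = P_x + \delta\lambda N_x \le \delta P_x + \delta \lambda N_x = \delta\, f_x(\lambda),
\]
which holds purely because $P_x \ge 0$ forces $P_x \le \delta P_x$. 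I will also use that $f_x$, and hence $\lp$ and $\opt$, is non-decreasing in $\lambda$, a fact already established in the excerpt.

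For the first direction, that $(x^t)$ is a $\delta$-approximation when $\lambda_{t+1} = \delta\lambda_t$ is used, I would chain the self-scaling bound with optimality of $x^t$ at $\lambda_t$ and monotonicity of $\lp$, obtaining $f_{x^t}(\lambda_{t+1}) = f_{x^t}(\delta\lambda_t) \le \delta\, f_{x^t}(\lambda_t) = \delta\,\lp(\lambda_t) \le \delta\,\lp(\lambda_{t+1})$. For the second direction, that $(x^{t+1})$ is a $\delta$-approximation at $\lambda_t = \lambda_{t+1}/\delta$, I would first record the intermediate fact that the optimal value itself grows by at most a factor $\delta$ over this range: applying the self-scaling bound to $x^t$ gives
\[
\lp(\lambda_{t+1}) = \lp(\delta\lambda_t) \le f_{x^t}(\delta\lambda_t) \le \delta\, f_{x^t}(\lambda_t) = \delta\,\lp(\lambda_t).
\]
Then monotonicity together with optimality of $x^{t+1}$ at $\lambda_{t+1}$ closes the argument via $f_{x^{t+1}}(\lambda_t) \le f_{x^{t+1}}(\lambda_{t+1}) = \lp(\lambda_{t+1}) \le \delta\,\lp(\lambda_t)$. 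Replacing $\lp$ by $\opt$ line-for-line yields the ILP versions.

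I do not expect a genuinely hard step here; the content is almost entirely structural. The conceptual crux, and the point I would state explicitly, is that the $\lambda$-independent term $P_x$ is non-negative, so rescaling $\lambda$ by $\delta \ge 1$ can inflate the objective of any fixed solution by a factor of at most $\delta$ — everything else is monotonicity and optimality. The only real care needed is the bookkeeping that keeps the two directions symmetric and makes the same chain valid for both the integral and fractional cases. I would also note in passing that the approximation ratios are well defined because $\lp(\lambda) > 0$ throughout the relevant range $\lambda \in (4/n^2, 1)$.
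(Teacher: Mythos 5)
Your proof is correct and is essentially the paper's own argument: both rest on the decomposition of any fixed solution's score as $P_x + \lambda N_x$ with $P_x, N_x \ge 0$, the resulting scaling bound $f_x(\delta\lambda) \le \delta f_x(\lambda)$, optimality at the respective parameters, and monotonicity in $\lambda$. The paper merely compresses your two directional chains into the single string of inequalities $P_{t+1} + \lambda_t N_{t+1} \leq P_t + \lambda_{t+1} N_t \le \frac{\lambda_{t+1}}{\lambda_t}\left(P_t + \lambda_{t} N_t \right) \le \frac{\lambda_{t+1}}{\lambda_t}\left(P_{t+1} + \lambda_{t+1} N_{t+1}\right)$, which yields both approximation claims at once.
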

\begin{proof}
	All steps of the proof hold regardless of whether we are optimizing over binary variables $x_{ij} \in \{0,1\}$ or relaxed distance scores $x_{ij} \in [0,1]$. For $k \in \{t, t+1 \}$, define
	\begin{align*}
	P_k = \sum_{(i,j) \in E} x_{ij}^k \text{ and } N_k = \sum_{i<j} (1-x_{ij}^k),
	\end{align*}
	so that the LambdaPrime score for $(x_{ij}^k)$ at an arbitrary value of $\lambda$ is $P_k + \lambda N_k$. 
	Since $(x_{ij}^t)$ and $(x_{ij}^{t+1})$ are optimal for their respective resolution parameters, and $\lambda_t < \lambda_{t+1}$, we have the following sequence of inequalities:
	\begin{align*}
	P_{t+1} + \lambda_t N_{t+1} \leq P_t + \lambda_{t+1} N_t  < \frac{\lambda_{t+1}}{\lambda_t} \left(P_t + \lambda_{t} N_t \right)<  \frac{\lambda_{t+1}}{\lambda_t} \left(P_{t+1} + \lambda_{t+1} N_{t+1}\right).
	\end{align*}
	Thus, both $(x_{ij}^t)$ and $(x_{ij}^{t+1})$ are at worst a $\delta$-approximation across the entire interval $[\lambda_t, \lambda_{t+1}]$, where $\delta = \lambda_{t+1}/\lambda_t$.
\end{proof}
We use Lemma~\ref{lem:approx} to construct a sequence of $\lambda$ values and corresponding optimal LP solutions (or ILP solutions if desired), to approximate the LambdaPrime objective in all parameter regimes.

\begin{theorem}
	\label{thm:approx}
	Let $\varepsilon > 0$ be given. There exists a set of $\lfloor \log_{1+\varepsilon} (n)\rfloor + 2$ feasible solutions to the LambdaPrime ILP (respectively, the LP relaxation), such that this set of solutions contains a $(1+\varepsilon)$-approximate solution to the ILP (respectively, the LP), for any $\lambda \in \big(4/{n^2}, 1\big)$.
\end{theorem}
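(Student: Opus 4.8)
The plan is to cover the parameter interval $(4/n^2,1)$ with a geometric grid, solve the LP (or ILP) \emph{exactly} at each grid point, and use Lemma~\ref{lem:approx} to argue that each computed solution remains a $(1+\varepsilon)$-approximation throughout a small multiplicative window around its grid point. Write $f$ for the optimal-value function, i.e.\ $f = \lp$ in the LP case and $f = \opt$ in the ILP case; recall that $f$ is concave, increasing, and satisfies $f(0)=0$, so that $f(\lambda)/\lambda$ is nonincreasing in $\lambda$. The whole argument reduces to counting how many such windows are needed to tile the range.

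The first and central step is to strengthen Lemma~\ref{lem:approx} into a \emph{two-sided} coverage statement for a single grid point. Fix $\mu \in (4/n^2,1)$, let $x^\mu$ be optimal at $\mu$, and set $P^\mu = \sum_{(i,j)\in E} x^\mu_{ij}$, $N^\mu = \sum_{i<j}(1-x^\mu_{ij})$. For $\lambda > \mu$ the two inequalities from the proof of Lemma~\ref{lem:approx} give $P^\mu + \lambda N^\mu \le (\lambda/\mu)(P^\mu + \mu N^\mu) = (\lambda/\mu) f(\mu) \le (\lambda/\mu) f(\lambda)$, so $x^\mu$ is a $(\lambda/\mu)$-approximation; this is at most $1+\varepsilon$ for $\lambda \le (1+\varepsilon)\mu$. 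For $\lambda < \mu$ we instead bound $P^\mu + \lambda N^\mu \le P^\mu + \mu N^\mu = f(\mu)$ and $f(\lambda) \ge (\lambda/\mu)\, f(\mu)$, the latter being exactly concavity together with $f(0)=0$; this yields a $(\mu/\lambda)$-approximation, at most $1+\varepsilon$ for $\lambda \ge \mu/(1+\varepsilon)$. Hence one exactly-optimal solution computed at $\mu$ certifies a $(1+\varepsilon)$-approximation on the entire window $[\,\mu/(1+\varepsilon),\,(1+\varepsilon)\mu\,]$, whose endpoints differ multiplicatively by $(1+\varepsilon)^2$.

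With this window in hand I would place grid points $\mu_s = \tfrac{4}{n^2}(1+\varepsilon)^{2s+1}$ for $s = 0,1,2,\dots$, so that the consecutive windows $[\mu_s/(1+\varepsilon),(1+\varepsilon)\mu_s] = [\tfrac{4}{n^2}(1+\varepsilon)^{2s},\,\tfrac{4}{n^2}(1+\varepsilon)^{2s+2}]$ abut exactly and tile $(4/n^2,1)$ starting from the left endpoint $4/n^2$. Requiring the windows to reach $\lambda=1$ forces $(1+\varepsilon)^{2s+2}\ge n^2/4$, i.e.\ $s \ge \log_{1+\varepsilon}(n) - \log_{1+\varepsilon}(2) - 1$, so the number of grid points (= number of computed solutions) is $\lceil \log_{1+\varepsilon}(n) - \log_{1+\varepsilon}(2)\rceil = \lceil \log_{1+\varepsilon}(n/2)\rceil \le \lfloor \log_{1+\varepsilon}(n)\rfloor + 2$. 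Each computed solution is feasible for the ILP (respectively the LP), and every $\lambda \in (4/n^2,1)$ lies in some window, so the family contains a $(1+\varepsilon)$-approximate solution at that $\lambda$, establishing the stated bound.

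The one step to get right is the factor-of-two saving that distinguishes the target bound $\lfloor\log_{1+\varepsilon}(n)\rfloor+2$ from the naive $\log_{1+\varepsilon}(n^2/4)\approx 2\log_{1+\varepsilon}(n)$ one obtains by propagating each grid solution to only one side. That saving is genuine and rests entirely on the two-sided window, hence on the symmetry already present in Lemma~\ref{lem:approx} and, at bottom, on $f$ being concave with $f(0)=0$. I would therefore make sure the lower-side inequality $f(\lambda)\ge (\lambda/\mu)f(\mu)$ is stated and justified cleanly, since it is the sole place concavity of the value function enters. The remaining bookkeeping --- the threshold $4/n^2$ below which the single-cluster solution is optimal (so the range need not extend lower), and the floor/ceiling rounding absorbed by the additive constant $2$ --- is routine.
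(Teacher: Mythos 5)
Your proposal is correct and takes essentially the same route as the paper: solve at a geometric grid of ratio $(1+\varepsilon)^2$ starting from $4/n^2$, argue each exact solution covers a two-sided multiplicative window $[\mu/(1+\varepsilon),(1+\varepsilon)\mu]$, and count the windows needed to tile $(4/n^2,1)$. The only cosmetic difference is that you re-derive the backward half of the window from concavity together with $f(0)=0$, whereas Lemma~\ref{lem:approx} as stated already gives both sides (each of the two optimal solutions at $\lambda_t<\lambda_{t+1}$ is a $\delta$-approximation over the whole interval $[\lambda_t,\lambda_{t+1}]$ with $\delta=\lambda_{t+1}/\lambda_t$, by the optimality inequalities alone), so no strengthening is actually required.
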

\begin{proof}
	The proof is constructive. Set $\lambda_1 = 4/n^2$ and let $q = \floor*{\log_{(1+\varepsilon)^2} (n^2/4)}+1$. For $k = 2, 3, \hdots, q$, recursively define a sequence of $\lambda$ values by setting $\lambda_k = (1+\varepsilon)^2 \lambda_{k-1}$, and let $\lambda_{q+1} = 1/(1+\varepsilon)$. 
	Evaluate the LambdaPrime LP relaxation (or the ILP) at each of these $\lambda$ values to obtain solutions $(x_{ij}^1), (x_{ij}^2), \hdots , (x_{ij}^{q+1})$. By Lemma~\ref{lem:approx}, $(x_{ij}^1)$ is a $(1+\varepsilon)$-approximate solution for all $\lambda \in [\lambda_1, (1+\varepsilon)\lambda_1]$, $(x_{ij}^{q+1})$ is a $(1+\varepsilon)$-approximation for $\lambda \in [(1+\varepsilon)\lambda_q,1)$  and for any $k \in \{2, 3, \hdots, q\}$, $(x_{ij}^k)$ is a $(1+\varepsilon)$-approximate solution for all $\lambda \in \left[ (1+\varepsilon)\lambda_{k-1}, (1+\varepsilon) \lambda_k \right]$. Thus, using $q+1 < \lfloor 2 \log_{(1+\varepsilon)^2} (n)\rfloor + 2$ feasible solutions, we obtain a $(1+\varepsilon)$-approximate solution for every $\lambda \in [4/n^2,1)$.
\end{proof}	

While Theorem~\ref{thm:optimal} proved that there is a set of $c < |E|$ of fewer clusterings that contains an optimal LambdaPrime solution for every~$\lambda \in (0,1)$, Theorem~\ref{thm:approx} shows us that if we are content with $(1+\varepsilon)$-approximate solutions, a logarithmic number of clusterings suffices. Unfortunately, given that LambdaPrime is NP-hard, we cannot expect to find any optimal clusterings in practice. Nevertheless, Theorem~\ref{thm:approx} shows that we can, in polynomial time, obtain a set of clusterings which contain an \emph{approximately} optimal clustering for LambdaPrime in every parameter regime. 
We formalize this with a final theorem for the section.
\begin{theorem}
	\label{cor:approxlamcc}
	Given a graph $G = (V,E)$, we can obtain a set of $O(\log n)$ clusterings of $G$ which, for any $\lambda \in (0,1)$, contains a clustering that is within an $O(\log n)$ factor of optimal for the LambdaPrime objective. The procedure runs in $O(T_{LP} \log n)$ time, where $T_{LP}$ is the time it takes to solve the LambdaPrime LP relaxation for a single value of $\lambda$. 
\end{theorem}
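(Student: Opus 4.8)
The plan is to combine the covering family of Theorem~\ref{thm:approx} with a standard LP-rounding routine for weighted Correlation Clustering. Fix $\varepsilon$ to be a constant (say $\varepsilon = 1$) and apply Theorem~\ref{thm:approx} to obtain $O(\log n)$ optimal LP solutions $(x^1),\dots,(x^{q+1})$, where $(x^k)$ is optimal at $\lambda_k$ and, by the construction in that proof, the intervals $[\lambda_k/(1+\varepsilon),\,(1+\varepsilon)\lambda_k]$ cover $(4/n^2,1)$, so that every query $\lambda$ lies within a factor $(1+\varepsilon)$ of some $\lambda_k$. For each $k$ I would round $(x^k)$ at its own parameter $\lambda_k$ using the region-growing $O(\log n)$-approximation of Demaine et al.~\cite{demain2006ccgen}, producing a clustering $\mathcal{C}_k$ with $\lcc(\mathcal{C}_k,\lambda_k) \le O(\log n)\,\lp(\lambda_k)$. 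Rounding once per $\lambda_k$ rather than once per query keeps the output family fixed; together with the trivial single-cluster clustering, which is exactly optimal for $\lambda \le 4/n^2$ by Theorem~\ref{thm:lcc}, this yields a family of $O(\log n)$ clusterings.

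The key step is to show that $\mathcal{C}_k$ stays within $O(\log n)$ of optimal not only at $\lambda_k$ but throughout the interval it is meant to cover. I would exploit the fact that, for a fixed clustering, the cost is affine in the parameter: writing $\lcc(\mathcal{C}_k,\lambda) = P_k + \lambda N_k$, both $P_k$ and $N_k$ are independent of $\lambda$. Given a query $\lambda$ within a factor $(1+\varepsilon)$ of $\lambda_k$, I would split into two cases. If $\lambda \ge \lambda_k$, then $\lambda \le (1+\varepsilon)\lambda_k$ gives $\lcc(\mathcal{C}_k,\lambda) \le (1+\varepsilon)\lcc(\mathcal{C}_k,\lambda_k) \le (1+\varepsilon)O(\log n)\lp(\lambda_k)$, and since $\lp \le \opt$ and $\opt$ is increasing, $\lp(\lambda_k) \le \opt(\lambda_k) \le \opt(\lambda)$. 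If instead $\lambda \le \lambda_k$, then $\lcc(\mathcal{C}_k,\lambda) \le \lcc(\mathcal{C}_k,\lambda_k) \le O(\log n)\lp(\lambda_k) \le O(\log n)\opt(\lambda_k)$, and the inequalities in Lemma~\ref{lem:approx} (applied to the optimal ILP solutions at $\lambda$ and $\lambda_k$) bound $\opt(\lambda_k) \le (\lambda_k/\lambda)\opt(\lambda) \le (1+\varepsilon)\opt(\lambda)$. In both cases $\lcc(\mathcal{C}_k,\lambda) \le (1+\varepsilon)O(\log n)\,\opt(\lambda)$, which is $O(\log n)\opt(\lambda)$ once the constant $(1+\varepsilon)$ is absorbed.

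For the running time, the procedure solves the LP at the $O(\log n)$ values $\lambda_k$ and performs one rounding per solution; since each rounding runs in time polynomial in the input and is dominated by the $O(\log n)$ LP solves, the total is $O(T_{LP}\log n)$. The main obstacle I anticipate is exactly the tension above. The cleanest bound on the rounded clustering's cost at a query $\lambda$ would come from rounding $(x^k)$ using the weights determined by $\lambda$ itself, but the region-growing clustering depends on those weights, so that would produce a different clustering for every $\lambda$ and defeat the goal of a fixed $O(\log n)$-size family. Committing to a single rounding at $\lambda_k$ forces the affine-transfer argument, whose two directions are asymmetric: the $\lambda \le \lambda_k$ direction is the delicate one, since $\opt$ decreases as $\lambda$ decreases and we must invoke Lemma~\ref{lem:approx} to keep $\opt(\lambda_k)$ from overstating $\opt(\lambda)$ by more than a constant. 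Composing this constant with the $O(\log n)$ rounding factor, and confirming that the covering intervals of Theorem~\ref{thm:approx} place every $\lambda \in (4/n^2,1)$ within factor $(1+\varepsilon)$ of some $\lambda_k$, is where the argument must be handled with care.
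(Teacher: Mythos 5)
Your proposal is correct and follows essentially the same route as the paper's proof: solve the LP at the logarithmically spaced $\lambda_k$ from Theorem~\ref{thm:approx}, round each solution once at its own $\lambda_k$ with an $O(\log n)$ Correlation Clustering rounding, and transfer the guarantee across each interval via the affine form $P + \lambda N$, splitting into the same two cases (monotonicity of $\opt$ for $\lambda \ge \lambda_k$, and the $\lambda_k/\lambda \le 1+\varepsilon$ comparison of optima, i.e.\ the inequality underlying Lemma~\ref{lem:approx}, for $\lambda < \lambda_k$). Your explicit inclusion of the single-cluster clustering for $\lambda \le 4/n^2$ is a small tidying of the boundary case that the paper glosses over, but it does not change the argument.
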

\begin{proof}
	Fix a constant value $\eps > 0$ (e.g., $\eps$ = 1 will suffice) and solve the LP relaxation at the logarithmically spaced values of $\lambda$: $\lambda_1 < \lambda_2 < \cdots < \lambda_k$ as outlined in Theorem~\ref{thm:approx}. After this, we round each LP solution obtained using existing techniques~\cite{CharikarGuruswamiWirth2005}, which guarantees we have an $O(\log n)$-approximation for $\lambda_i$ for $i \in \{1, 2, \hdots, k\}$. Solving the LP relaxation is significantly more expensive than the rounding procedure, thus the overall runtime is $O(T_{LP} \log n)$.
	
	For each $\lambda_i$, let $\mathcal{C}_i$ denote the optimal LambdaPrime clustering, and let $\mathcal{\hat{C}}_i$ denote the clustering we obtain via LP rounding. For any clustering $\mathcal{C}$, define
	\begin{align*}
	P(\mathcal{C}) = \sum_{S \in \mathcal{C}} \frac{1}{2} \, \cut(S) \text{ and } N(\mathcal{C}) = \sum_{S \in \mathcal{C}} {|S| \choose 2},
	\end{align*}
	so that $\lcc(\mathcal{C},\lambda) = P(\mathcal{C}) + \lambda N(\mathcal{C})$ is the LambdaPrime objective for clustering $\mathcal{C}$ and resolution parameter $\lambda$. 
	
	Now consider an arbitrary $\lambda' \in (0,1)$, and let $\mathcal{C}'$ be optimal for this $\lambda'$. Note that there exists some $\lambda_i$ such that 
	\[\delta = \frac{\max(\lambda', \lambda_i)}{\min( \lambda',\lambda_i)} \leq (1 + \varepsilon)\,.\]
	Since our aim is to obtain $O(\log n)$-approximate solutions, the choice of $\varepsilon$ used will only change the approximation factor by a constant. 
	If we assume that $\lambda_i \leq \lambda'$, then
	\begin{align*}
	\lcc(\mchi, \lambda') &= P(\mchi) + \lambda' N(\mchi) \leq \frac{\lambda'}{\lambda_i} \Big( P(\mchi) + \lambda_i N(\mchi) \Big) \\
	& \leq \delta \cdot O(\log n)\cdot  \lcc(\mci, \lambda_i) \leq O(\log n) \cdot \lcc(\mathcal{C}', \lambda')\,, 
	\end{align*}
	where in the last step we have used the fact that the optimal LambdaPrime solution value is strictly increasing as $\lambda$ increases. On the other hand, if $\lambda' < \lambda_i$, then
%
	\begin{align*}
	\lcc(\mchi, \lambda') &= P(\mchi) + \lambda' N(\mchi) < \Big( P(\mchi) + \lambda_i N(\mchi) \Big) \\
	& \leq O(\log n)\cdot  \Big( P(\mci) + \lambda_i N(\mci) \Big) \leq O(\log n) \cdot  \Big( P(\mathcal{C}') + \lambda_i N(\mathcal{C}') \Big)\\
	& \leq O(\log n) \cdot \frac{\lambda_i}{\lambda'}\Big( P(\mathcal{C}') + \lambda'N(\mathcal{C}') \Big) \leq O(\log n)\cdot  \lcc(\mathcal{C}', \lambda').
	\end{align*}
Thus, the result is proven. 
\end{proof}
\section{Lower Bounds on Ring Graphs}
\label{sec:rings}
We now show that the number of LP solutions used to approximate the LambdaPrime relaxation in Theorem~\ref{thm:approx} is tight up to a constant factor. More precisely, there exists a class of \emph{ring} graphs for which we need at least $\Omega(\log n)$ feasible solutions to the LP relaxation in order to get a constant factor approximation for the LP in all parameter regimes. 

\subsection{Overview of Proof}
Our proof, in short, is to demonstrate that for a specific class of ring graphs, the LambdaPrime LP relaxation, as well as the original LambdaPrime objective, behave similarly enough to the square root function that we are able to adapt and apply the results of Magnanti and Stratila~\cite{magnanti2012separable} on concave function approximation. We first prove a new characterization of the optimal LP solution for ring graphs, and then demonstrate how this function can be bounded below and above in terms of a scaled version of the square root function. We then use these bounds to adapt the lower bound result of Magnanti and Stratila~\cite{magnanti2012separable} for the square root function, to obtain a similar lower bound for approximating the LambdaPrime LP relaxation.

\subsection{A Simple Class of Ring Graphs}
Let $G_k = (V,E)$ be a ring graph with $n = 2^k$. We will consider any $k \geq 3$, and always assume the nodes are ordered so that node $i$ is adjacent to node $i+1$ for $i = 1,2,  \hdots,  n-1$, and nodes $1$ and $n$ are adjacent. Throughout the section, we will explicitly make use of the fact that LambdaPrime corresponds to an instance of correlation clustering, defined on a signed graph. Specifically for the ring graph, every edge $(i,j) \in E$ is viewed as a positive edge $(i,j) \in E^+$ with weight one, and for every $(i,j) \in V \times V$ there is a negative edge $(i,j) \in E^-$ with weight $\lambda$. Figure~\ref{fig:g3} displays a picture of $G_3$, the smallest graph in this class.

	
	
Recall from Theorem~\ref{thm:lcc} that LambdaPrime interpolates between the scaled sparsest cut objective and the cluster deletion solution. The scaled sparsest cut for a ring graph is
\[ \lambda_1 = \min_{S \subset V} \frac{\cut(S)}{|S| |\bar{S}|} = \frac{2}{\frac{n}{2} \cdot \frac{n}{2}} = \frac{8}{n^2}. \]
The optimal cluster deletion solution on $G_k$ is to pair up each node with a single adjacent vertex. Thus, half the edges are cut, and half are not. This will be optimal for any value of~$\lambda \geq \frac{1}{2}$. We will therefore restrict our attention to LP solutions in the range $\left( \frac{8}{n^2}, \frac{1}{2} \right)$.	

	\begin{figure}[t]
		\centering
		\includegraphics[width=.5\linewidth]{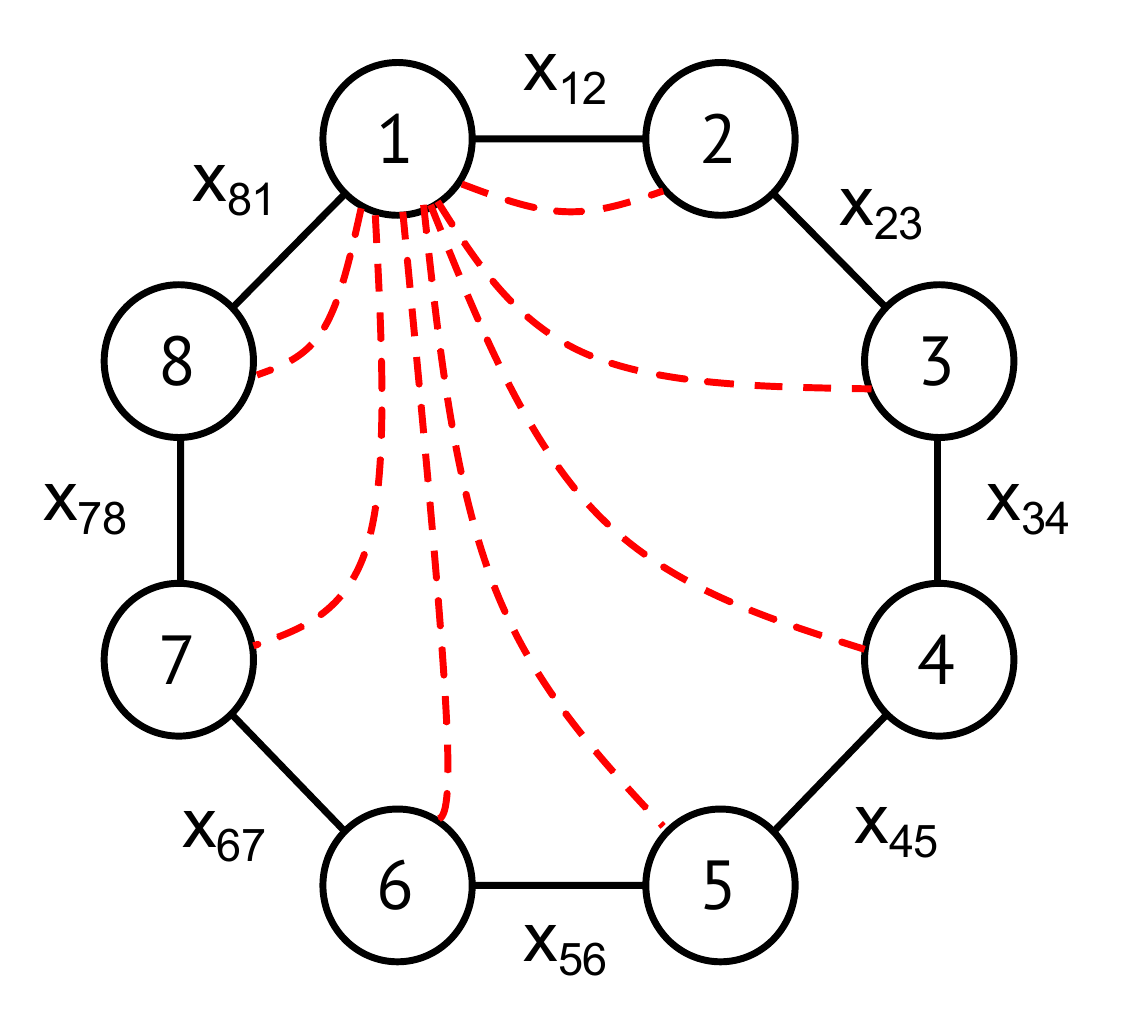}
		\caption{Ring graph $G_3$ with $n = 2^3$ nodes. All pairs of nodes share a negative edge of weight $\lambda$. We illustrate the negative edges adjacent to node 1 with red dashed lines, but omit other negative edges to simplify the illustration. An important part of understanding the LP relaxation of ring graphs in this class is to figure out the value of $x_{i,i+1}$, the LP distance score between nodes sharing a positive edge. Due to symmetry in the graph, this score will be the same for every value of $i$.}
		\label{fig:g3}
\end{figure}

\subsection{LambdaPrime LP on Ring Graphs}
In order to prove results regarding the LambdaPrime LP relaxation on this class of ring graphs, we will consider an alternative LP relaxation, that was originally considered by Wirth~\cite{wirth2004approximation} for the unweighted version of Correlation Clustering.
\begin{equation}
\label{neppc}
\begin{array}{lll} \text{minimize} & \sum_{(i,j)\in E} (1-\lambda)x_{ij} \,\,+ & \sum_{(i,j) \notin E} \lambda (1-x_{ij})\\ \subjectto
& x_{i_1, i_m}
\leq
\sum_{j=1}^{m-1} x_{i_j, i_{j+1}}
&\text{ for all $\NEPPC(i_1,i_2, \hdots , i_m)$} \\
& x_{ij}
\leq
1
& \text{ for all $(i,j)\notin E$} \\
& 0
\leq
x_{ij}
&\text{ for all $(i,j)$}\,.
\end{array}
\end{equation}
Each constraint in LP~\eqref{neppc} corresponds to a \textit{Negative Edge with Positive Path Cycle} (NEPPC), where $\NEPPC(i_1,i_2, \hdots , i_m)$ represents a sequence (i.e., a \emph{path}) of (positive) edges, 
\[ \{(i_1,i_2), (i_2, i_3), \hdots ,(i_{m-1},i_m)\} \subset E\,,\]
with a single non-edge (i.e., negative edge) completing the cycle: $(i_1, i_m) \in E^-$. Wirth~\cite{wirth2004approximation} proved that the set of optimal solutions to the NEPPC linear program~\eqref{neppc} is exactly the same as the optimal solution set to the canonical LP. Using the NEPPC LP relaxation, we prove a new way to express the optimal value of the LP solution on ring graphs for any $\lambda \in [8/n^2, 1/2]$. First, we make several observations about the NEPPC objective as it pertains to ring graphs.
\begin{itemize}
	\item There are two NEPPC constraints for each negative edge: one that comes from traveling clockwise around the ring graph (see Figure~\ref{fig:g3}), and the other from traveling counterclockwise. 
	\item As observed by Wirth~\cite{wirth2004approximation}, every negative edge $(i,j) \in E^-$ is either involved in a tight NEPPC constraint, or $x_{ij} = 1$. 
	\item If we assign $x_{i,i+1} = c$ for all $i = 1, 2, \hdots , n$ for some constant $c$, then for this fixed assignment, the LP will be minimized if $x_{ij} = \min \{ 1, c \cdot \mathit{dist}(i,j)\}$ for each $(i,j) \in E^-$, where $\mathit{dist}(i,j)$ is the shortest path distance (the number of positive edges) between nodes $i$ and $j$ in the ring graph. 
\end{itemize}
Using these observations, we prove the following characterization of LP solutions on the ring graph.
\begin{theorem}
	\label{thm:ringlp}
	For any $\lambda \in \left[  \frac{8}{n^2}, \frac12 \right]$ the optimal value of the LambdaPrime LP relaxation for the ring graph is
	\begin{equation}
	\label{eq:lp}
	\lp(\lambda) = \min_{t \in \mathbb{N} }  \frac{n}{t} \left( 1 + \lambda {t \choose 2} \right). \end{equation}
\end{theorem}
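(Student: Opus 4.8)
The plan is to collapse the full multivariate LP down to a single scalar optimization by exploiting the rotational symmetry of the ring, to recognize the resulting univariate objective as a convex piecewise-linear function whose breakpoints sit exactly at $c = 1/t$ for integers $t$, and then to read off the formula by evaluating at those breakpoints.

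First I would use the symmetry of $G_k$. The cyclic shift $\sigma : i \mapsto i+1$ is a graph automorphism under which both the LambdaPrime objective and the \NEPPC{} feasible region are invariant, so averaging any optimal solution over the rotations $\sigma^0,\dots,\sigma^{n-1}$ produces a feasible solution of the same (optimal) value that is rotation-invariant; in particular every positive (ring) edge then carries a common value $x_{i,i+1}=c$. Combined with the third observation above --- that once the ring edges are fixed to $c$ the objective is minimized by taking $x_{ij}=\min\{1,\,c\,\dist(i,j)\}$ on each negative edge --- this reduces the LP to $\lp(\lambda)=\min_{c\in(0,1]} g(c)$, where
\begin{equation*}
g(c) = nc + \lambda \sum_{i<j}\bigl(1 - \min\{1,\, c\,\dist(i,j)\}\bigr).
\end{equation*}
I would verify that for every $c$ this assignment is genuinely feasible: for a \NEPPC{} with negative edge $(i_1,i_m)$ and positive path of length $m-1 \ge \dist(i_1,i_m)$, we get $x_{i_1 i_m} \le c\,\dist(i_1,i_m) \le c(m-1)$, so $g(c)\ge \lp(\lambda)$ for all $c$, with equality at the symmetrized optimum.

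Next I would analyze $g$ as a function of the single variable $c$. Since $1 - \min\{1,\,c\,\dist(i,j)\} = \max\{0,\,1-c\,\dist(i,j)\}$ is convex and piecewise-linear in $c$, so is $g$, with breakpoints only at $c = 1/d$ for integers $d$; a convex piecewise-linear function attains its minimum at a breakpoint or an endpoint, so it suffices to evaluate $g$ at $c = 1/t$. Using the ring distance counts --- each distance $1 \le d \le n/2-1$ is realized by exactly $n$ unordered pairs --- a direct computation gives
\begin{equation*}
g(1/t) = \frac{n}{t} + \lambda\,\frac{n(t-1)}{2} = \frac{n}{t}\Bigl(1 + \lambda {t \choose 2}\Bigr),
\end{equation*}
the claimed expression, with the endpoint $c=1$ coinciding with $t=1$.

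Finally I would rule out the remaining boundary $c\to 0$ (the degenerate all-in-one-cluster solution), and this is where I expect the main obstacle to lie, since it is exactly where the hypothesis $\lambda \ge 8/n^2$ must enter. On the first linear piece (small $c$, where every pair still has $c\,\dist(i,j)<1$) the slope of $g$ is $n - \lambda\sum_{i<j}\dist(i,j)$; since $\sum_{i<j}\dist(i,j) = \Theta(n^3)$ on the ring and the slope becomes nonpositive precisely when $\lambda \ge 8/n^2$, the convex minimum is pushed off the boundary to a finite breakpoint $c = 1/t^*$. To justify writing the minimum over all of $\mathbb{N}$, I would note that $h(t) = \frac{n}{t}(1+\lambda{t\choose 2})$, viewed over real $t$, is U-shaped with continuous minimizer $\sqrt{2/\lambda} \le n/2$ whenever $\lambda \ge 8/n^2$, so the discrete minimizer satisfies $t^* \le n/2$, exactly the regime in which my breakpoint computation of $g(1/t)$ is valid. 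Combining the pieces yields $\lp(\lambda) = \min_{t\in\mathbb{N}} \frac{n}{t}(1 + \lambda{t \choose 2})$. The two points needing the most care are the feasibility-and-optimality bookkeeping in the symmetrization step and the slope computation at $c\to 0$, which ties the threshold $8/n^2$ back to the scaled sparsest-cut value $\lambda_1$.
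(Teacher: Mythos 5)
Your proof is correct, and its skeleton matches the paper's: symmetrize over the rotation group of the ring to force all ring-edge values to a common $c$, set each negative edge to $\min\{1,\,c\,\dist(i,j)\}$, and use the threshold $\lambda \ge 8/n^2$ (equivalently, the sign of the slope $n - \lambda n^3/8$ on the piece $c \in [0,2/n]$) to rule out the degenerate small-$c$ regime --- the paper's Step 2 is exactly your slope computation. Where you genuinely diverge is in establishing integrality of the optimal $1/c$: the paper proves it by a local perturbation argument (if $1/c \notin \mathbb{N}$, set $t = \lfloor 1/c \rfloor$ and show that, according to the sign of $1 - \lambda t(t+1)/2$, moving $c$ to $1/t$ or toward $1/(t+1)$ does not increase the objective), whereas you observe that $g(c) = nc + \lambda \sum_{i<j} \max\{0,\, 1 - c\,\dist(i,j)\}$ is convex and piecewise-linear with breakpoints exactly at reciprocals of integers, so its minimum sits at a breakpoint or endpoint. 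The two arguments rest on the same underlying fact (linearity of $g$ between breakpoints), but your convexity framing is cleaner, global, and avoids the case analysis. You also supply a step the paper leaves implicit: the stated formula minimizes over all $t \in \mathbb{N}$, yet the symmetrization argument only produces candidates $t \le n/2$; your remark that $h(t) = \frac{n}{t}\bigl(1 + \lambda \binom{t}{2}\bigr)$ is convex in $t$ with continuous minimizer $\sqrt{2/\lambda} \le n/2$ whenever $\lambda \ge 8/n^2$ is precisely what licenses writing $\min_{t \in \mathbb{N}}$, so this is a small gap in the paper's write-up that your argument closes. One nitpick: you quote $\sum_{i<j} \dist(i,j) = \Theta(n^3)$, but your claim that the slope is nonpositive ``precisely when'' $\lambda \ge 8/n^2$ requires the exact value $n^3/8$; this does hold (each of the $n$ nodes sees two nodes at each distance $1,\dots,n/2-1$ and one at distance $n/2$, giving $n^2/4$ per node), so state the computation rather than the asymptotic.
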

\begin{proof}
	We break the proof up into three parts:
	\begin{enumerate}
		\item We show that there exists an optimal solution $(x_{ij})$ in which all positive edge distances are the same. More precisely, there exists some $c \geq 0$ such, that $x_{i,i+1} = c$ for all $i = 1, 2, \hdots, n$.
		\item We prove that $\lambda \geq 8/n^2$ implies that $c > 0$ and $\frac{1}{c} \leq \frac{n}{2}$. 
		\item We prove that $\frac{1}{c}$ is in fact an integer. 
	\end{enumerate}
	Once we have proven the above steps, we will be able to re-express the LP relaxation in the form given by~\eqref{eq:lp}. Note that throughout the proof we will abuse notation slightly by assuming all subscripts follow modular arithmetic modulo $n$. For example, we express the positive edge LP distances as $x_{i,i+1}$ for $i = 1, \hdots, n$, with the understanding that due to modular arithmetic, $x_{n,n+1} = x_{n, 1} = x_{1,n}$.
	
	\paragraph*{Step 1: All positive distances are equal.}
	Let $\textbf{x}^1 = (x^{1}_{ij})$ be an arbitrary solution to the NEPPC LP relaxation~\eqref{neppc} for a fixed $\lambda$. Construct $n-1$ other optimal solutions $\textbf{x}^2,  \textbf{x}^3,  \hdots , \textbf{x}^{n}$  by setting $x_{ij}^t = x^1_{i+t, j+t}$ for all $i < j$ and $t =  2,3, \hdots, n$. In other words, we take advantage of the symmetry in the ring graph and ``rotate'' LP distance scores around the ring one node at a time to produce new optimal solutions. Then form 
	\[ \textbf{x}^* = \frac{1}{n} \sum_{j = 1}^n \textbf{x}^j.\]
	Note that $\textbf{x}^*$ is a convex combination of optimal LP solutions, and therefore is itself an optimal LP solution. Furthermore, for every $i = 1, 2, \hdots, n$, 
	\begin{align*}
	x_{i,i+1}^* &= \frac{1}{n} \left(x^1_{i,i+1} + x^2_{i,i+1} + \hdots + x^{n}_{i,i+1}\right) = \frac{1}{n} \left(x^1_{i,i+1} + x^1_{i+2,i+3} + \hdots + x^{1}_{i+n,i+n+1}\right).
	\end{align*}
	Regardless of the value of $i$, this is equal to the sum of all the positive edge distances in the LP solution $\textbf{x}^1$. Therefore, all positive edge distances in $\textbf{x}^*$ equal some constant $c \geq 0$. Note that this implies that $x^*_{ij} = \min \{ 1, c \cdot \mathit{dist}(i,j)\}$ for each $(i,j) \notin E$. 
	
	For the rest of the proof, we restrict our attention to LP feasible solutions that take this form. We will use $\lp(\lambda, x)$ to denote the value of the LP relaxation for the given value of $\lambda$, with all positive edges having distance $x$. We have shown that
	\[\lp(\lambda) = \min_{x\geq 0} \lp(\lambda,x),\]
	 i.e., for any $\lambda$, the overall minimum value of the LP is obtained by choosing the best value for positive edge length.
	
	\paragraph*{Step 2: Bounding $c$ from below.}
	Next we show that when $\lambda \geq 8/n^2$, the constant $c$ satisfies $\frac{1}{c} \leq \frac{n}{2}$.
	
	Assume that $c \in [0,2/{n})$. Then the shortest (positive) path distance between any two nodes is less than or equal to $n/2$, i.e., $x_{ij} = c \cdot \mathit{dist}(i,j) < 1$, for every pair $i < j$. The value of $\lp(\lambda,c)$ can be expressed succinctly by considering the LP cost associated at node~1, multiplying by the number of nodes~$n$, and then dividing by 2 since each (positive and negative) edge will be counted twice. Node 1 participates in two positive edges, $(1,n)$ and $(1,2)$, with LP costs of $c$:
	\[ (\text{LP cost of 2 positive edges}) = 2c.\]
	For each distance $t = 1, 2, \hdots, \frac{n}{2} -1$, node 1 shares in two negative edges with nodes at distance $t$, for an LP cost of:
	\begin{align*}
	\Big(\text{LP cost for $i$ s.t.\ } \mathit{dist}(1,i) < \frac{n}{2}\Big)  &= 2 \lambda \sum_{i = 1}^{n/2 -1} (1- ci) 
	= 2 \lambda\left( \frac{n}{2} - 1 - \frac{cn^2}{8} + \frac{cn}{4} \right).
	\end{align*}
	Finally, node 1 shares in a negative edge with a single node, with node ID ($\frac{n}{2} + 1$), at distance exactly $n/2$. Putting these together with their corresponding LP costs, we find that if $x_{i,i+1} = c$ for all $i = 1, 2, \hdots , n$, the LP score can be expressed as:
	\begin{align*}
	\lp(\lambda, c)  &= \frac{n}{2} \left(2c + 2 \lambda \sum_{i = 1}^{n/2 -1} (1- ci) + \lambda\left(1 - \frac{nc}{2}\right)\right)\\
	&= n \left(c +  \lambda \left( \frac{n}{2} - 1 - \frac{cn^2}{8} +\frac{cn}{4} \right) +  \frac{\lambda}{2} \left(1 - \frac{nc}{2}\right) \right)\\
	&= nc \left(1 - \frac{\lambda n^2}{8}\right) +  \lambda {n \choose 2}.
	\end{align*}
	Recall that we have assumed $c < 2/n$, and that $\lambda \geq 8/n^2$. If $c = 0$, the LP cost is equal to $\lambda {n \choose 2}$. Therefore, for any $c \in (0, 2/n)$,
	\begin{align*}
	\lp(\lambda, 0) =  \lambda {n \choose 2} &\geq\lp(\lambda, c) = nc \left(1 - \lambda\frac{n^2}{8}\right) +  \lambda {n \choose 2}\\
	&\geq\lp\left(\lambda, \frac{2}{n} \right) = 2 \left(1 - \lambda\frac{n^2}{8}\right) +  \lambda {n \choose 2}.
	\end{align*}
	In other words, when $\lambda > 8/n^2$, we obtain a smaller (i.e., better) LP score if we set all positive edge distances to be $x_{i,i+1} = 2/n$, rather than $x_{i,i+1} = c$ for any $c < 2/n$. If $\lambda = 8/n^2$, then setting $x_{i,i+1} = 2/n$ yields the same result as setting $x_{i,i+1} = c$ for any $c < 2/n$. We conclude that there exists a solution to the LambdaPrime relaxation in which $x_{i,i+1} = c$ for some $c \geq 2/n$ whenever $\lambda \geq 8/n^2$. An important consequence is that $x_{i, i+ n/2} = 1$ for every $i = 1, 2, \hdots ,n$. Thus, we will not incur any LP cost for negative edges between nodes that are at distance $n/2$ from each other. We can then express the optimal LambdaPrime LP relaxation score as $\lp(\lambda) = \min_{x \geq 2/n} \lp(\lambda, x)$ where
	\begin{align}
	\label{LPx}
	\lp(\lambda, x) = n \left(x + \lambda \sum_{i = 1}^{\floor*{\frac{1}{x}}} (1 - xi)   \right) = n \left(x + \lambda \floor*{\frac{1}{x}} - \frac{\lambda x}{2} \floor*{\frac{1}{x}} \left( \floor*{\frac{1}{x}} + 1\right) \right).
	\end{align} 

	\paragraph*{Step 3: Proving integrality of $1/c$.}
	Assume that $\lp(\lambda,x)$ is minimized over all $x \geq 2/n$ by some $c$ such that $1/c$ is \emph{not} an integer. Let $t = \floor*{\frac{1}{c}}$, which must be less than $n/2$ since $c > 2/n$. Select two new values $c_*$ and $c^*$ satisfying:
	\[ \frac{2}{n} <\frac{1}{t} = c_* < c < c^* < \frac{1}{t+1}.\]
	The values $c_*$ and $c^*$ are chosen so that they bound $c$ above and below, but applying the floor function to their reciprocal yields the same integer $t$. Therefore:
	\begin{equation}
	\label{newlp}
	\lp(\lambda, c) =n \left(c + \lambda t - \frac{\lambda c}{2} t \left( t+ 1\right) \right) =nc \left( 1 - \lambda \frac{t(t+1)}{2} \right) + \lambda n t.
	\end{equation}
	Observe finally that very similar expressions can be obtained for $\lp(\lambda, c^*)$ and $\lp(\lambda, c_*) = \lp(\lambda, 1/t)$. If $1 - \lambda t(t+1)/2 < 0$, then $\lp(\lambda, c^*) < \lp(\lambda,c)$. If $1 - \lambda t(t+1)/2 > 0$, then $\lp(\lambda, 1/t) < \lp(\lambda,c)$. Thus, for either of these two cases, it is strictly better to either increase or decrease the value of $c$, which would contradict the optimality of the LP relaxation when $c$ is chosen as the positive edge distance. The only remaining possibility is that $1 - \lambda t(t+1)/2  = 0$, but in this case we would obtain the same LP score by setting $x_{i,i+1} = \frac{1}{t}$ for all $i = 1, 2, \hdots ,n$. 
	We conclude therefore that there exists a solution in which the optimal positive edge distance is the reciprocal of an integer. We can adapt~\eqref{newlp} to see that the LambdaPrime LP relaxation can be expressed as
	\begin{equation}
	\lp(\lambda) = \min_{t \in \mathbb{N}} \, n \left( \frac{1}{t} + \lambda t - \lambda \frac{(t+1)}{2}\right) =  \min_{t \in \mathbb{N}} \, \frac{n}{t} \left( 1 + \lambda {t \choose 2} \right) .
	\end{equation}
\end{proof}

\subsection{Lower Bounding the LP Relaxation}
Next we define a lower bound on the linear programming relaxation function given in~\eqref{eq:lp} that is tight for certain choices of $\lambda$. 
\begin{lemma}
	\label{lem:g}
	Let $g: [0,1] \rightarrow \mathbb{R}_{+}$ be defined by
	\begin{equation}
	\label{eq:g} g(\lambda) = n \left( \sqrt{2\lambda} - \frac{\lambda }{2}\right).
	\end{equation}
	Then $g(\lambda) \leq \lp(\lambda) $ for every $\lambda \in [0,1]$. 
\end{lemma}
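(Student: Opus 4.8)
The plan is to reduce the bound $g(\lambda) \le \lp(\lambda)$ to a single elementary inequality and then dispatch it with AM--GM. By Theorem~\ref{thm:ringlp}, on the relevant range we have $\lp(\lambda) = \min_{t \in \mathbb{N}} \frac{n}{t}\left(1 + \lambda \binom{t}{2}\right)$, so it suffices to show that \emph{every} term of this minimum is at least $g(\lambda)$; that is, for each integer $t \ge 1$ I would prove
\[
\frac{n}{t}\left(1 + \lambda \binom{t}{2}\right) \;\ge\; n\sqrt{2\lambda} - \frac{n\lambda}{2}.
\]
Because the inequality is established term by term, there is no need to locate the minimizing $t$, and the argument holds uniformly in $\lambda$.

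First I would expand the left-hand side. Since $\binom{t}{2} = t(t-1)/2$, the $t$-th term equals $\frac{n}{t} + \frac{n\lambda t}{2} - \frac{n\lambda}{2}$. Cancelling the common $-\frac{n\lambda}{2}$ against the corresponding term of $g$ and dividing through by $n > 0$, the desired inequality collapses to
\[
\frac{1}{t} + \frac{\lambda t}{2} \;\ge\; \sqrt{2\lambda},
\]
which is exactly AM--GM applied to the two nonnegative quantities $1/t$ and $\lambda t/2$, whose product is $\lambda/2$, so their arithmetic mean dominates $\sqrt{\lambda/2} = \sqrt{2\lambda}/2$. Equivalently, one may minimize the real function $t \mapsto 1/t + \lambda t/2$ over $t > 0$ by calculus: the minimizer is $t = \sqrt{2/\lambda}$ and the minimum value is $\sqrt{2\lambda}$, which shows that relaxing $t$ from an integer to a positive real loses nothing for the lower bound.

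The only delicate point, and the one I would state carefully, is the range of $\lambda$: Theorem~\ref{thm:ringlp} furnishes the closed form only for $\lambda \in \left[8/n^2, 1/2\right]$, whereas the lemma asserts the bound on all of $[0,1]$. This is harmless, because the per-term inequality $\frac{n}{t}\left(1 + \lambda \binom{t}{2}\right) \ge g(\lambda)$ holds for \emph{every} $\lambda \ge 0$ and every integer $t \ge 1$ (AM--GM imposes no restriction on $\lambda$), so wherever $\lp$ is given by the minimum of these terms the bound follows, and at the endpoint $\lambda = 0$, where $g(0) = 0$, it is immediate. I therefore expect no genuine obstacle here—the entire content is the AM--GM step—only the bookkeeping of confirming the single-term reduction and noting that equality in AM--GM occurs precisely when $t = \sqrt{2/\lambda}$. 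That last observation foreshadows that $g$ is tight exactly at those $\lambda$ for which $\sqrt{2/\lambda}$ is an integer, a fact presumably exploited in the subsequent matching-upper-bound and tightness arguments.
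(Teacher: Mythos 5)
Your proof is correct and is essentially the paper's own argument: the paper minimizes $h(t) = \frac{n}{t} + \frac{\lambda n (t-1)}{2}$ over real $t > 0$ by calculus, finding the minimizer $t = \sqrt{2}/\sqrt{\lambda}$ and minimum value $g(\lambda)$, which is precisely your AM--GM/continuous-relaxation step stated term by term. Your closing remark about the range mismatch---Theorem~\ref{thm:ringlp} supplies the closed form only on $\left[8/n^2, 1/2\right]$ while the lemma asserts $[0,1]$---is actually more careful than the paper, which uses the formula without comment; the bound genuinely only follows where that characterization of $\lp(\lambda)$ is valid (indeed, for $0 < \lambda < 8/n^2$ the all-zero feasible solution gives $\lp(\lambda) \le \lambda \binom{n}{2} < g(\lambda)$), and that restricted range is all the paper needs downstream.
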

\begin{proof}
	For a fixed $\lambda$ consider the function
	\[ h(t) = \frac{n}{t} \left( 1 + \lambda {t \choose 2} \right) = \frac{n}{t} + \frac{\lambda n (t-1)}{2} \, .\]
%
	It is easy to check that $t = \sqrt{2}/\sqrt{\lambda}$ is a minimizer for the function $h$ over all $t \in \mathbb{R}_+$. Thus,
	\begin{align*}
	\min_{t \in \mathbb{R}_+} h(t) &= h (\sqrt{2}/\sqrt{\lambda})
	= n \sqrt{2\lambda} - \frac{\lambda n}{2} = g(\lambda).
	\end{align*}
	This shows that
	\begin{equation}
	\label{eq:fun}
	g(\lambda) =  \min_{t \in \mathbb{R}_+ }  \frac{n}{t} \left( 1 + \lambda {t \choose 2} \right) \leq \min_{t \in \mathbb{N}}  \frac{n}{t} \left( 1 + \lambda {t \choose 2} \right) = \lp(\lambda).
	\end{equation}	
\end{proof}

\subsection{Upper Bounding $\opt(\lambda)$ and $\lp(\lambda)$}\label{sec:upper_bound_opt_lp}
In general we know that $g(\lambda) \leq \lp(\lambda) \leq \opt(\lambda)$, where $\opt(\lambda)$ denotes the optimal LambdaPrime clustering score. In this section, we will show that the LambdaPrime objective and its LP relaxation can also be bounded above in terms of $g$. 

\subsubsection*{Special Values of $\lambda$ on the Ring}
\label{speciallam}
We begin by outlining a sequence of logarithmically spaced values of $\lambda \in \left[8/n^2,1/2\right]$ at which $g(\lambda) = \lp(\lambda) = \opt(\lambda)$ for class of ring graphs $G_k$. Let 
\[\lambda_1 = \frac{8}{n^2} = \frac{8}{2^{2k}} = \frac{2}{2^{2(k-1)}},\]
and more generally define
\begin{equation}
\label{eq:speciallam}
\lambda_{i} = \frac{2}{2^{2(k-i)}} \text{ for $i = 1, 2, \hdots, (k-1)$.}
\end{equation}
Note that for this sequence of $\lambda$ values, the value of the parameter $t$ that minimizes $h(t) = \frac{n}{t}\left(1 + \lambda {t \choose 2} \right)$ over all $t \geq 0$, is in fact an integer $t_i$:
\begin{align}
\label{opt-t}
t_i = \frac{\sqrt{2}}{\sqrt{\lambda_i}} = \frac{\sqrt{2 ( 2^{2k - 2i})}}{\sqrt{2}} = 2^{k-i}.
\end{align}
Thus $g(\lambda_i) = \lp(\lambda_i)$, as can be seen from~\eqref{eq:fun}. Furthermore, when $t = 2^{k-i}$, we can obtain a LambdaPrime score equal to $g(\lambda_i) = \lp(\lambda_i)$ by separating nodes into $\frac{n}{t_i}$ clusters made up of a path of $t_i$ nodes each, so we also have $\opt(\lambda_i) = g(\lambda_i)$.

\subsubsection*{A Useful Piecewise Linear Approximation}
Evaluating the LP relaxation at $\lambda_\ell$ for $\ell = 1, 2, \hdots ,(k-1)$ produces feasible LP solutions $(x_{ij}^1), (x_{ij}^2), \hdots, (x_{ij}^{k-1})$ that are optimal for $\lambda_1, \lambda_2, \hdots, \lambda_{k-1}$, respectively. As noted previously, the optimal ILP solutions also optimize the LP relaxation at these values of $\lambda$, so we will assume without loss of generality that $(x_{ij}^\ell)$ is a binary vector and encodes a clustering. Not only is $(x_{ij}^\ell)$ optimal for the resolution parameter $\lambda_\ell$, but it will also be a good approximate solution for the LP (or ILP) for other nearby values of $\lambda$. This fact can be used to define a new function that bounds $\lp$ and $\opt$ from above, while still being a good approximation.

More precisely, define a function $f: [\frac{8}{n^2}, \frac{1}{2}] \rightarrow \mathbb{R}^+$ so that $f(\lambda_i) = \lp(\lambda_i)$ for~$i = 1, 2, \hdots, (k-1)$. For~$\lambda \in [\lambda_i, \lambda_{i+1})$, the function is defined as follows for $i = 1, 2, \hdots , k-2$,
\begin{equation}
\label{ffunction}
f(\lambda) = \begin{cases}
\lp(\lambda, \frac{1}{t_i}), & \text{ if $\lambda \in [\lambda_i, 2\lambda_i = \frac{1}{2}\lambda_{i+1})$ }\\
\lp(\lambda, \frac{1}{t_{i+1}}), & \text{ if $\lambda \in [\frac{1}{2}\lambda_{i+1}, \lambda_{i+1} ).$}
\end{cases}
\end{equation}
where $\lp(\lambda, x)$ is defined in~\eqref{LPx}, and we recall that $\lp(\lambda,x) \geq \lp(\lambda)$. Notice then that we have the following set of inequalities for all $\lambda \in [\lambda_1, \lambda_{k-1}]$:
\[ g(\lambda) \leq \lp(\lambda) \leq \opt(\lambda) \leq f(\lambda)\,.\]
Intuitively speaking, the function $f$ corresponds to a simple strategy for approximating the LambdaPrime objective (and simultaneously, the LP relaxation), which is similar to the approach used in Section~\ref{sec:approx}. First it obtains a set of solutions $\textbf{X} = \{(x_{ij}^1), (x_{ij}^2), \hdots, (x_{ij}^{k-1}) \}$ that optimize both the LambdaPrime ILP and the LP relaxation at $\lambda_1, \lambda_2, \hdots, \lambda_{k-1}$, where~$\lambda_i$ is defined as in~\eqref{eq:speciallam}. For any other value of $\lambda$, the function $f$ reports the LambdaPrime score corresponding to the feasible solution $(x_{ij}^\ell) \in \textbf{X}$ that best approximates the objective at that $\lambda$. The result is a piecewise-linear upper bound on $\lp$ and $\opt$ that remains a good approximation for both.
The main result of this subsection is obtaining an upper bound on $\lp$ and $\opt$ via the function $f$.
\begin{lemma}
	\label{lemma:gbounds}
	For every $\lambda \in \left[\frac{8}{n^2},\frac12\right]$, for the class of ring graphs $G_k$, we have
	\begin{equation}
	\label{eq:gbounds2}
	g(\lambda) \leq \lp(\lambda) \leq \opt(\lambda) \leq \sqrt{2} g(\lambda).
	\end{equation} 
\end{lemma}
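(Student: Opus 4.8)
The lower bound $g(\lambda) \le \lp(\lambda)$ is precisely Lemma~\ref{lem:g}, and $\lp(\lambda) \le \opt(\lambda)$ holds because~\eqref{eq:lcclp} is a relaxation of~\eqref{eq:lccilp}. Moreover, the discussion preceding the statement already establishes the chain $g(\lambda) \le \lp(\lambda) \le \opt(\lambda) \le f(\lambda)$ for the piecewise-linear upper bound $f$ of~\eqref{ffunction}. Hence the whole lemma reduces, by transitivity, to the single inequality $f(\lambda) \le \sqrt{2}\,g(\lambda)$ for all $\lambda \in [8/n^2,1/2]$, and this is what I would concentrate on.

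My plan is to turn this into a one-variable algebraic inequality. First I would unify the two cases of~\eqref{ffunction}: on each piece, $f(\lambda) = \lp(\lambda,1/t) = \frac{n}{t}\big(1 + \lambda{t \choose 2}\big)$ for a fixed power of two $t \ge 2$. Using $\lambda_i = 2/t_i^2$ and $t_{i+1} = t_i/2$ (from~\eqref{eq:speciallam} and~\eqref{opt-t}), a short computation shows that a given $t$ is the one used by $f$ on $\lambda \in [1/t^2, 4/t^2)$: the first branch of~\eqref{ffunction} supplies $[2/t^2, 4/t^2)$, and the second branch, read with $t_{i+1} = t$, supplies $[1/t^2, 2/t^2)$. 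It therefore suffices to prove, for every integer $t \ge 2$ and every $\lambda \in [1/t^2, 4/t^2)$ (a superset of the relevant domain),
\[
\frac{n}{t}\Big(1 + \lambda{t \choose 2}\Big) \le \sqrt{2}\, n\Big(\sqrt{2\lambda} - \tfrac{\lambda}{2}\Big).
\]

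I would then substitute $\lambda = r/t^2$, so that $r \in [1,4)$, and set $v = 1/t \in (0,1/2]$. Dividing by $n/t$ and simplifying, the inequality becomes equivalent to
\[
1 + \tfrac{r}{2} - 2\sqrt{r} \le \tfrac{rv}{2}\,(1-\sqrt{2}).
\]
Because the right-hand side is decreasing in $v$ (its coefficient is negative), the binding case is $v = 1/2$, i.e.\ the smallest cluster length $t = 2$. Writing $w = \sqrt{r} \in [1,2]$ reduces the claim to nonnegativity on $[1,2]$ of the quadratic $\phi(w) = \tfrac{w^2}{4}(1-\sqrt{2}) - \tfrac12(w^2 - 4w + 2)$. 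Since $\phi$ is concave (its leading coefficient is negative), its minimum over $[1,2]$ occurs at an endpoint, and evaluating $\phi(1) = (3-\sqrt{2})/4$ and $\phi(2) = 2-\sqrt{2}$, both positive, finishes the argument.

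The main obstacle, and the reason the loose constant $\sqrt{2}$ appears rather than something sharper, is the correction term $v = 1/t$ arising from the discreteness of cluster sizes: the smaller $t$ is, the more $\lp(\lambda,1/t)$ can overshoot the continuous minimizer $g$, so one must verify that even the extreme case $t = 2$ remains below $\sqrt{2}\,g$. The reduction above isolates this worst case cleanly, and the strict positivity of $\phi$ confirms that $\sqrt{2}$ is chosen with deliberate slack --- exactly the looseness that the subsequent Magnanti--Stratila concave-approximation argument relies on.
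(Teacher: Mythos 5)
Your proof is correct, and it reaches the key inequality by a genuinely different route than the paper. Both arguments begin the same way: reduce the lemma, via the chain $g(\lambda) \le \lp(\lambda) \le \opt(\lambda) \le f(\lambda)$ established in the discussion preceding the statement, to showing $f(\lambda) \le \sqrt{2}\,g(\lambda)$. From there the paper keeps the two branches of~\eqref{ffunction} separate: for $\lambda \in [\lambda_i, 2\lambda_i)$ it bounds $f(\lambda) = \lp(\lambda,1/t_i)$ term by term using $\lambda_i \le \lambda < 2\lambda_i$, and for $\lambda \in [\lambda_{i+1}/2, \lambda_{i+1})$ it repeats the estimate with the inequalities reversed, in both cases absorbing the loss into the factor $\sqrt{2}$ via the observation that $\sqrt{\lambda}/\sqrt{2} - \lambda/2 > 0$. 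You instead merge the two branches into the single claim that $f(\lambda) = \lp(\lambda,1/t)$ on $\lambda \in [1/t^2, 4/t^2)$ (your bookkeeping here is right: the first branch gives $[2/t^2,4/t^2)$ and the second, read with $t_{i+1}=t$, gives $[1/t^2,2/t^2)$, and the union over the relevant $t$ covers $[8/n^2,1/2]$, the endpoint $\lambda=1/2$ arising as $t=2$, $r=2$), then rescale by $\lambda = r/t^2$ and exploit monotonicity in $v = 1/t$ to isolate the single worst case $t=2$, which you settle by checking a concave quadratic at its endpoints. I verified the algebra: the reduced inequality $1 + r/2 - 2\sqrt{r} \le \tfrac{rv}{2}(1-\sqrt{2})$, the monotonicity claim, and the values $\phi(1) = (3-\sqrt{2})/4$ and $\phi(2) = 2-\sqrt{2}$ are all correct. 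The paper's computation is shorter and stays closer to the definitions; your version buys a clean identification of the extremal configuration (smallest cluster size $t=2$), a slightly more general statement (all integers $t \ge 2$, not only powers of two), and a uniform strictly positive lower bound on $\sqrt{2}\,g - f$, which substantiates your closing remark that the constant $\sqrt{2}$ carries deliberate slack.
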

\begin{proof}
	Consider any $\lambda \in [\frac{8}{n^2}, \frac{1}{2}]$. We know that $\lambda \in [\lambda_i, \lambda_{i+1}]$ for some $i \in \{1,2, \hdots, k-2\}$, where the $\lambda_i$ are values of the parameter at which the optimal solutions to $g$, $\lp$, and $\opt$ coincide (Section~\ref{speciallam}). Remember also that for each $\lambda_i$ we had also defined $t_i = \sqrt{2}/\sqrt{\lambda_i}$.
	For any value of $\lambda$ we consider, we know $f(\lambda) = \lp(\lambda, {1}/{t_i})$ for some $i \in \{1, 2, \hdots ,(k-1)\}$. We will show that $f(\lambda) \leq \sqrt{2} g(\lambda)$ by considering two cases. 
	
	\paragraph*{Case 1: $\lambda \in [\lambda_i, 2\lambda_i)$.}
	In this case, by construction, the function $f$ approximates $g$ and $\lp$ by using $\frac{1}{t_i}$ as the distance between nodes in the LP, where $t_i$ is defined in~\eqref{opt-t}. Plugging $\frac{1}{t_i}$ into the expression for the LP score given in~\eqref{LPx}, we get
	\begin{align*}
	f(\lambda) &= \frac{n}{t_i} \left( 1 + \lambda {t_i \choose 2} \right)  = \frac{n\sqrt{\lambda_i}}{\sqrt{2}}  + \frac{n\lambda \sqrt{2}}{2 \sqrt{\lambda_i}} - \frac{n\lambda}{2}\\
	& \leq \frac{n\sqrt{\lambda}}{\sqrt{2}}  + \frac{n\lambda \sqrt{2}}{\sqrt{2} \sqrt{\lambda}} - \frac{n\lambda}{2}\leq \sqrt{2}\left(\frac{n\sqrt{\lambda}}{\sqrt{2}} - \frac{n\lambda}{2} + \frac{n\sqrt{\lambda}}{\sqrt{2}} \right) = \sqrt{2}g(\lambda).
	\end{align*}
	In the above, we have used the fact that since $\lambda/2 < 1$, we have $\sqrt{\lambda}/{\sqrt{2}} - \lambda / 2 > 0$.
	
	\paragraph*{Case 2: $\lambda \in \left[\frac{\lambda_{i+1}}{2}, \lambda_{i+1}\right)$.}
	To make notation slightly easier, we set $j = i+1$, and prove the result for $\big[\frac{\lambda_{j}}{2}, \lambda_{j}\big)$. For this case, we instead use the fact that $\frac{\lambda_j}{2} \leq \lambda \leq \lambda_j$, and we can show a similar set of steps to reach the result.
	\begin{align*}
	f(\lambda) &= \frac{n}{t_j} \left( 1 + \lambda {t_j \choose 2} \right) = \frac{n\sqrt{\lambda_j}}{\sqrt{2}}  + \frac{n\lambda \sqrt{2}}{2 \sqrt{\lambda_j}} - \frac{n\lambda}{2}\\
	& \leq \frac{n\sqrt{2}\sqrt{\lambda}}{\sqrt{2}}  + \frac{n\lambda \sqrt{2}}{2 \sqrt{\lambda}} - \frac{n\lambda}{2} \leq \sqrt{2}\left(\frac{n\sqrt{\lambda}}{\sqrt{2}} - \frac{n\lambda}{2} + \frac{n\sqrt{\lambda}}{\sqrt{2}} \right) = \sqrt{2}g(\lambda).
	\end{align*}
	So the result is proven.
\end{proof}

We can now bound $\lp$ and $\opt$ above and below in terms of the square root function.
\begin{lemma}
	Let $q(\lambda) = \frac{3n}{4}\sqrt{2\lambda}.$ For all $\lambda \in \left[\frac{8}{n^2},\frac12\right]$, 
	\begin{equation}
	\label{eq:qbound}
	q(\lambda) \leq \lp(\lambda) \leq \opt(\lambda) \leq \frac{4\sqrt{2}}{3} q(\lambda).
	\end{equation}
\end{lemma}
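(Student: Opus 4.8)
The plan is to deduce both outer inequalities directly from the bounds already established in Lemma~\ref{lem:g} and Lemma~\ref{lemma:gbounds}, which sandwich $\lp(\lambda)$ and $\opt(\lambda)$ between $g(\lambda)$ and $\sqrt{2}\,g(\lambda)$. The middle inequality $\lp(\lambda) \leq \opt(\lambda)$ is immediate, since the LP is a relaxation of the ILP. So the entire task reduces to comparing the pure square-root function $q$ against $g(\lambda) = n(\sqrt{2\lambda} - \lambda/2)$ via two elementary inequalities, each of which I expect to hold throughout the range $\lambda \in [8/n^2, 1/2]$.

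For the lower bound, I would first show $q(\lambda) \leq g(\lambda)$ and then chain with Lemma~\ref{lem:g}. Dividing through by $n$, the claim $\frac{3}{4}\sqrt{2\lambda} \leq \sqrt{2\lambda} - \lambda/2$ rearranges to $\lambda/2 \leq \frac{1}{4}\sqrt{2\lambda}$, equivalently $2\lambda \leq \sqrt{2\lambda}$, equivalently $2\lambda \leq 1$. Since $\lambda \leq 1/2$ on the interval of interest, this holds, giving $q(\lambda) \leq g(\lambda) \leq \lp(\lambda)$.

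For the upper bound, I would use $\opt(\lambda) \leq \sqrt{2}\,g(\lambda)$ from Lemma~\ref{lemma:gbounds} and check that $\sqrt{2}\,g(\lambda) \leq \frac{4\sqrt{2}}{3} q(\lambda)$. A short computation shows $\frac{4\sqrt{2}}{3} q(\lambda) = 2n\sqrt{\lambda}$, so the inequality is equivalent to $g(\lambda) \leq n\sqrt{2\lambda}$. This is trivially true because $g(\lambda) = n\sqrt{2\lambda} - n\lambda/2 \leq n\sqrt{2\lambda}$, the dropped term being nonnegative. Combining, $\opt(\lambda) \leq \sqrt{2}\,g(\lambda) \leq \frac{4\sqrt{2}}{3} q(\lambda)$.

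There is no genuine analytical obstacle here; all the difficulty lives in the preceding lemmas that produce the $g$-based bounds. The only substantive point is the bookkeeping of constants -- confirming that the chosen coefficient $3/4$ in $q$ and the factor $4\sqrt{2}/3$ are exactly what make $q$ fit between $g$ and $\sqrt{2}\,g$ across the whole parameter range. The payoff of this reformulation, which motivates stating the lemma, is that $q$ is a \emph{pure} scaled square root; this is precisely the form needed to invoke the Magnanti--Stratila lower bound for approximating $\sqrt{\lambda}$ by piecewise-linear functions in the subsequent argument.
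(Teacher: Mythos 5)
Your proposal is correct and follows essentially the same route as the paper: both reduce the lemma to the two comparisons $q(\lambda) \leq g(\lambda)$ (which boils down to $\sqrt{2\lambda} \geq 2\lambda$, i.e.\ $\lambda \leq 1/2$) and $g(\lambda) \leq n\sqrt{2\lambda} = \tfrac{4}{3}q(\lambda)$ (dropping the nonnegative term $n\lambda/2$), and then chain these with the sandwich $g \leq \lp \leq \opt \leq \sqrt{2}\,g$ from Lemma~\ref{lemma:gbounds}. The only difference is cosmetic: the paper verifies $q \leq g$ by rewriting $q(\lambda) = n\bigl(\sqrt{2\lambda} - \tfrac{\sqrt{2\lambda}}{4}\bigr)$ and comparing terms, whereas you rearrange the inequality directly.
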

\begin{proof}
	First we show the relationship between $g$ and $q$:
	\begin{align*}
	q(\lambda) &= \frac{3n}{4}\sqrt{2\lambda}= n \left(\sqrt{2\lambda} - \frac{\sqrt{2\lambda}}{4} \right)\leq n \left( \sqrt{2\lambda} - \frac{2\lambda}{4}\right)= g(\lambda) \leq n\sqrt{2\lambda} = \frac{4}{3}q(\lambda).
	\end{align*}
Together with the bounds from Lemma~\ref{lemma:gbounds}, we get the final result:
\begin{equation}
\label{eq:finalbounds}
q(\lambda) \leq g(\lambda) \leq \lp(\lambda) \leq \opt(\lambda) \leq \sqrt{2}g(\lambda) \leq  \frac{4\sqrt{2}}{3} q(\lambda). 
\end{equation}
\end{proof}

\subsection{Establishing the $\Omega(\log n)$ Lower Bound}
At this point we have shown that both the LambdaPrime ILP as well as its LP relaxation are bounded above and below in terms of the square root function. We recall now a result of Magnanti and Stratila~\cite{magnanti2012separable} that lower bounds the number of linear pieces needed to approximate the square root function in a given interval. Afterwards, we will use this to show that the inequality in~\eqref{eq:finalbounds} implies a similar lower bound on the number of feasible solutions needed to approximate the LambdaPrime objective or its LP relaxation in every parameter regime on the ring graph.
\begin{theorem}
	\label{thm:magnanti}
	(Theorem 2 in~\cite{magnanti2012separable}, adapted.)
	Let $\rho: \mathbb{R}_+ \rightarrow \mathbb{R}_+ $ be a piecewise-linear function that approximates $\psi(x) = \sqrt{x}$ to within a factor $q > 1$ on $[a,b]$. Then $\rho$ must contain at least $\ceil*{\frac{1}{3} \log_{\gamma(q)} \frac{b}{a}}$ linear pieces, where $\gamma(q) = (2q^2 - 1 + 2q\sqrt{q^2 - 1})^2$.
\end{theorem}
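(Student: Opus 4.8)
The plan is to reduce the statement to a bound on the multiplicative length of the interval that a single linear piece can cover, and then to finish with a short covering argument. Let $P$ be the number of linear pieces of $\rho$, and let its breakpoints partition $[a,b]$ into subintervals $[t_0,t_1],[t_1,t_2],\ldots,[t_{P-1},t_P]$ with $t_0=a$, $t_P=b$; on each subinterval $\rho$ coincides with a single affine function $\ell(x)=\alpha x+\beta$ that must satisfy $\tfrac1q\sqrt{x}\le \ell(x)\le q\sqrt{x}$. The crux is therefore to bound, for any such affine function, the largest ratio $d/c$ of an interval $[c,d]$ on which it can stay within a factor $q$ of $\sqrt{x}$.

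To obtain this single-piece bound I would substitute $u=\sqrt{x}$, which turns the requirement $\tfrac1q\sqrt{x}\le\alpha x+\beta\le q\sqrt{x}$ into $\tfrac1q\le \alpha u+\beta/u\le q$ on $[\sqrt{c},\sqrt{d}]$. In the extremal (balanced) configuration $\alpha,\beta>0$, the map $u\mapsto\alpha u+\beta/u$ is convex with a single interior minimum $2\sqrt{\alpha\beta}$, so the lower inequality forces $\alpha\beta\ge 1/(4q^2)$, while the locus where $\alpha u+\beta/u\le q$ is an interval whose endpoint ratio increases as $\alpha\beta$ decreases; the widest interval thus occurs at $\alpha\beta=1/(4q^2)$. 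Carrying out this one-variable optimization yields a closed-form maximal ratio in $u$, and squaring back to $x$ gives a maximal spread that one checks is at most $\gamma(q)^3=(q+\sqrt{q^2-1})^{12}$ (the degenerate cases with $\alpha\le 0$ or $\beta\le 0$ only give smaller spread). This algebraic step—producing the extremal spread in closed form and bounding it cleanly by a power of $\gamma(q)$—is where essentially all the work lies and is the main obstacle; the looseness absorbed into the exponent $3$ is precisely what makes the final constant clean and independent of how close $q$ is to $1$.

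With the single-piece spread bounded by $\gamma(q)^3$, the proof concludes by a covering count. Since the subintervals partition $[a,b]$, their endpoint ratios multiply to $b/a$, and each is at most $\gamma(q)^3$, so
\[
\frac{b}{a}=\prod_{i=1}^{P}\frac{t_i}{t_{i-1}}\le\bigl(\gamma(q)^3\bigr)^{P}.
\]
Taking logarithms gives $P\ge\log_{\gamma(q)^3}(b/a)=\tfrac13\log_{\gamma(q)}(b/a)$, and since $P$ is a positive integer we obtain $P\ge\ceil*{\tfrac13\log_{\gamma(q)}\tfrac{b}{a}}$, which is exactly the claimed bound on the number of linear pieces.
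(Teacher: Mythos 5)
Your proposal is correct, but it takes a genuinely different route from the paper, which never actually proves this statement: the theorem is quoted as an adaptation of Theorem~2 of Magnanti and Stratila, and the paper explicitly defers the proof to that reference, noting only that the statement has been rephrased for a general factor $q>1$. Your argument --- bound the multiplicative spread $d/c$ coverable by a single affine piece, then telescope the spreads across the pieces of $\rho$ --- is self-contained, and its two load-bearing claims do check out. In the balanced case the extremal spread is attained when $[\sqrt{c},\sqrt{d}]$ is the full sublevel set $\{u:\alpha u+\beta/u\le q\}$ and $2\sqrt{\alpha\beta}=1/q$, which gives, writing $s=\sqrt{q^2-1/q^2}$ and using $(q-s)(q+s)=1/q^2$,
\[
\frac{d}{c}=\left(\frac{q+s}{q-s}\right)^{2}=\left(q^{2}+\sqrt{q^{4}-1}\right)^{4},
\]
and since $\gamma(q)=(2q^2-1+2q\sqrt{q^2-1})^2=(q+\sqrt{q^2-1})^{4}$, the inequality you need, $(q^{2}+\sqrt{q^{4}-1})^{4}\le\gamma(q)^3$, reduces via $q=\cosh\theta$ to $\cosh^2\theta\le\cosh 3\theta$, i.e.\ $(4\cosh\theta+3)(\cosh\theta-1)\ge 0$, which holds for all $q\ge 1$; the telescoping count then finishes the proof exactly as you say. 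Two caveats, neither fatal. First, besides $\alpha\le 0$ or $\beta\le 0$ you must also dispose of the case $\alpha,\beta>0$ with the minimizer of $\alpha u+\beta/u$ lying outside $[\sqrt{c},\sqrt{d}]$ (the function is then monotone on the interval); a short computation bounds that spread by $(q^{2}+\sqrt{q^{4}-1})^{2}$, which is smaller still, so your conclusion stands. Second, your two-sided reading of ``approximates within factor $q$'' is more permissive than the one-sided notion the paper actually uses ($\phi\le\rho\le q\,\phi$, as in Lemma~\ref{lemma:hardtransfer}), so your lower bound is if anything stronger than what Theorem~\ref{thm:logn} consumes; under the one-sided reading the same computation gives single-piece spread at most $\gamma(q)$ and hence the cleaner bound $\lceil\log_{\gamma(q)}(b/a)\rceil$ --- your exponent $3$ is precisely the price of the two-sided generality. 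What each approach buys: the paper's citation keeps the exposition short, while your derivation makes the result self-contained and makes transparent where the constant $\gamma(q)$ and the factor $1/3$ come from.
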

A proof of the result is given in the original work~\cite{magnanti2012separable} for the case where $q = (1+\varepsilon)$ for some~$\varepsilon > 0$. We have slightly adapted the statement to use a general approximation factor~$q > 1$, and have slightly changed notation. We note that the result does not change if we scale the function $\psi(x) = \sqrt{x}$ by a constant.

We use this result in order to establish the need for~$\Omega(\log n)$ feasible solutions to approximate the LambdaPrime LP relaxation on ring graphs. To do so we start with a general lemma regarding approximation via piecewise-linear functions. To state the lemma, we will say that a function~$\psi$ is a $p$-approximation for another function~$\phi$ for all~$x \in [a,b]$, if for every~$x \in [a,b]$ we have $\phi(x) \leq \psi(x) \leq p \phi(x)$. 
\begin{lemma}
	\label{lemma:hardtransfer}
	Let $M > 1$ be a constant and let $\phi$ and $\psi$ be functions satisfying:
	\[ 0 \leq \phi(x) \leq \psi(x) \leq M \cdot \phi(x) \text{ for all $x \in [a,b]$ where $0 < a < b$.}\]
	\begin{enumerate}
		\item If $\rho: [a,b] \rightarrow \mathbb{R}_{+}$ is a piecewise-linear function such that for some $p > 1$, $\rho$ is a $p$-approximation for $\psi$ for all $x \in [a,b]$, then $\rho$ is a $(pM)$-approximation for $\phi$ for all $x \in [a,b]$.
		\item When approximating $\phi$ by a piecewise-linear function, if it takes at least $B$ linear pieces to obtain a $(pM)$-approximation for $\phi$, then it takes at least $B$ linear pieces to get a piecewise-linear $p$-approximation for $\psi$.
	\end{enumerate}
\end{lemma}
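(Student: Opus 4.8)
The plan is to prove both parts by carefully chaining the given sandwich inequality $0 \leq \phi(x) \leq \psi(x) \leq M\,\phi(x)$ with the definition of a $p$-approximation; Part~2 will then follow from Part~1 with essentially no extra work.

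For Part~1, I would first unpack the hypothesis: $\rho$ being a $p$-approximation of $\psi$ means $\psi(x) \leq \rho(x) \leq p\,\psi(x)$ for all $x \in [a,b]$. I would then establish the two sides of the desired conclusion $\phi(x) \leq \rho(x) \leq pM\,\phi(x)$ separately. The lower bound is immediate from transitivity: $\phi(x) \leq \psi(x) \leq \rho(x)$. For the upper bound I would push the factor $M$ through on the correct side, writing $\rho(x) \leq p\,\psi(x) \leq pM\,\phi(x)$, where the last step uses $\psi(x) \leq M\,\phi(x)$. Combining the two yields that $\rho$ is a $(pM)$-approximation of $\phi$.

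For Part~2, the crucial consequence of Part~1 to isolate is that every piecewise-linear $p$-approximation of $\psi$ is automatically a $(pM)$-approximation of $\phi$ \emph{using exactly the same linear pieces}, since Part~1 does not modify $\rho$ at all. Hence, letting $N_\psi(p)$ and $N_\phi(pM)$ denote the minimum numbers of linear pieces needed to $p$-approximate $\psi$ and to $(pM)$-approximate $\phi$ respectively, I obtain the comparison $N_\phi(pM) \leq N_\psi(p)$. The statement to prove --- that $N_\phi(pM) \geq B$ forces $N_\psi(p) \geq B$ --- is then immediate, since $B \leq N_\phi(pM) \leq N_\psi(p)$.

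There is no genuine analytic obstacle here; the entire content is bookkeeping of one-sided inequalities, and the only thing to watch is the orientation of the sandwich --- making sure $M$ is applied so that the upper bound transfers \emph{downward} from $\psi$ to $\phi$ while the lower bound transfers \emph{upward}. The payoff I would flag is that this lemma is precisely the device that lets us carry the Magnanti--Stratila square-root lower bound (Theorem~\ref{thm:magnanti}) across the sandwich $q(\lambda) \leq \lp(\lambda) \leq \opt(\lambda) \leq \tfrac{4\sqrt 2}{3} q(\lambda)$ from~\eqref{eq:finalbounds}, converting a lower bound on approximating $\sqrt{\cdot}$ into one on approximating $\lp$ and $\opt$.
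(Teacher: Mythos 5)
Your proposal is correct and follows essentially the same argument as the paper: Part~1 is the same chain of inequalities $\phi(x) \leq \psi(x) \leq \rho(x) \leq p\,\psi(x) \leq (pM)\,\phi(x)$, and Part~2 is the same observation that a $p$-approximation of $\psi$ reuses its linear pieces as a $(pM)$-approximation of $\phi$ (you phrase it directly via minimum piece counts, the paper phrases it as a contradiction, but these are logically the same step).
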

\begin{proof}
	The first statement is simply the observation that for every $x \in [a,b]$,
	\[\phi(x) \leq \psi(x) \leq \rho(x) \leq p \,\psi(x) \leq (p M)\, \phi(x).\]
	
	The second statement follows by contradiction. Assume that we have obtained a piecewise linear function $\rho$ with fewer than $B$ linear pieces that is within a factor $p$ of $\psi$ for every $x \in [a,b]$. By the first statement, this must mean that $\rho$ is a $(pM)$-approximation for $\phi$ with fewer than $B$ linear pieces, but this is a contradiction.
%
\end{proof}
Finally, we prove our main result of the section.
\begin{theorem}
	\label{thm:logn}
	Let $\textbf{X}$ denote a set of feasible solutions to the LambdaPrime LP on the ring graph $G_k$ for any $k \geq 3$. Assume that for every $\lambda \in \Big[\frac{8}{n^2}, \frac{1}{2}\Big]$, $\textbf{X}$ contains a feasible LP solution $(x_{ij})$ that is within a factor $p > 1$ of the optimal LP relaxation score. Then $\textbf{X}$ must contain at least $B = \ceil*{\frac{2}{3} \log_{\gamma(pM)} \frac{n}{4}}  = \Omega(\log n)$ feasible LP solutions, where $M = 4\sqrt{2}/{3}$ and $\gamma(x) = (2x^2 - 1 + 2x\sqrt{x^2 - 1})^2$.
\end{theorem}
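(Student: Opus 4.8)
The plan is to translate the question of ``how many feasible LP solutions are needed'' into the question of ``how many linear pieces are needed'' to approximate a scaled square root, and then invoke Theorem~\ref{thm:magnanti} together with the sandwich bound~\eqref{eq:finalbounds} via Lemma~\ref{lemma:hardtransfer}. First I would observe that every feasible solution $(x_{ij}) \in \textbf{X}$ has LambdaPrime objective value $P + \lambda N$, where $P = \sum_{(i,j)\in E} x_{ij}$ and $N = \sum_{i<j}(1 - x_{ij})$ are constants; that is, each solution traces out a single \emph{line} in the variable $\lambda$. Taking the pointwise minimum over the family, I define $\rho(\lambda) = \min_{(x_{ij}) \in \textbf{X}} (P + \lambda N)$. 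Since the lower envelope of $|\textbf{X}|$ lines is a concave piecewise-linear function with at most $|\textbf{X}|$ linear pieces, $\rho$ has at most $|\textbf{X}|$ pieces. Each solution is feasible, so $\rho(\lambda) \geq \lp(\lambda)$ everywhere, and the hypothesis that for every $\lambda$ some solution in $\textbf{X}$ is within a factor $p$ of optimal gives $\rho(\lambda) \leq p\,\lp(\lambda)$. Hence $\rho$ is a piecewise-linear $p$-approximation of $\lp$ on $[8/n^2, 1/2]$ using at most $|\textbf{X}|$ pieces.

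Next I would apply Lemma~\ref{lemma:hardtransfer} with $\phi = q$ and $\psi = \lp$. The chain~\eqref{eq:finalbounds} shows $q(\lambda) \leq \lp(\lambda) \leq M\,q(\lambda)$ with $M = 4\sqrt{2}/3$, so the hypotheses of the lemma hold on $[8/n^2, 1/2]$. By part~(1), $\rho$ being a $p$-approximation of $\lp$ makes it a $(pM)$-approximation of $q$; equivalently, by part~(2), any lower bound $B$ on the number of linear pieces required to $(pM)$-approximate $q$ is also a lower bound on $|\textbf{X}|$. It thus remains to lower-bound this number of pieces.

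Finally, since $q(\lambda) = \frac{3n}{4}\sqrt{2\lambda}$ is a constant multiple of $\sqrt{\lambda}$, and Theorem~\ref{thm:magnanti} is insensitive to scaling the square root by a constant, approximating $q$ to within a factor $pM$ on $[a,b] = [8/n^2, 1/2]$ requires at least $\ceil*{\frac{1}{3}\log_{\gamma(pM)} \frac{b}{a}}$ linear pieces, where $\gamma(x) = (2x^2 - 1 + 2x\sqrt{x^2-1})^2$. Here $\frac{b}{a} = \frac{1/2}{8/n^2} = \frac{n^2}{16} = (n/4)^2$, so the bound becomes
\[ \ceil*{\frac{1}{3}\log_{\gamma(pM)} (n/4)^2} = \ceil*{\frac{2}{3}\log_{\gamma(pM)} \frac{n}{4}} = B. \]
Combining with the previous paragraph yields $|\textbf{X}| \geq B$, and since $\gamma(pM)$ is a constant depending only on $p$ (recall $M$ is an absolute constant), $B = \Omega(\log n)$.

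I expect the main obstacle to be the first step: arguing cleanly that a \emph{family of feasible solutions} achieving the $p$-approximation pointwise is equivalent to a \emph{single piecewise-linear curve} with at most $|\textbf{X}|$ pieces achieving the same approximation, so that the Magnanti--Stratila lower bound on linear pieces transfers to a lower bound on the family size. The remaining steps are bookkeeping: matching the approximation factors through Lemma~\ref{lemma:hardtransfer} and verifying the interval arithmetic $b/a = (n/4)^2$ that produces the factor $\frac{2}{3}$ (rather than $\frac{1}{3}$) in $B$.
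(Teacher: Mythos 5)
Your proposal is correct and follows essentially the same route as the paper's proof: both reduce the family $\textbf{X}$ to the lower envelope of the lines $P_k + \lambda N_k$ (a concave piecewise-linear $p$-approximation of $\lp$ with at most $|\textbf{X}|$ pieces), then transfer the approximation to $q(\lambda) = \frac{3n}{4}\sqrt{2\lambda}$ via the sandwich bound~\eqref{eq:finalbounds} and Lemma~\ref{lemma:hardtransfer}, and finally invoke Theorem~\ref{thm:magnanti} with $b/a = (n/4)^2$ to get $B = \ceil*{\frac{2}{3}\log_{\gamma(pM)}\frac{n}{4}}$. Your write-up is, if anything, slightly cleaner in making the lower-envelope construction explicit up front rather than as a closing contradiction.
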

\begin{proof}
	This result is simply a combination of Lemma~\ref{lemma:hardtransfer}, Theorem~\ref{thm:magnanti}, and the bounds:
	\begin{equation}
	q(\lambda) = \frac{3n}{4}\sqrt{2\lambda} \leq \lp(\lambda) \leq \frac{4\sqrt{2}}{3} q(\lambda). 
	\end{equation}
	We are considering the interval $[a,b]$ where $a = \frac{8}{n^2}$ and $b = \frac{1}{2}$. By Theorem~\ref{thm:magnanti}, in order to get a piecewise-linear $(pM)$-approximation for $q$, it takes at least $B = \ceil*{\frac{1}{3} \log_{\gamma(pM)} \frac{b}{a}} = \ceil*{\frac{2}{3} \log_{\gamma(pM)} \frac{n}{4}} $ linear pieces. By Lemma~\ref{lemma:hardtransfer} then, every piecewise-linear $p$-approximation for $\lp$ must also have $B$ linear pieces.
	
	Recall now that for any feasible LP solution $(x_{ij}^k) \in \textbf{X}$, we can compute $P_k = \sum_{(i,j) \in E} x_{ij}$ and $N_k = \sum_{i<j} (1-x_{ij})$, so that as $\lambda$ varies, the LP score corresponding to $(x_{ij}^k)$ is a linear function $P_k + \lambda N_k$. Thus, the {family} of LP solutions $\textbf{X}$ defines an entire set of linear functions. For each $\lambda \in [8/n^2,1/2]$, we can select the LP solution which minimizes $P_k + \lambda N_k$ over $k \in\{1, 2, \hdots, |\textbf{X}| \}$, in order to build new a piecewise-linear function from $\textbf{X}$ that \emph{approximates} the piecewise-linear function $\lp (\lambda)$ for the given $\lambda$ values. If $\textbf{X}$ contained fewer than $B$ feasible LP solutions, we could use it to construct a piecewise-linear function with fewer than $B$ pieces that approximates $\lp$ to within a factor $p > 1$ for all $\lambda \in \big[\frac{8}{n^2}, \frac{1}{2} \big]$. This would contradict the conclusions of Theorem~\ref{thm:magnanti} and Lemma~\ref{lemma:hardtransfer}.
\end{proof}
Observe that in the proof of Theorem~\ref{thm:logn}, all steps will hold in exactly the same way if we consider the LambdaPrime ILP rather than the LP, so this lower bound also holds for finding approximately optimal clusterings for the LambdaPrime objective.

\subsection{Special Case: Star Graphs}
Although in the worst case we need $\Omega(\log n)$ feasible solutions to approximate the LambdaPrime LP in all parameter regimes, there are cases where this worst-case scenario does not hold. We illustrate this point by proving that for star graphs, a single LP solution is optimal for all (nontrivial) values of $\lambda$.

Recall from Section~\ref{sec:star} that for a star graph on $n$ nodes, one needs exactly $n-1$ clusterings in order to solve the LambdaPrime objective in all parameter regimes. For this graph, the minimum scaled sparsest cut is $\lambda_1 = \frac{1}{n-1}$, and for any $\lambda \geq \frac{1}{2}$, the optimal solution is to place one node with the center node, and put each other node in a singleton cluster. Despite the need for a linear number of clusterings to fully capture the optimal solution space (which we showed in Section~\ref{sec:star}), we have the following result about the LP relaxation.
\begin{theorem}\label{thm:starlp}
	Let $G$ be an $n$-node star graph where node 1 is the center node. Define a feasible solution by setting $x_{1i} = \frac{1}{2}$ for all $i = 2, 3, \hdots, n$, and by setting $x_{ij} = 1$ for $i \neq j$ when $i\neq 1$ and $j \neq 1$. This feasible solutions will optimally solve the LambdaPrime LP relaxation for every $\lambda \in \big(\frac{1}{n-1}, \frac{1}{2} \big)$.
\end{theorem}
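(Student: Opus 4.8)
The plan is to first record the value of the proposed feasible point and then show that no feasible point beats it, by exploiting the automorphism symmetry of the star. Write $m = n-1$ for the number of outer nodes. The proposed solution is feasible: the only nontrivial triangle constraint is $x_{ij} \le x_{1i} + x_{1j} = \tfrac12 + \tfrac12 = 1 = x_{ij}$ for outer pairs $(i,j)$ (so it holds with equality), and all variables lie in $[0,1]$. Its objective value in the relaxation~\eqref{eq:lcclp} is $\sum_i x_{1i} + \lambda\sum_{\text{pairs}}(1-x_{ij}) = \tfrac{m}{2} + \lambda\big(m\cdot\tfrac12 + {m \choose 2}\cdot 0\big) = \tfrac{m}{2}(1+\lambda)$, since every outer--outer pair contributes $1 - x_{ij} = 0$. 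It then remains to prove that $\tfrac{m}{2}(1+\lambda)$ is the optimal LP value.

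Second, I would reduce to symmetric solutions exactly as in Step~1 of the proof of Theorem~\ref{thm:ringlp}. Both the feasible region and the objective of~\eqref{eq:lcclp} are invariant under any permutation of the outer nodes (an automorphism of the star fixing node $1$). Hence, given any optimal solution $\mathbf{x}^*$, averaging its $m!$ permuted copies produces an optimal solution $\bar{\mathbf{x}}$ in which $\bar x_{1i} = c$ for every edge and $\bar x_{ij} = d$ for every outer--outer pair, for some constants $c,d \in [0,1]$. Thus it suffices to minimize the objective over this two-parameter family.

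Third, I would collapse this to one parameter and minimize. For fixed $c$, the coefficient of $d$ in the objective is $-\lambda{m \choose 2} < 0$, so we push $d$ as large as the constraints allow; the binding constraints are $d \le 1$ and the triangle inequality $d \le x_{1i} + x_{1j} = 2c$, giving $d = \min\{1, 2c\}$. Writing $F(c)$ for the resulting value, on $[0,\tfrac12]$ we have $d = 2c$ and $F$ is linear in $c$ with slope $m(1 - \lambda m)$, while on $[\tfrac12,1]$ we have $d = 1$ and $F$ has slope $m(1-\lambda) > 0$. When $\lambda > 1/(n-1) = 1/m$ the first slope is negative and the second is positive, so $F$ decreases up to $c = \tfrac12$ and increases thereafter; the minimizer is $c = \tfrac12$, $d = 1$, which is precisely the proposed solution, of value $\tfrac{m}{2}(1+\lambda)$. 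This establishes optimality for every $\lambda \in (1/(n-1), 1/2)$.

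The main obstacle is making the symmetry-reduction step airtight: one must check that every constraint of~\eqref{eq:lcclp}, both the triangle inequalities and the box constraints, is preserved under relabeling the outer nodes and that the objective is invariant, so that the average of the optimal permuted copies is simultaneously feasible and optimal. After that, the only care needed is the piecewise treatment of $d = \min\{1, 2c\}$, ensuring the minimization is genuinely one-dimensional and that the slope $m(1 - \lambda m)$ correctly pins the minimizer at $c = \tfrac12$. I note that this argument actually certifies optimality for all $\lambda > 1/(n-1)$, the cutoff $\tfrac12$ merely marking the cluster-deletion regime of Theorem~\ref{thm:lcc}; alternatively, exhibiting a dual-feasible point matching the value $\tfrac{m}{2}(1+\lambda)$ would furnish a symmetry-free optimality certificate.
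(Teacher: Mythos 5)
Your proof is correct and takes essentially the same route as the paper's: symmetrize over automorphisms of the star fixing the center to reduce to solutions with $x_{1i}=c$ constant and outer pairs pushed up to $\min\{1,2c\}$, then minimize the resulting piecewise-linear function of $c$, whose minimizer is $c=\tfrac12$ once $\lambda > 1/(n-1)$. If anything, your bookkeeping is more careful than the paper's: its intermediate formula $(n-1)c + \lambda\binom{n}{2}(1-2c)$ drops the $\lambda(1-c)$ contributions of the star's edges, giving slope $(n-1)(1-\lambda n)$ rather than the exact $(n-1)\bigl(1-\lambda(n-1)\bigr)$ that you derive --- harmless for the conclusion, but your version pins down the exact threshold $1/(n-1)$.
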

\begin{proof}
	Following Step 1 of the proof of Theorem~\ref{thm:ringlp}, we can show that for $\lambda \in (1/(n-2), 1/2)$, there exists an optimal LP solution on the star graph in which all positive edge distances~$x_{1i}$ are the same. In more detail, given any arbitrary LP solution $\textbf{x}^1 = (x_{ij}^1)$, we can use the symmetry of the star graph to construct $n-2$ other optimal LP solutions by mapping outer nodes to each other. For $t = 1, 2, \hdots, (n-2)$, let $\textbf{x}_t$ be the LP solution obtained by mapping node $i \neq 1$ to node $i+t$. Taking a convex combination of all the resulting LP solutions produces a feasible solution in which $x_{1i} = c$  for all $i = 2, 3, \hdots , n$, for some $c \geq 0$. Furthermore, any two non-central nodes $i$ and $j$ are separated by 
	a path of two positive edges, so $x_{ij} = \min \{1, 2c \}$ for all $i \neq j$ such that $i \neq 1$, $j\neq 1$.
	
	If we consider only LP solutions of this type, for any $c \leq \frac{1}{2}$, the LambdaPrime LP relaxation on the star graph would be
	\begin{equation}
	\label{eq:starlp}
	\lp(\lambda) = (n-1)c + \lambda {n \choose 2}(1 - 2c) = (n-1) \left( c (1- \lambda n) + \frac{\lambda n}{2} \right),
	\end{equation}
	and since $\lambda > 1/(n-1) > 1/n \implies (1-\lambda n) < 0$, the above is minimized if $c =1/2$. If instead we restrict to $c \geq 1/2$, then the LP relaxation would be $\lp(\lambda) = (n-1)c$, which is clearly minimized if $c = 1/2$. Thus, for star graphs, a single LP solve is sufficient to approximate the LambdaPrime LP relaxation in all parameter regimes.
\end{proof}

\section{Frontier Extension Algorithm}
\label{sec:fe}
We do not \emph{always} need~$\Omega(\log n)$ feasible solutions to approximate the \lpcc LP relaxation in all parameter regimes.
So we now turn our attention to more sophisticated algorithms for this approximation task.
Given a feasible solution which optimizes the LP relaxation at given value of~$\lambda$, we develop a technique for explicitly computing the exact range of~$\lambda$ values for which this solution is optimal or nearly optimal.
This relies on adapting existing work in sensitivity analysis for parametric linear
programming~\cite{jansen1997sensitivity,nowozin2009solution}. With this technique as a primitive, we develop an approach
for more carefully selecting values~$\lambda$ at which to evaluate the LP relaxation, in order to eventually obtain a
small family of clusterings which approximate \lpcc in all regimes. We first introduce new terminology that will be helpful in the developement of our new methods. 
\begin{definition}
For a fixed $\lambda_0 \in (0,1)$, let~$\textbf{x}_0$ denote the optimal solution to the \lpcc LP relaxation. We
define the \emph{optimal range} of~$\textbf{x}_0$ to be the interval of~$\lmd$ values for which~$\textbf{x}_0$ optimizes
the LP relaxation, and denote this interval typically by~$[\alpha,\beta]$.
\end{definition}
Existing work on parametric linear programming confirms that every feasible solution that is optimal for a single parameter is in fact optimal over an entire range of parameters~\cite{jansen1997sensitivity}.
The concept of the optimal range can be naturally extended to approximation of the optimal objective.
\begin{definition}
For a fixed $\lambda_0 \in (0,1)$, let~$\textbf{x}_0$ be the optimal solution to the \lpcc LP relaxation. Define
the~\emph{$\eps$-approximate range} as the interval of~$\lmd$ values such that~$\textbf{x}_0$ obtains
a~$(1+\eps)$-approximation,
and denote this interval typically by~$[a,b]$.
\end{definition}
When the union of~$\eps$-approximate ranges of several solutions contains an interval, we say that these solutions \emph{cover} that interval.
If the union of the~$\eps$-approximate ranges of several LP solutions contains the~$\eps$-approximate range for another solution~$\mbx^*$, we say that these solutions cover~$\mbx^*$.

\subsection{Obtaining Approximate and Optimal Ranges}
\label{orlp}
Nowozin and Jegelka~\cite{nowozin2009solution} provide a framework for computing what they define as the \emph{stability
range}, for a variety of clustering problems. Given a solution to a clustering objective, these authors consider a
perturbation vector $\textbf{d}$ and a parameter $\theta$. They then compute the range of $\theta$ values for which the
given clustering remains optimal even when the objective function is perturbed by $\theta \textbf{d}$. Adding or
subtracting such a vector $\theta \textbf{d}$ from the objective is equivalent to perturbing the weights in a clustering
problem by a small amount. Thus, a clustering which remains optimal for a wide range of~$\theta$ values represents a particularly stable clustering. 

Our definition of an \emph{optimal range} can be viewed as a special case of Nowizin and Jegelka's stability range. In
our case, we do not perturb a clustering objective by an arbitrary noise vector or perturbation vector; we specifically
perturb the objective by changing the resolution parameter $\lambda$ slightly. Since this is just a special case of
perturbing the weights of \lpcc, we can adapt the work of Nowizin and Jegelka~\cite{nowozin2009solution} to compute the
optimal range, as well~$\eps$-approximate range, for a solution to the \lpcc LP relaxation.

\paragraph*{Formulation}
In light of this framework~\cite{nowozin2009solution}, we formulate  \lcclp in the following
matrix form, as \textbf{P1}:
\begin{equation}
\begin{array}{lll}
\text{minimize }  & \text{\textbf{P1}} =  & \mbc^T \mbx +  \lmd
\binom{n}{2}\\
\subjectto & & A\mbx \ge \mbb\\
& & \mbx \ge 0\,.
\end{array}
\end{equation}
Here matrix~$A$ encompasses both the triangle inequalities ($x_{ij} \le x_{ik}+x_{kj}$, for all $i,j,k$) and also the upper bounds of the $x_{ij}$ variables ($x_{ij} \le 1$ for all $i<j$).
Vector~$\mbc$ encodes the \lpcc weights, shifted somewhat:~$c_{ij}
= 1- \lmd$ if~$\{i,j\} \in E$; and~$c_{ij} = -\lmd$ if~$\{i,j\}\not\in E$.

For a fixed resolution parameter $\lambda_0$, let~$\mathbf{x}^* = (x^*_{ij})$ denote the optimal \lpcc solution.
Because the coefficient of~$\lambda$ in the term~$ \mbc^T \mbx$ (of the \lpcc objective) is~$-1$,
when we shift~$\lambda$, we arrive at the perturbed vector~$\mbc' = \mbc +s\theta \mbd$, where~$\mbd$ has $d_{ij}=-1$
for all~$i<j$. Here, $s$ is either~$+1$ for forward perturbation or~$-1$ for backward perturbation,
and~$\theta$ represents the \emph{extent} of perturbation.
Let~$\lmd'$ be~$\lmd_0+s\theta$.
Then~$(\mbc')^T \mbx + \lmd'\binom{n}{2}$ is the \lcclp objective for~$\lmd'$. 

Based on the dual of \textbf{P1},
we define a new problem \orlp (\textbf{O}ptimal \textbf{R}ange \textbf{LP})
With input parameters $\mbx^*, s=\pm1, \lmd$ and~$\eps$ are input parameters, it
seeks the maximum
perturbation~($\theta$) while maintaining the approximation factor of the \lcclp solution~$\mbx^*$.
In the perturbed setup, the dual constraint is 
$A^T\mby \le \mbc+s\theta \mbd$.
Should we want to ensure the optimal dual objective is the same as the primal objective for solution~$\mbx^*$, we would arrive at the constraint,
\begin{equation}\label{eq:duality_con}
\mbb^T\mby + \lmd' \binom{n}{2} = (\mbc')^T\mbx^* + \lmd' \binom{n}{2}\,.
\end{equation}
The left hand side of Constraint~\eqref{eq:duality_con} is the \lcclp objective for~$\lmd'$, and the equality only holds when~$\mbx^*$ is  optimal for~$\lmd'$.
For all~$\mby \ge 0$ satisfying the dual constraint,
\[
\max_{\mby} \mbb^T \mby = \mathbf{LP}(\lmd') \le (\mbc')^T\mbx^* + \lmd' \binom{n}{2}\,.
\]
To show that $\mbx^*$ remains a~$(1+\eps)$-approximate solution to~$\lcclpNone(\lmd')$, we find \emph{some}
dual solution~$\mby$ whose objective is within factor~$1+\eps$ of the primal objective for~$\mbx^*$, but for
the setting of~$\lmd'$.
That is,
\[
(1+\eps)\left(\mbb^T\mby + \lmd' \binom{n}{2}\right) \ge (\mbc')^T\mbx^* + \lmd' \binom{n}{2}\,.
\] 
With these constraints, our aim is to maximize the interval over which~$\mbx^*$ remains a near-optimal solution, viz.
\begin{align*}
&\text{maximize} \quad \text{\orlp} =\theta  \\
&\subjectto \quad (1+\eps)\left(\mbb^T\mby + (\lmd_0+s\theta) \binom{n}{2}\right) \ge \mbc^T\mbx^*+s\theta \mbd^T\mbx^* + (\lmd_0+s\theta)\binom{n}{2} \\
&\hspace{20mm} A^T\mby \le \mbc+s\theta \mbd \\
&\hspace{20mm} \mby \ge 0 \\
&\hspace{20mm} \theta \ge 0\,.
\end{align*}
Let~$\theta^+$ be the optimal objective value of \textbf{ORLP} with~$s=1$, and let~$\theta^-$ be the optimal objective value with~$s=-1$.
Since \lcclp as a function of~$\lmd$ is piecewise linear and concave, solution~$\mbx^*$ is a~$(1+\eps)$-approximation to \lcclpNone$(\lmd)$ for all~$\lmd \in [\lmd_0 - \theta^-, \lmd_0 + \theta^+]$. 

\begin{theorem}\label{thm:optimal_range}
(adapted from Section 2.1 of Nowozin and Jegelka, 2009\cite{nowozin2009solution})
Given an instance of \lpcc, let~$\mbx^*$ be the optimal LP solution of~$\lcclpNone(\lmd_0)$.
For every~$\eps\ge0$, solution~$\mbx^*$ is a~$(1+\eps)$-approximation exactly in the range~$\lmd \in [\lmd_0 - \theta^-,
\lmd_0+\theta^+]$, where~$\theta^+$ and~$\theta^-$ are optimal objective values of~\lpPlus and~\lpMinus, respectively.
\end{theorem}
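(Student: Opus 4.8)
The plan is to reduce the theorem to a single per-value equivalence and then glue things together with concavity. For each sign $s\in\{+1,-1\}$ and each $\theta\ge 0$, write $\lmd'=\lmd_0+s\theta$, and note that $\mbc'=\mbc+s\theta\mbd$ is exactly the weight vector for parameter $\lmd'$ (its edge entries are $1-\lmd'$ and its non-edge entries are $-\lmd'$). Hence $\ell(\lmd') := (\mbc')^T\mbx^* + \lmd'\binom{n}{2} = \mbc^T\mbx^* + s\theta\,\mbd^T\mbx^* + \lmd'\binom{n}{2}$ is genuinely the \lpcc score of the fixed solution $\mbx^*$ at $\lmd'$; it is affine in $\lmd'$, and $\mbx^*$ is a $(1+\eps)$-approximation at $\lmd'$ precisely when $\ell(\lmd')\le(1+\eps)\,\lcclpNone(\lmd')$. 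I would first show that the constraints of \orlp are satisfiable by some $\mby\ge 0$ \emph{if and only if} $\mbx^*$ is a $(1+\eps)$-approximation at $\lmd'$.

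The heart of the argument is this duality equivalence, proved in two directions. Observe that $A^T\mby\le\mbc'$, $\mby\ge 0$ is exactly dual feasibility for $\lcclpNone(\lmd')$, so weak duality gives $\mbb^T\mby+\lmd'\binom{n}{2}\le\lcclpNone(\lmd')$ for every such $\mby$. For the easy direction, if a dual-feasible $\mby$ satisfies the first \orlp constraint $(1+\eps)\bigl(\mbb^T\mby+\lmd'\binom{n}{2}\bigr)\ge\ell(\lmd')$, then chaining with weak duality yields
\[
\ell(\lmd')\le(1+\eps)\bigl(\mbb^T\mby+\lmd'\tbinom{n}{2}\bigr)\le(1+\eps)\,\lcclpNone(\lmd'),
\]
so $\mbx^*$ is a $(1+\eps)$-approximation. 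Conversely, if $\ell(\lmd')\le(1+\eps)\,\lcclpNone(\lmd')$, I take $\mby$ to be an \emph{optimal} dual solution at $\lmd'$: by strong duality $\mbb^T\mby+\lmd'\binom{n}{2}=\lcclpNone(\lmd')$, this $\mby$ is dual feasible, and $(1+\eps)\bigl(\mbb^T\mby+\lmd'\binom{n}{2}\bigr)=(1+\eps)\,\lcclpNone(\lmd')\ge\ell(\lmd')$, so it meets the \orlp constraint. Finiteness and attainment are immediate because the \lpcc LP is always feasible and bounded below by $0$, so strong duality applies.

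It remains to turn this per-$\theta$ equivalence into the exact interval claim. Define $D(\lmd)=(1+\eps)\,\lcclpNone(\lmd)-\ell(\lmd)$; since $\lcclpNone$ is concave and $\ell$ is affine, $D$ is concave, and $\mbx^*$ is a $(1+\eps)$-approximation at $\lmd$ exactly when $D(\lmd)\ge 0$. Because $\mbx^*$ is optimal at $\lmd_0$ we have $\ell(\lmd_0)=\lcclpNone(\lmd_0)$, hence $D(\lmd_0)=\eps\,\lcclpNone(\lmd_0)\ge 0$, so the superlevel set $\{\lmd:D(\lmd)\ge 0\}$ is a closed interval $[a,b]$ containing $\lmd_0$ (a concave function has convex superlevel sets). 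This set is precisely the $\eps$-approximate range. Combining with the equivalence: for $s=+1$ the feasible $\theta$ are exactly those with $\lmd_0+\theta\in[a,b]$ and $\theta\ge 0$, i.e.\ $\theta\in[0,b-\lmd_0]$, giving $\theta^+=b-\lmd_0$ (attained by the maximizing LP \lpPlus); symmetrically $\theta^-=\lmd_0-a$ from \lpMinus. Therefore the range is exactly $[a,b]=[\lmd_0-\theta^-,\lmd_0+\theta^+]$.

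The main obstacle is not a hard computation but the careful handling of the existential quantifier over $\mby$: the converse direction genuinely requires strong duality to \emph{manufacture} a certifying dual vector, while the forward direction only uses weak duality, and one must separately invoke concavity to guarantee that the feasible $\theta$-set is a single interval anchored at $0$. It is exactly this last point that upgrades the statement from ``$\mbx^*$ is a $(1+\eps)$-approximation somewhere near $\lmd_0$'' to the \emph{exact} endpoints $\lmd_0-\theta^-$ and $\lmd_0+\theta^+$, so that the two scalar optimal values of \lpPlus and \lpMinus pin down the range rather than merely bounding it.
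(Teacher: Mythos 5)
Your proposal is correct and follows essentially the same route as the paper: the paper's surrounding text constructs \orlp from the dual of \textbf{P1} and uses weak duality for the approximation guarantee, then invokes piecewise linearity and concavity of \lcclpNone to extend from the endpoint values $\theta^{\pm}$ to the whole interval, exactly as you do. Your write-up is in fact somewhat more complete than the paper's sketch (which defers to Nowozin and Jegelka), since you make explicit the strong-duality converse needed for exactness of the range and the superlevel-set argument showing the feasible $\theta$'s form an interval anchored at~$0$.
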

\subsection{The Algorithm}

In this section we discuss the Frontier Extension (FE) algorithm for finding an approximation to the LambdaPrime LP.
It invokes \orlp as a primitive to return a family of optimal solutions such that for every~$\lmd \in (4/n^2,1)$ (the
range is justified in Section~\ref{sec:log_lp_eval}), there is
a solution in the family which is a~$(1+\eps)$-approximation to the \lpcc LP relaxation.
%
We define the following concepts for the optimal \lcclp solutions.
\begin{definition}
A set of optimal solutions~$\curC$ of the LP relaxation is called an~\emph{$\eps$-cover}, if for all~$\lmd_0 \in (4/n^2, 1)$, there exists a solution~$\mbx \in \curC$ such that $\mbx$ is a~$(1+\eps)$-approximation for \lcclpZero.
\end{definition}
The goal of FE algorithm is to find an~$\eps$-cover for~$\lmd \in (4/n^2, 1)$. An~$\eps$-cover is~\emph{optimal} if it has a minimum number of LP solutions among all~$\eps$-covers. 
When context is clear we drop the~$\eps$ and call it simply an \emph{optimal cover}.


As a convention, we denote the optimal range of an LP solution by~$[\alpha, \beta]$ and the~$\eps$-approximate range by~$[a,b]$.
The FE algorithm, laid out as Algorithm~\ref{alg:fe}, provides effective way to approximate the LambdaPrime LP in all parameter regimes by alternating between solving the LP relaxation, and then finding the $\eps$-approximate range of the last computed solution to the original LP. In this way, it greedily pushes the frontier of the $\lambda$ values that are guaranteed to be ``$\eps$-covered'' by some previously computed solution to the LambdaPrime LP relaxation.
\begin{algorithm}[ht]
\caption{Frontier Extension (FE)}\label{alg:fe}
\begin{algorithmic}[1]
\Function{FE}{$G, \eps$}
\State $\curC \gets \varnothing$
\State $\lmd_0 \gets 4/n^2$
\While{$\lmd_0 < 1$}
\State $\mbx^* \gets$ \lcclpZero on~$G$
\State $\theta_+ \gets $ \lpPlus on~$G$
\State $\lmd_+ \gets \lmd_0 + \theta_+$ \label{ln:lmd-plus}
\State Add~$\mbx^*$ to $\curC$
\State $\lmd_0 \gets (1+\eps)\lmd_+ \label{ln:next_lmd}$
\EndWhile
\If{$\lambda_+ < 1$}
\State $\mbx^* \gets$ $\textbf{LP}(\lambda_+)$ on~$G$
\State Add~$\mbx^*$ to $\curC$
\EndIf
\State\Return $\curC$
\EndFunction
\end{algorithmic}
\end{algorithm}
Here \lcclpZero solves the \lpcc LP on input graph~$G$ with a parameter~$\lmd_0$ and \lpPlus solves the LP referred to
in Theorem~\ref{thm:optimal_range}.

For an optimal solution~$\mbx^i$, denote the optimal range of solution~$\mbx^i$ by~$[\alpha_i, \beta_i]$ and the~$\eps$-approximate range by~$[a_i, b_i]$:  the values~$a_i$ and~$b_i$ depend on~$\eps$.
For the optimal solutions of the \lpcc LP, we can define a natural total ordering.
For two solutions~$\mbx$ and~$\mbx'$, $\mbx \succ \mbx'$ if and only if~$\alpha>\alpha'$, where~$\alpha$ is the smaller endpoint of the optimal range of~$\mbx$, and~$\alpha'$ is that of~$\mbx'$.
\begin{theorem}
The FE algorithm produces an~$\eps$-cover.
\end{theorem}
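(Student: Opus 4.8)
The plan is to show that the family $\curC$ returned by FE covers the whole parameter range $(4/n^2,1)$, in the sense that the intervals on which its members are guaranteed to be $(1+\eps)$-approximations collectively contain $(4/n^2,1)$. I index the iterations of the while loop by $i = 1,2,\dots,T$, writing $\lambda_0^{(i)}$ for the value of $\lmd_0$ at the start of iteration $i$ (so $\lambda_0^{(1)} = 4/n^2$), $\mbx^{(i)}$ for the optimal solution of $\lcclpNone(\lambda_0^{(i)})$ added in that iteration, $\theta_+^{(i)}$ for the value returned by \textbf{ORLP} (with $s=1$), and $\lambda_+^{(i)} = \lambda_0^{(i)} + \theta_+^{(i)}$ as on line~\ref{ln:lmd-plus}. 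By the update on line~\ref{ln:next_lmd}, $\lambda_0^{(i+1)} = (1+\eps)\lambda_+^{(i)}$, and by definition every solution placed in $\curC$ is optimal for some $\lmd$, so $\curC$ is a set of optimal LP solutions as the $\eps$-cover definition requires.

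First I would establish two local coverage facts. (i) By Theorem~\ref{thm:optimal_range}, the $\eps$-approximate range of $\mbx^{(i)}$ is exactly $[\lambda_0^{(i)} - \theta_-^{(i)},\, \lambda_0^{(i)} + \theta_+^{(i)}]$; since $\theta_-^{(i)} \ge 0$, this range contains $[\lambda_0^{(i)}, \lambda_+^{(i)}]$, so $\mbx^{(i)}$ is a $(1+\eps)$-approximation for the LP at every $\lambda$ in that subinterval. (ii) For the ``gap'' $[\lambda_+^{(i)}, \lambda_0^{(i+1)}]$ I would invoke Lemma~\ref{lem:approx}: because $\mbx^{(i+1)}$ is optimal at the right endpoint $\lambda_0^{(i+1)}$ and $\lambda_0^{(i+1)}/\lambda_+^{(i)} = 1+\eps$ by construction, the lemma guarantees $\mbx^{(i+1)}$ is a $(1+\eps)$-approximation on all of $[\lambda_+^{(i)}, \lambda_0^{(i+1)}]$. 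Combining (i) and (ii), the two consecutive solutions $\mbx^{(i)}$ and $\mbx^{(i+1)}$ together cover $[\lambda_0^{(i)}, \lambda_0^{(i+1)}]$.

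Telescoping (ii) over $i = 1,\dots,T-1$ shows $\curC$ covers $[\lambda_0^{(1)}, \lambda_0^{(T)}]$, and adding fact (i) for $i = T$ extends this to $[4/n^2,\, \lambda_+^{(T)}]$. The loop terminates because $\lambda_0^{(i+1)} \ge (1+\eps)\lambda_0^{(i)}$ forces $\lmd_0$ to grow geometrically and exceed $1$ after $O(\log n)$ steps, so at exit $\lambda_0^{(T+1)} = (1+\eps)\lambda_+^{(T)} \ge 1$. It then remains to handle the right boundary. If $\lambda_+^{(T)} \ge 1$ we are already done. Otherwise $\lambda_+^{(T)} < 1 \le (1+\eps)\lambda_+^{(T)}$, so the post-loop step adds the solution optimal at $\lambda_+^{(T)}$; applying Lemma~\ref{lem:approx} on $[\lambda_+^{(T)}, 1]$ with ratio $1/\lambda_+^{(T)} \le 1+\eps$ shows this solution $(1+\eps)$-approximates the LP on $[\lambda_+^{(T)}, 1)$, closing the final gap. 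Hence for every $\lmd \in (4/n^2,1)$ some member of $\curC$ is a $(1+\eps)$-approximation, i.e.\ $\curC$ is an $\eps$-cover.

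The step I expect to require the most care is the gap-coverage fact (ii): explaining why jumping from $\lambda_+^{(i)}$ all the way to $(1+\eps)\lambda_+^{(i)}$, rather than resuming from $\lambda_+^{(i)}$, leaves no $\lambda$ uncovered. The resolution is that although $\mbx^{(i)}$ ceases to be a $(1+\eps)$-approximation beyond $\lambda_+^{(i)}$, the \emph{next} optimal solution $\mbx^{(i+1)}$ retroactively covers the entire factor-$(1+\eps)$ window immediately below its own optimum (Lemma~\ref{lem:approx}), and the multiplicative jump is calibrated exactly so that this window's lower end meets $\lambda_+^{(i)}$. A closely related subtlety is the boundary treatment at $\lmd = 1$, where the same calibration $\lambda_0^{(T+1)} \ge 1$ is precisely what makes the single post-loop solve suffice. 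I would also note that Lemma~\ref{lem:approx} applies here even though an optimal solution is actually computed at only one endpoint of each gap interval, since the lemma only needs the existence (not the computation) of an optimal solution at the other endpoint.
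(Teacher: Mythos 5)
Your proposal is correct and follows essentially the same argument as the paper's own proof: each iterate's solution covers $[\lambda_0^{(i)},\lambda_+^{(i)}]$ via its \textbf{ORLP}-certified $\eps$-approximate range, the gap $[\lambda_+^{(i)},(1+\eps)\lambda_+^{(i)}]$ is covered retroactively by the next optimal solution via Lemma~\ref{lem:approx}, and the post-loop solve at the final $\lambda_+$ closes the interval up to $1$ using $1/\lambda_+ \le 1+\eps$. Your version is somewhat more explicit (indexed iterations, telescoping, and the remark that Lemma~\ref{lem:approx} needs only the existence of an optimum at the far endpoint), but there is no substantive difference in approach.
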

\begin{proof}
    Let~$\lmd_1$ and~$\lmd_2$ be two consecutive~$\lmd$ values computed by FE in step 3 or step 6.
    Let~$\mbx^1$ and~$\mbx^2$ denote the two optimal solutions from solving \lcclp for~$\lmd_1$ and~$\lmd_2$, respectively.
    Let~$\lmd_+$ be the value calculated at line~\ref{ln:lmd-plus} for~$\lmd_1$.
    For~$\lmd \in [\lmd_+, \lmd_+(1+\eps)]$, these~$\lmd$ values are not covered by~$\mbx^1$, but by~$\mbx^2$.
  Lemma~\ref{lem:approx} implies that~$\mbx^2$ is a~$(1+\eps)$-approximation for~$[\lmd_2 / (1+\eps),\lmd_2] = [\lmd_+, \lmd_+ (1+\eps)]$. Thus, the interval between $\lmd_+$ and $\lmd_+ (1+\eps)$ will be covered by the LP solution computed in the next iterate. Note though that after the last step of the while loop, if $\lambda_+ (1+\eps) > 1$ but $\lambda_+ < 1$, the loop will terminate before we have computed an LP solution covering the interval $(\lambda_+, 1)$. Thus, we compute the LP relaxation at the last $\lambda_+$ value in order to ensure we have a complete cover.
\end{proof}
We analyze the performance of the FE algorithm to provide ways to bound the number of LPs we need to solve, and the size of the solution set discovered by FE.
In the following text, we call the family of solutions discovered by the algorithm the~\emph{FE cover}.
\subsection{Parameterized Analysis}
Our motivation for bounding the number of LPs we need to solve is two-fold.
We first show that for any optimal cover (i.e., a cover with a minimal number of LP solutions), we can replace every solution in that optimal cover by two or fewer solutions in the FE cover, and still maintain the~$\eps$-approximation in all parameter regimes. This allows us to refine the FE cover to produce a new cover that is within a factor two of the optimal cover in terms of size.
Our second reason for bounding the number of LPs we need to solve is simply to guarantee that we will not have to solve the LP relaxation too many times in order to produce our final~$\eps$-cover.

\begin{lemma}\label{le:it_takes_two}
    Let~$\curC'$ be an optimal cover, and~$\curC$ be the FE cover.
    We index solutions in~$\curC$ by the natural ordering based on the left end point of the optimal range.
    For any~$\mbx_{*}^i \in \curC'$, if $\mbx_{*}^i \not\in \curC$, it can be covered by the union of two solutions in~$\curC$.
    Specifically, the latter case means that there exist two consecutive solutions~$\mbx^{j}$ and~$\mbx^{j+1}$
in~$\curC$ such that the union of their $\eps$-approximate ranges contain the $\eps$-approximate range of~$\mbx_{*}^i$.
\end{lemma}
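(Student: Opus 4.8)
The plan is to prove the statement for an \emph{arbitrary} LP-optimal solution $\mbx_*^i$, using only that it optimizes the LP at some $\lmd$ and that $\mbx_*^i \notin \curC$; the optimality of $\curC'$ is not needed. First I set up the geometric picture. Every feasible LP solution $\mbx$ gives a line $\ell(\lmd) = P + \lmd N$, where $P = \sum_{(i,j)\in E} x_{ij}$ and $N = \sum_{i<j}(1-x_{ij})$, and since $\lp$ is the lower envelope of all such lines we have $\ell(\lmd) \ge \lp(\lmd)$ everywhere, with equality throughout the optimal range of $\mbx$. Because $\lp$ is concave and increasing (and, as in the proof of Theorem~\ref{thm:optimal}, a larger optimal $\lmd$ forces a larger $P$ and a smaller slope $N$), supporting lines of solutions optimal at larger $\lmd$ have smaller slope. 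I also record the FE facts I will need: the samples satisfy $\lmd_0^{j+1} = (1+\eps)b_j$, and by Lemma~\ref{lem:approx} the $\eps$-approximate ranges $[a_j,b_j]$ obey $a_j \le \lmd_0^j/(1+\eps)$, $b_j \ge (1+\eps)\lmd_0^j$, and $a_{j+1} \le b_j$. This last inequality means consecutive FE approximate ranges overlap, so the union of two consecutive ones is the single interval $[a_j, b_{j+1}]$. Hence proving the lemma reduces to exhibiting consecutive indices $j,j+1$ with $a_j \le a_*$ and $b_* \le b_{j+1}$, where $[a_*,b_*]$ is the $\eps$-approximate range of $\mbx_*^i$.

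Next I locate $\mbx_*^i$ among the FE samples via its optimal range $[\alpha_*,\beta_*]$. If some sample $\lmd_0^m$ lies in the interior of $[\alpha_*,\beta_*]$, then $\lp$ is linear there and equals both $\ell_*$ and the supporting line of $\mbx^m$; a line lying above $\lp$ and touching it at an interior point of a linear piece must coincide with that piece, so $\mbx^m$ shares the same $(P,N)$ as $\mbx_*^i$, hence the \emph{identical} $\eps$-approximate range, and one FE solution already covers $\mbx_*^i$. Otherwise no sample meets the interior of $[\alpha_*,\beta_*]$, and since the samples increase from $4/n^2$ while the FE cover reaches $1$, there are consecutive indices with $\lmd_0^j < \alpha_* \le \beta_* < \lmd_0^{j+1}$ (the boundary situations, where $\mbx_*^i$ sits below the first sample or above the last, are handled by $\mbx^1$ and by the terminal LP solve of Algorithm~\ref{alg:fe}, respectively).

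The core is a two-line comparison. To show $b_* \le b_{j+1}$, I compare $\ell_*$ with the supporting line $\ell_{j+1}$ of $\mbx^{j+1}$. Since $\mbx^{j+1}$ is optimal at $\lmd_0^{j+1}$ we have $\ell_{j+1}(\lmd_0^{j+1}) = \lp(\lmd_0^{j+1}) \le \ell_*(\lmd_0^{j+1})$, and because $\ell_{j+1}$ has the smaller slope this persists: $\ell_{j+1}(\lmd) \le \ell_*(\lmd)$ for all $\lmd \ge \lmd_0^{j+1}$. Now $b_{j+1} \ge (1+\eps)\lmd_0^{j+1} > \lmd_0^{j+1}$, and at $\lmd = b_{j+1}$ the definition of the approximate-range endpoint gives $\ell_{j+1}(b_{j+1}) = (1+\eps)\lp(b_{j+1})$; combined with $\ell_*(b_{j+1}) \ge \ell_{j+1}(b_{j+1})$ this yields $\ell_*(b_{j+1}) \ge (1+\eps)\lp(b_{j+1})$, so $b_{j+1}$ lies at or beyond the right endpoint of $\mbx_*^i$'s approximate range, i.e. $b_* \le b_{j+1}$. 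A mirror-image argument, comparing $\ell_*$ with $\ell_j$ on $\lmd \le \lmd_0^j$ (where $\ell_j$ has the larger slope and lies below $\ell_*$ to the left), gives $a_* \ge a_j$. Together these give $[a_*,b_*] \subseteq [a_j,b_{j+1}] = [a_j,b_j]\cup[a_{j+1},b_{j+1}]$, the claimed covering by two consecutive FE solutions.

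I expect the main obstacle to be the degenerate cases rather than the inequalities themselves: non-unique LP optima (resolved above by showing a tied FE solution shares the same line, hence the same approximate range) and the exact behavior at the approximate-range boundary, where I must argue that an equality $\ell_*(b_{j+1}) = \ell_{j+1}(b_{j+1})$ forces $b_* = b_{j+1}$ rather than $b_* > b_{j+1}$ (using that two non-parallel supporting lines cross only once, and that parallel ones with the established ordering must be identical). The endpoints $\lmd \to 4/n^2$ and $\lmd \to 1$ will also need the appended/terminal solves of Algorithm~\ref{alg:fe} to close the cover, which I would verify separately.
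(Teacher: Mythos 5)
Your proposal is correct and rests on the same mechanism as the paper's proof: bracket $\mbx_{*}^i$ between two consecutive FE solutions, use concavity and slope monotonicity of the supporting lines $P+\lmd N$ to obtain $a_j\le a_*$ and $b_*\le b_{j+1}$, and use Lemma~\ref{lem:approx} (via the FE sampling rule $\lmd_0^{j+1}=(1+\eps)b_j$) to guarantee the overlap $a_{j+1}\le b_j$, so the two $\eps$-approximate ranges union to a single interval. The only notable difference is your Case 1: where the paper asserts that a sample falling in the optimal range $[\alpha_*,\beta_*]$ means FE ``would add $\mbx_{*}^i$'' to the cover, you instead show that the solution FE returns there must share the same supporting line and hence the identical $\eps$-approximate range, which handles non-unique LP optima more carefully than the paper does (and, as a bonus, your argument never actually needs the hypothesis $\mbx_{*}^i\notin\curC$ or the optimality of $\curC'$).
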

\begin{proof}
    Assume that~$\mbx_{*}^i \not\in \curC$. Note that there must exist a pair of two consecutive solutions~$\mbx^{j}$ and~$\mbx^{j+1}$ in~$\curC$ such that $\mbx^{j} \prec \mbx_{*}^i \prec \mbx^{j+1}$.
    

    During the process of the FE algorithm,~$\mbx^{j+1}$ is obtained by solving the \lpcc LP with~$\lmd=(1+\eps)b_j$, as~$\lmd=b_j$ is the last point on which~$\mbx^{j}$ is a~$(1+\eps)$-approximation.
	If~$(1+\eps)b_j$ falls into~$[\alpha^*_i, \beta^*_i]$, the optimal range of~$\mbx_{*}^i$, then FE would add~$\mbx_{*}^i$ to the solution set~$\curC$.
	Since we assumed~$\mbx_{*}^i$ is not in the FE cover, we must have~$(1+\eps)b_j > \beta^*_i$. Due to the concavity of the optimal objective function, $\mbx^{j}$ is a better approximation than~$\mbx_{*}^i$ for every~$\lmd < a_i$. Therefore, $\mbx^{j}$ is a $(1+\eps)$-approximation from $a_j$ to $b_j$, where $a_j \leq a_i^*$. Similarly, $\mbx^{j+1}$ is a~$(1+\eps)$-approximation from $a_{j+1} \leq b_j$ to $b_{j+1}$, and due to concavity, it is a better approximation than~$\mbx_{*}^i$ for every~$\lambda > b_{j+1}$. Therefore, $b^*_i \leq b_{j+1}$.
    In conclusion, the union of the $\eps$-approximate ranges of~$\mbx^{j}$ and~$\mbx^{j+1}$ contains that of~$\mbx_{*}^i$.
\end{proof}
Note that although~$\curC$ only needs to use two solutions to cover a solution from~$\curC'$ that is not in~$\curC$, it does not follow that~$\curC$ can find the two solutions without redundancy. 
For instance, it might take more than four solutions to cover the same range of two consecutive solutions~$\mbx_{*}^i$ and~$\mbx_{*}^{i+1}$ in~$\curC'$, because it might take multiple iterations for FE to fill up the gap between the pair of solutions covering~$\mbx_{*}^i$ to the pair covering~$\mbx_{*}^{i+1}$.
We address this issue by a parameterized analysis, relying on the following new definitions.
\begin{definition}\label{def:forward}
    Given an instance of~\lpcc, for a fixed~$\eps$ and a family of solutions~$\curC$, each of which is optimal for some
$\lambda$, index the solutions in~$\curC$ in increasing order and denote the~$i^{\text{th}}$ solution by~$\mbx^i$.
    
    The \emph{forward ratio~$w(\mbx^i)$ of solution~$\mbx^i$ in~$\curC$} is defined as
    \[
    w(\mbx^i) =\begin{cases}
    \frac{1}{\beta_i} & \text{if } i = \abs{\curC} \\
    \frac{\alpha_{i+1}}{\beta_i} & \text{otherwise.}
    \end{cases}
    \]
    The \emph{forward factor~$p(\eps, \curC)$} is defined as $p(\eps, \curC) = \max\left\{ \max_i \{\ceil{\log_{1+\eps} w(\mbx^i)} \} ,0 \right\}$.
\end{definition}
Note that~$\curC$ does not necessarily have to be an~$\eps$-cover: it is simply an arbitrary set of optimal solutions.
The forward ratio~$w(\mbx^i)$ measures the gap between the right endpoint of the optimal range of~$\mbx^i$ and left endpoint of the optimal range of~$\mbx^{i+1}$, for~$\mbx^i, \mbx^{i+1} \in \curC$.
We only consider values of~$\ceil{\log_{1+\eps} w(\mbx^i)}$ that are non-negative in Definition~\ref{def:forward};
otherwise the gap between~$\mbx^i$ and~$\mbx^{i+1}$ could be $\eps$-covered by the intervals around~$\mbx^i$
and~$\mbx^{i+1}$ already, and no iteration would be needed to close the gap.
    
Moreover, we can bound the growth rate of the frontier by the following lemma.
\begin{lemma}\label{le:grow_rate_r}
    Let~$\curC$ be the FE cover.
    The number of solutions in~$\curC$ that are calculated to cover interval~$[\lmd_1, \lmd_2]$ is at most~$\ceil{(1/2)\log_{1+\eps} (\lmd_2/\lmd_1)}$.
\end{lemma}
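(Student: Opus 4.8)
The plan is to track the sequence of resolution parameters $\lmd_0^{(1)} < \lmd_0^{(2)} < \cdots$ at which FE actually solves the LP (line 5 of Algorithm~\ref{alg:fe}), and to show that consecutive solve points grow geometrically with ratio at least $(1+\eps)^2$. Once this is established, the number of solve points landing inside a given interval $[\lmd_1,\lmd_2]$ follows from a routine geometric-series count, and the extra factor of $\tfrac12$ in the claimed bound is precisely the contribution of squaring the base $(1+\eps)$.

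First I would pin down the growth across a single iteration. Fix an iteration with solve point $\lmd_0$, let $\mbx^*$ be the optimal solution of \lcclpZero, and let $[a, \lmd_+]$ be its $\eps$-approximate range, so that $b = \lmd_+ = \lmd_0 + \theta^+$ is the endpoint returned by \lpPlus (Theorem~\ref{thm:optimal_range}). Since $\mbx^*$ is optimal at $\lmd_0$, applying Lemma~\ref{lem:approx} to the pair $(\lmd_0, (1+\eps)\lmd_0)$, for which $\delta = 1+\eps$, shows that $\mbx^*$ is already a $(1+\eps)$-approximation throughout $[\lmd_0, (1+\eps)\lmd_0]$. Because $[a,\lmd_+]$ is the \emph{exact} $\eps$-approximate range, it must contain this conservative interval, and hence $\lmd_+ \ge (1+\eps)\lmd_0$. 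The next solve point is set in line 9 to $(1+\eps)\lmd_+$, so
\[ \lmd_0^{(i+1)} = (1+\eps)\,\lmd_+^{(i)} \ge (1+\eps)^2\,\lmd_0^{(i)}, \]
which is the key inequality and holds for every consecutive pair of FE iterations independently of the interval under consideration.

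Finally I would carry out the counting. Suppose the solve points $\lmd_0^{(i)}$ falling in $[\lmd_1,\lmd_2]$ number $m$; since any such sub-collection is a subsequence of a sequence growing by at least $(1+\eps)^2$ per step, its smallest and largest members satisfy $\lmd_2 \ge (1+\eps)^{2(m-1)}\lmd_1$, whence $m-1 \le \tfrac12 \log_{1+\eps}(\lmd_2/\lmd_1)$. To turn the trailing $+1$ into the stated ceiling I would count solve points over the half-open interval $[\lmd_1,\lmd_2)$ (equivalently, decline to charge the interval for an iteration whose solve point has already passed $\lmd_2$), making the inequality strict and yielding $m \le \ceil*{\tfrac12 \log_{1+\eps}(\lmd_2/\lmd_1)}$. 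I expect the only genuinely delicate step to be the first one: one must be certain that the $\theta^+$ returned by \lpPlus captures the \emph{true}, largest extent of the $(1+\eps)$-approximate range, so that the conservative estimate from Lemma~\ref{lem:approx} really is a lower bound for $\lmd_+$ and the doubling of the base is valid; the off-by-one bookkeeping in the final count is then routine.
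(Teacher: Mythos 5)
Your proposal is correct and follows essentially the same route as the paper: Lemma~\ref{lem:approx} plus the exactness of the range from Theorem~\ref{thm:optimal_range} gives $\lmd_+ \ge (1+\eps)\lmd_0$, the update rule of FE then gives geometric growth by a factor of at least $(1+\eps)^2$ per iteration, and a logarithmic count follows. The only differences are cosmetic --- you track solve points where the paper tracks the rightmost covered point, and your half-open-interval bookkeeping for the ceiling is more careful than the paper's terse counting --- but the underlying argument is identical.
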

\begin{proof}
    From Lemma \ref{lem:approx} and the definition of FE, we know that at each iteration, the right most point of the coverage is pushed forward by at least a multiplicative factor~$(1+\eps)^2$.
    The number of iterations it takes from~$\lmd_1$ to~$\lmd_2$ is then bounded by
    \[
   \left\lceil  \log_{(1+\eps)^2} (\lmd_2/\lmd_1) \right\rceil =  \left\lceil (1/2)\log_{1+\eps} (\lmd_2/\lmd_1) \right\rceil \,.
    \]
\end{proof}
\begin{lemma}\label{le:general_bound}
Let~$\curC'$ be a family of LP solutions, each of which is optimal for some $\lambda$. Let $p' = p(\eps, \curC')$, and let~$\curC$ be the FE cover.
We have~$\abs{\curC} \le (p'/2+1)\abs{\curC'}$.
\end{lemma}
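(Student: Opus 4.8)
The plan is to bound $\abs{\curC}$ by charging each FE solve-point to a solution of $\curC'$. Index $\curC' = \{\mbx_{*}^1,\dots,\mbx_{*}^m\}$ in increasing order of optimal range, writing the optimal range of $\mbx_{*}^i$ as $[\alpha_i,\beta_i]$, and recall that each pass of the while-loop in Algorithm~\ref{alg:fe} contributes exactly one LP solve (one solution) to $\curC$, so $\abs{\curC}$ equals the number of FE solve-points. I would partition the parameter axis by the left endpoints of these ranges: set $I_i = [\alpha_i,\alpha_{i+1})$ for $i<m$ and $I_m = [\alpha_m,1)$, which is consistent with the convention $w(\mbx_{*}^m)=1/\beta_m$ (take $\alpha_{m+1}=1$). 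The goal is to show that at most $1+\lfloor p'/2\rfloor$ FE solve-points fall inside each $I_i$; summing over the $m$ intervals then yields $\abs{\curC}\le m(1+p'/2) = (p'/2+1)\abs{\curC'}$.

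First I would establish that \emph{at most one} FE solve-point lies in the optimal range $[\alpha_i,\beta_i]$. If FE solves \lcclpNone at some $\lmd_0\in[\alpha_i,\beta_i]$, the returned solution is optimal throughout $[\alpha_i,\beta_i]$, so its $\eps$-approximate range reaches at least $\beta_i$; by the update rule the next solve-point equals $(1+\eps)$ times the new frontier, hence is at least $(1+\eps)\beta_i>\beta_i$. Thus a single solve in the optimal range already pushes the frontier strictly past $\beta_i$, and no second solve-point can occur there.

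The crux is the second step: bounding the number $r_i$ of solve-points lying in the gap $(\beta_i,\alpha_{i+1})$. After the in-range solve the frontier is $\ge\beta_i$, so combining the FE update with Lemma~\ref{lem:approx} the first gap solve-point is $\ge(1+\eps)\beta_i$, and each subsequent solve-point is at least $(1+\eps)^2$ times its predecessor (the growth rate underlying Lemma~\ref{le:grow_rate_r}). Consequently the last gap solve-point satisfies $(1+\eps)^{2r_i-1}\beta_i<\alpha_{i+1}$, i.e.\ $2r_i-1<\log_{1+\eps}w(\mbx_{*}^i)\le p'$, whence $r_i<(p'+1)/2$ and therefore $r_i\le\lfloor p'/2\rfloor$ in both parities of $p'$. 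The hard part will be precisely this constant: the leading factor $(1+\eps)$ (the "head start" the frontier gets from reaching $\beta_i$ in the previous step) is what I expect to be delicate, because without it one only recovers the weaker $\lceil\tfrac12\log_{1+\eps}w(\mbx_{*}^i)\rceil$ of Lemma~\ref{le:grow_rate_r}, which can strictly exceed $p'/2$; the parity argument on the floor is exactly what pins the bound at $p'/2$ rather than $\lceil p'/2\rceil$.

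Combining the two steps, each interval $I_i$ absorbs at most $1+\lfloor p'/2\rfloor\le 1+p'/2$ solve-points, and over the $m$ intervals this gives $\abs{\curC}\le(p'/2+1)\abs{\curC'}$. The two remaining items I would verify are the boundaries: the initial segment $[4/n^2,\alpha_1)$ below the first optimal range, and the single extra solve possibly produced by the terminal \textbf{if} block of Algorithm~\ref{alg:fe}. I would dispatch these by noting that FE's very first solve is at $4/n^2$, so in the regime where this lemma is applied — $\curC'$ an $\eps$-cover, for which $\mbx_{*}^1$ already $(1+\eps)$-covers $4/n^2$ — we may take $\alpha_1=4/n^2$ without loss (equivalently, prepend the solution FE computes first, which is optimal at $4/n^2$, turning the initial segment into an ordinary gap governed by a forward ratio), while the lone terminal solve is absorbed into the per-interval count. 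The quantitative heart, however, is entirely the per-interval bound of the two steps above.
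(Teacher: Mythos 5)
Your per-cell counting argument is essentially sound, but it is a genuinely different proof from the paper's, so a comparison is worthwhile. The paper never partitions the parameter axis: it anchors its count on Lemma~\ref{le:it_takes_two}, locating for each $\mbx_{*}^{i} \in \curC'$ a consecutive pair of FE solutions covering it, and then bounds the number of FE iterations between the pair covering $\mbx_{*}^{i}$ and the pair covering $\mbx_{*}^{i+1}$ by Lemma~\ref{le:grow_rate_r} applied to the interval from $b_{j+1}$ to $\alpha^*_{i+1}$. The inequality that saves the parity there --- turning $\lceil \tfrac{1}{2}\log_{1+\eps} w \rceil$ into $\lceil (p'-1)/2\rceil \le p'/2$ --- is the overhang $b^*_i \ge (1+\eps)\beta^*_i$ of the $\eps$-approximate range beyond the optimal range, usable because the covering pair satisfies $b_{j+1} \ge b^*_i$. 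You spend the same single factor of $(1+\eps)$ slack, but as a head start of the first gap solve past $\beta_i$, and you charge solve-points directly to the cells $[\alpha_i,\alpha_{i+1})$. Your charging is unambiguous (every solve-point lies in exactly one cell) and avoids Lemma~\ref{le:it_takes_two} entirely, which is a real simplification; the price is that you must account for solve-points the paper's scheme silently ignores.

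Three such points need attention. First, FE can jump over an optimal range $[\alpha_i,\beta_i]$ entirely (this happens whenever the solve preceding $\alpha_i$ has frontier above $\beta_i/(1+\eps)$); then there is no in-range solve and your inequality $(1+\eps)^{2r_i-1}\beta_i < \alpha_{i+1}$ is false as written. The cell bound survives --- without the head start one still gets $2(r_i - 1) < \log_{1+\eps} w(\mbx_{*}^{i}) \le p'$, hence $r_i \le \lceil p'/2\rceil \le 1 + \lfloor p'/2 \rfloor$, and the absent in-range solve pays for the weaker gap count --- but this case must appear in the proof. Second, your head start is understated: after an in-range solve the frontier is at least $(1+\eps)\beta_i$ by Lemma~\ref{lem:approx}, so the first gap solve is at least $(1+\eps)^2\beta_i$. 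You will want that extra factor, because the terminal solve from the \textbf{if} block sits only a factor $(1+\eps)$ above its predecessor; with your weaker head start, absorbing it into the last cell costs exactly the parity saving when $p'$ is odd, whereas with the full head start everything fits in $1 + \lfloor p'/2\rfloor$. Third, the initial segment $[4/n^2,\alpha_1)$: your instinct that an extra hypothesis is needed is correct --- for arbitrary $\curC'$, which is all the lemma assumes, the bound can actually fail (on a ring graph take $\curC'$ to be the single clustering that pairs up adjacent nodes; it is LP-optimal on an interval with right endpoint $1$, so $p' = 0$ and the claimed bound is $1$, yet $|\curC| = \Omega(\log n)$ by Theorem~\ref{thm:logn}) --- but your repair is not rigorous: a cover hypothesis gives $a_1 \le 4/n^2$, not $\alpha_1 = 4/n^2$, and prepending FE's first solution changes both $|\curC'|$ and $p'$. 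To be fair, the paper's own proof is silent on exactly the same point (its charging never counts FE solutions to the left of the pair covering $\mbx_{*}^{1}$), so this is a looseness of the lemma as stated, shared by both arguments, rather than a defect specific to your decomposition.
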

\begin{proof}

Index solutions in~$\curC'$ by the natural ordering,~$\prec$.
Consider two consecutive solutions,~$\mbx_{*}^i$ and~$\mbx^{i+1}_*$, in~$\curC'$.
By Lemma~$\ref{le:it_takes_two}$, there exists a pair of solutions, $\mbx^{j}$ and~$\mbx^{j+1}$ in~$\curC$, whose union covers~$\mbx_{*}^i$, and a pair of solutions, $\mbx^{j'}$ and~$\mbx^{j'+1}$ whose union covers~$\mbx^{i+1}_*$.
For the FE algorithm to reach from~$\mbx^{j+1}$, the second solution of the first pair, to~$\mbx^{j'}$, the first solution of the second pair, it might take multiple iterations.
We then bound the number of iterations in the rest of the proof.
Note that FE needs to cover~$\lmd$ values between~$\lmd= b_{j+1}$ and somewhere to the left of~$\lmd = \alpha^*_{i+1}$.
If for some~$i$ we have~$b_{j+1} > \alpha_{i+1}^*$, then it is the ideal case, as~$\mbx^{i+1}_*$ will either be in~$\curC$ or the~$\eps$-approximate range of~$\mbx_*^{i+1}$ is entirely covered by another solution in~$\curC$. 
Nonetheless, set~$\curC$ does not need more than one solution for the~$\lmd$ values in the~$\eps$-approximate range of~$\mbx^{i+1}_*$.

Consider the harder case where~$b_{j+1} < \alpha^*_{i+1}$.
From Lemma \ref{le:grow_rate_r} we know the amount iterations it takes from~$b_{j+1}$ to~$\alpha^*_{i+1}$ is bounded by~$\ceil{(1/2)\log_{1+\eps} (\alpha^*_{i+1}/b_{j+1})}$.
We have
\[
\left \lceil \frac{1}{2}\log_{1+\eps} \frac{\alpha^*_{i+1}}{b_{j+1}}\right \rceil \le 
\left \lceil \frac{1}{2} \log_{1+\eps} \frac{\alpha^*_{i+1}}{b^*_{i}} \right  \rceil  =
\left \lceil \frac{1}{2} \log_{1+\eps} \left(\frac{\beta^*_i}{b^*_i} \frac{\alpha^*_{i+1}}{\beta^*_{i}} \right) \right\rceil  \le 
\left\lceil \frac{p'-1}{2} \right\rceil
\le \frac{p'}{2}.
\]
The second-last inequality arises from~$b^*_i / \beta^*_i \ge 1+\eps$.
Hence there can be at most~$p'/2$ solutions in~$\curC$ between~$\mbx_{*}^i$ and~$\mbx^{i+1}_*$.
To cover~$\mbx^{i+1}_*$, only one more solution is required.
Therefore~$\abs{\curC} \le 1+\sum_{i=1}^{\abs{\curC'}-1} (p'/2+1) \le (p'/2+1)\abs{\curC'} $.
If~$p'=0$ then it is easy to show that~$\abs{\curC} \le \abs{\curC'}$.
\end{proof}
Lemma \ref{le:it_takes_two} and Lemma~\ref{le:general_bound} together demonstrate the strength of FE algorithm when we
have some knowledge of a set of optimal solutions. Note finally that we can provide a simple upper bound on the number of LP solutions output by the FE algorithm.
\begin{theorem}\label{thm:upper_bound_fe_cover}
	Let~$\curC$ be the FE cover. We have $\abs{\curC} \le  \ceil{\log_{1+\eps}(n)}$.
\end{theorem}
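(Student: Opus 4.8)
The plan is to show that the Frontier Extension algorithm advances its frontier \emph{geometrically}, so that the entire interval $(4/n^2,1)$ is swept out after only logarithmically many iterations of the while loop. The one structural ingredient I would establish first is the per-iteration growth rate of the variable~$\lmd_0$. In a single pass of the loop the solution~$\mbx^*$ is optimal at~$\lmd_0$, and Lemma~\ref{lem:approx} guarantees that~$\mbx^*$ is a $(1+\eps)$-approximation for every~$\lmd \in [\lmd_0,(1+\eps)\lmd_0]$. Hence the right endpoint of its $\eps$-approximate range satisfies $\lmd_+ = \lmd_0 + \theta_+ \ge (1+\eps)\lmd_0$. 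Because line~\ref{ln:next_lmd} then resets $\lmd_0 \gets (1+\eps)\lmd_+ \ge (1+\eps)^2 \lmd_0$, each iteration multiplies~$\lmd_0$ by at least~$(1+\eps)^2$. This is exactly the growth estimate already isolated in Lemma~\ref{le:grow_rate_r}.

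Next I would count the loop iterations. The loop starts at $\lmd_0 = 4/n^2$ and continues only while $\lmd_0 < 1$, so after $T$ passes one has $\lmd_0 \ge (1+\eps)^{2T}(4/n^2)$, and the loop can continue past pass~$T$ only if this quantity is still below~$1$. This forces $(1+\eps)^{2T} < n^2/4$, i.e.
\[
T \;\le\; \ceil*{\tfrac{1}{2}\log_{1+\eps}\!\big(\tfrac{n^2}{4}\big)} \;=\; \ceil*{\log_{1+\eps}\!\big(\tfrac{n}{2}\big)}\,,
\]
which is the bound one also obtains by applying Lemma~\ref{le:grow_rate_r} directly to the interval $[4/n^2,1]$. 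Thus the loop itself contributes at most $\ceil{\log_{1+\eps}(n/2)}$ solutions to~$\curC$.

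Finally I would fold in the post-loop branch. When the loop exits we have $(1+\eps)\lmd_+ \ge 1$; if in addition $\lmd_+ < 1$, the algorithm solves one further LP at~$\lmd_+$, whose $\eps$-approximate range reaches $(1+\eps)\lmd_+ \ge 1$ and therefore closes the residual interval $(\lmd_+,1)$. This branch adds at most one more solution, giving $\abs{\curC} \le \ceil{\log_{1+\eps}(n/2)} + 1$. The last step is the arithmetic identity $\log_{1+\eps}(n/2) = \log_{1+\eps}(n) - \log_{1+\eps}(2)$: since $\log_{1+\eps}(2) \ge 1$ whenever $1+\eps \le 2$, we get $\log_{1+\eps}(n/2) \le \log_{1+\eps}(n) - 1$, whence $\ceil{\log_{1+\eps}(n/2)} \le \ceil{\log_{1+\eps}(n)} - 1$ and so $\abs{\curC} \le \ceil{\log_{1+\eps}(n)}$, the claimed bound.

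I expect the main obstacle to be not conceptual but bookkeeping: precisely reconciling the single extra solution from the \textbf{if}-branch with the ceiling operations so that the final count lands on $\ceil{\log_{1+\eps}(n)}$ rather than one larger. For the constant-$\eps$ regime the paper actually works in (e.g.\ $\eps \le 1$), the inequality $\log_{1+\eps}(2) \ge 1$ resolves this cleanly; one should be careful either to state the bound for that regime or, for larger~$\eps$, to track the slack between $\ceil{\log_{1+\eps}(n/2)}$ and $\ceil{\log_{1+\eps}(n)}$ more explicitly.
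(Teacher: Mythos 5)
Your proof takes essentially the same route as the paper: the paper's own argument is a one-line application of Lemma~\ref{le:grow_rate_r} (the $(1+\eps)^2$ per-iteration growth of the frontier) to the interval $[4/n^2,1]$, which is exactly the growth estimate you re-derive and then apply. If anything, your version is more careful than the paper's: the paper's proof silently drops the single extra solution that the post-loop \textbf{if}-branch may add, absorbing it only implicitly in the slack between $\lceil\log_{1+\eps}(n/2)\rceil$ and $\lceil\log_{1+\eps}(n)\rceil$, whereas you track that solution explicitly and correctly note that the final arithmetic closes only when $\log_{1+\eps}(2)\ge 1$, i.e.\ $\eps\le 1$.
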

\begin{proof}
	This bound is obtained simply by applying Lemma~\ref{le:grow_rate_r} with $\lambda_1 = 4/n^2$ and $\lambda_2 = 1$:
	\[
	\left\lceil \frac12 \log_{1+\eps}\left(\frac{1}{4/n^2} \right) \right\rceil \le \left\lceil \log_{1+\eps}(n) \right\rceil  .
	\]
\end{proof}



We end this subsection by considering a lower bound on the number of LP solutions needed to optimize the LambdaPrime LP relaxation in all parameter regimes. 
We can bound the optimal cover size via the forward factor directly.
\begin{theorem}\label{thm:lower_bound_opt_cover}
For a given \lpcc instance, let~$\curC^*$ be the optimal cover and~$\curC$ be an FE cover.
Let~$p^* = p(\eps,\curC^*)$.
We have~$\abs{\curC} \le (p^*/2+1)\abs{\curC^*}$.
\end{theorem}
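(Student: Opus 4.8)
The plan is to recognize that this theorem is an immediate specialization of Lemma~\ref{le:general_bound}. That lemma already establishes, for \emph{any} family $\curC'$ of LP solutions each of which is optimal for some $\lambda$, the bound $\abs{\curC} \le (p(\eps,\curC')/2+1)\abs{\curC'}$, where $\curC$ is the FE cover. Thus the entire argument reduces to checking that the optimal cover $\curC^*$ is an admissible choice of $\curC'$, and then instantiating the lemma.

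First I would verify the hypothesis. By definition, an $\eps$-cover is a set of \emph{optimal} LP solutions, so the optimal cover $\curC^*$ --- being the $\eps$-cover of minimum cardinality --- consists entirely of solutions that are each optimal for at least one value of $\lambda$. This is precisely the requirement placed on $\curC'$ in Lemma~\ref{le:general_bound}, so nothing further needs to be shown about the structure of $\curC^*$; in particular the natural total ordering $\prec$ and the forward factor $p(\eps,\curC^*)$ are all well defined.

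Then I would simply apply Lemma~\ref{le:general_bound} with $\curC' = \curC^*$. This sets $p' = p(\eps,\curC^*) = p^*$ and yields $\abs{\curC} \le (p^*/2+1)\abs{\curC^*}$ directly, completing the proof.

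There is no genuine obstacle here, since the substantive work --- bounding, via the forward factor and Lemma~\ref{le:grow_rate_r}, the number of FE iterations needed to bridge the gap between the pairs of solutions in $\curC$ that cover consecutive members of $\curC'$ --- was already carried out in the proof of Lemma~\ref{le:general_bound}. The only point worth flagging is the interpretation that justifies the ``lower bound'' label: because $\curC^*$ is a \emph{minimum}-size cover we trivially also have $\abs{\curC^*} \le \abs{\curC}$, so rearranging the theorem's inequality bounds the optimal cover size from below in terms of the FE cover, namely $\abs{\curC^*} \ge \abs{\curC}/(p^*/2+1)$. This is the sense in which knowledge of the FE cover certifies a lower bound on the minimum number of LP solutions required to $\eps$-approximate the LambdaPrime LP in all parameter regimes.
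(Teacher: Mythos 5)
Your proposal is correct and matches the paper's own proof exactly: the paper also proves this theorem by instantiating Lemma~\ref{le:general_bound} with $\curC' = \curC^*$, noting that the optimal cover consists of LP solutions each optimal for some $\lambda$ and hence is an admissible input to the lemma. Your closing remark about the rearranged inequality serving as a lower bound on $\abs{\curC^*}$ also mirrors the paper's stated interpretation of the result.
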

Theorem \ref{thm:lower_bound_opt_cover} is proved by taking~$\curC^*$ as the set of solutions in
Lemma~\ref{le:general_bound}. It establishes a performance bound for the FE algorithm, by characterizing the optimal cover, and also serves as a lower bound for the optimal cover.

Our results here apply to general instances of LambdaPrime, in contrast to Theorem \ref{thm:logn}, where the bound is proven specifically for ring graphs. Our main result of the next subsection is to show how we can refine the output of the FE algorithm to produce a family of LP solutions that has at most twice the number of LP solutions the optimal family. 
\subsection{Refining the FE Output}
\begin{algorithm}[t]
	\caption{Frontier Extension, Backward Elimination (FEBE)}\label{alg:febe}
	\begin{algorithmic}[1]
		\Function{FEBE}{$G, \eps$}
		\State $\curC' \gets \varnothing$
		\State $\curC \gets$ FE$(G,\eps)$
		\State For each $\textbf{x}_i \in \mathcal{C}$, compute $\textbf{ORLP}(\textbf{x}_i,-1,\lambda_i, \eps)$ to obtain $\eps$-approximate range $[a_i, b_i]$ 
		\State $\lambda_0 = \frac{4}{n^2}$
		\While{$\lambda_0 < 1$}
		\State $S = \{\textbf{x}_i \in \mathscr{C} : \lambda_0 \in [a_i, b_i] \}.$
		\State $i^* = \argmax_{i: \textbf{x}_i \in S} b_i$.
		\State Add $\textbf{x}_{i^*}$ to $\curC'$.
		\State $\lambda_0 \leftarrow b_{i^*}$
		\EndWhile
		\State\Return $\curC'$
		\EndFunction
	\end{algorithmic}
\end{algorithm}
Although the cover returned by the FE algorithm can potentially be much larger than the optimal cover, the FE cover can
be ``distilled'' to a more compact, $\eps$-cover without redundant clusterings. Fix ~$\eps > 0$ and run the FE algorithm
to obtain a family of LP solutions~$\mathscr{C}$ that is an~$\eps$-cover. Let~$\mathscr{C}^*$ be an
optimal~$\eps$-cover. We know from Lemma~\ref{le:it_takes_two} that there exists a subfamily of~$\mathscr{C}$ that
contains at most $2 |\mathscr{C}^*|$ LP solutions but remains an~$\eps$-cover. Let~$\mathscr{C}' \subseteq \mathscr{C}$
be a subfamily of $\mathscr{C}$ of minimum size while still being an~$\eps$-cover. If we can find a way to extract such a sub-family~$\mathscr{C}'$, it must satisfy $|\mathscr{C}'| \leq 2|\mathscr{C}^*|$. 

We extract such a minimum size $\eps$-cover subfamily using the Frontier Extension Backward Elimination (FEBE) algorithm,
laid out in Algorithm~\ref{alg:febe}. FEBE begins computing the full $\eps$-approximate range for every LP solution in
the cover $\mathscr{C}$. More precisely, for every $\textbf{x}_i \in \mathscr{C}$, let $\lambda_i$ be the value of
$\lambda$ for which $\textbf{x}_i$ is optimal, and compute $\textbf{ORLP}(\textbf{x}_i,-1,\lambda_i, \eps)$. In this
way, for each $\textbf{x}_i \in \mathscr{C}$, we obtain the full $\eps$-approximate range $[a_i, b_i]$ for
$\textbf{x}_i$. Recall that $b_i$ was computed in the original FE algorithm. Note that, compared with the FE algorithm,
the FEBE algorithm increases the number of LPs solved by only a factor of two. 

Once we have extracted the exact $\eps$-approximate range for each $\textbf{x} \in \mathscr{C}$, the task of extracting
a minimum size $\eps$-cover subfamily is completely equivalent to extracting a minimum size family of subintervals of
the form $[a_i, b_i]$, in order to cover the interval of nontrivial~$\lambda$ values, $(4/n^2, 1)$. The general problem of finding a minimum cardinality covering of a larger interval by smaller interval arises in other settings as well, and permits a simple greedy solution which is the basis for FEBE~\cite{StackoverflowInterval}. For completeness we include full details.

FEBE starts at the left endpoint $\lambda_0 = 4/n^2$ and considers every interval $[a_i, b_i]$ that covers $\lambda_0$.
Among these, FEBE selects the interval with a maximum value for $b_i$, i.e., it selects the LP solution
$\textbf{x}_{i^*}$, where $i^*$ is the index maximizing $b_i$ among intervals covering $\lambda_0$. Intuitively, in order to end up with a subfamily that is still an~$\eps$-cover, FEBE must select at least one interval that covers $\lambda_0 = 4/n^2$, and so it selects the interval that ``reaches'' farthest to the right. The goal then becomes finding a cover for the subinterval $(b_{i^*}, 1)$, so the same procedure is repeated on the subinterval by setting $\lambda_0 = b_{i^*}$. 
Clearly, this procedure produces an~$\eps$-cover; due to the greedy nature of the method, in fact it produces
a minimum-size~$\eps$-cover. If~$S_0$ is the set of intervals containing~$\lambda_0$ and FEBE did not select any intervals from $S_0$, then it would not produce an~$\eps$-cover. If it selects an interval from $S_0$ which does not maximize the size of the right endpoints, then it would need to find a cover for the interval $(b, 1)$, where $b < b_{i^*}$, which cannot possibly be covered with a smaller number of intervals than the amount for $(b_{i^*},1)$. We conclude with a theorem.
\begin{theorem}\label{thm:2approx}
    Let~$\curC$ be the FE cover, $\curC'$ be the FEBE~$\eps$-cover, and $\curC^*$ be the optimal~$\eps$-cover.
    Let~$p' =p(\eps, \curC')$. 
    We have~$\abs{\curC'} \le 2\abs{\curC^*}$, and~$\abs{\curC} \le 2(p'/2+1)\abs{\curC^*} = (p'+2)\abs{\curC^*}$.
\end{theorem}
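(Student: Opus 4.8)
The plan is to obtain both inequalities purely from the structural lemmas already established for the FE and FEBE algorithms, so that no fresh calculation is needed. For the first bound $\abs{\curC'} \le 2\abs{\curC^*}$, the two ingredients are: FEBE returns a \emph{minimum-size} $\eps$-cover subfamily of the FE cover~$\curC$, while Lemma~\ref{le:it_takes_two} exhibits \emph{some} $\eps$-cover subfamily of~$\curC$ of size at most $2\abs{\curC^*}$. For the second bound I would simply invoke Lemma~\ref{le:general_bound} with the FEBE cover playing the role of the input family.

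First I would confirm that the FEBE output~$\curC'$ is a minimum-cardinality subfamily of~$\curC$ whose $\eps$-approximate ranges~$[a_i,b_i]$ together cover the target interval~$(4/n^2,1)$. Algorithm~\ref{alg:febe} is precisely the classical greedy interval cover: starting from the left frontier and repeatedly selecting, among all ranges containing the current point, the one reaching farthest right. This greedy choice is optimal, so~$\curC'$ is minimum among all $\eps$-cover subfamilies of~$\curC$. Next I would build a competing $\eps$-cover subfamily of~$\curC$ of size at most~$2\abs{\curC^*}$: by Lemma~\ref{le:it_takes_two}, each~$\mbx_{*}^i \in \curC^*$ either lies in~$\curC$ (one solution needed) or has its $\eps$-approximate range contained in the union of the ranges of two consecutive solutions of~$\curC$ (two solutions needed). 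Collecting these at most two solutions per element of~$\curC^*$ yields a subfamily of~$\curC$ of size at most~$2\abs{\curC^*}$ whose ranges cover~$(4/n^2,1)$, because the ranges of~$\curC^*$ already do and each is contained in the chosen union. Minimality of~$\curC'$ then forces $\abs{\curC'} \le 2\abs{\curC^*}$.

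For the second inequality I would apply Lemma~\ref{le:general_bound} directly to the family~$\curC'$, which is a legitimate family of LP solutions each optimal for some~$\lambda$ (being a subfamily of the FE cover). The lemma gives $\abs{\curC} \le (p'/2+1)\abs{\curC'}$ with $p' = p(\eps,\curC')$. Substituting $\abs{\curC'} \le 2\abs{\curC^*}$ from the first part yields $\abs{\curC} \le 2(p'/2+1)\abs{\curC^*} = (p'+2)\abs{\curC^*}$, as claimed. Note there is no circularity: $p'$ is determined by the fixed object~$\curC'$ once the algorithms have run, so this is an a posteriori bound on~$\abs{\curC}$.

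The step that most needs care is verifying that the subfamily assembled via Lemma~\ref{le:it_takes_two} is genuinely an $\eps$-cover of the whole interval~$(4/n^2,1)$, rather than merely covering the individual $\eps$-approximate ranges of~$\curC^*$; this is exactly where the factor-two guarantee is earned, and it holds because~$\curC^*$ covers~$(4/n^2,1)$ and containment of ranges passes transitively through unions. The greedy-optimality of FEBE is the other place demanding precision, though it is a textbook argument about minimum interval covers.
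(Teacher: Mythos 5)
Your proposal is correct and follows essentially the same route as the paper: the first bound comes from combining the greedy (hence minimum-size) interval-cover property of FEBE with the size-$2\abs{\curC^*}$ sub-cover guaranteed by Lemma~\ref{le:it_takes_two}, and the second bound comes from applying Lemma~\ref{le:general_bound} with the FEBE cover as the reference family and substituting the first bound. Your extra remark about verifying that the assembled subfamily covers all of $(4/n^2,1)$ --- not merely the individual ranges of $\curC^*$ --- is a point the paper passes over implicitly, and it is handled correctly in your argument.
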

The second bound in the theorem is shown by applying Lemma~\ref{le:general_bound} on the FE cover as compared to the FEBE cover. In this way, FEBE allows us not only to obtain a small~$\eps$-cover, but also provides a potentially stronger a posteriori bound on $|\mathscr{C}^*|$ by computing $p'$ for $\mathscr{C}'$. This in turn gives an a posteriori bound the number of LP relaxations solved by the original algorithm FE, in terms of the minimum number of LP solutions needed to approximate the LP relaxation in all parameter regimes. 

\section{Related Work}
We list relevant related work on Correlation Clustering, graph clustering with resolution parameters, and existing work on approximating other types of parametric objectives across a range of parameters.

\paragraph*{Correlation Clustering}
Correlation clustering is an objective for partitioning signed graphs in such a way that positive edges tend to be inside clusters, and negative edges tend to cross between clusters. The problem was first introduced by Bansal et al.~\cite{Bansal2004correlation}, with several later papers independently showing an $O(\log n)$-approximation for arbitrarily weighted signed graphs~\cite{CharikarGuruswamiWirth2005,demain2006ccgen,EmanuelFiat2003}. Recently~\cite{Veldt:2018:CCF:3178876.3186110}, some of the authors introduced a parameterized version of the objective called LambdaCC. An instance of LambdaCC is given by a complete signed graph in which all negative edges have weight $\lambda \in (0,1)$, and all positive edges have weight $(1-\lambda)$. In follow up work~\cite{gleich2018ccgen}, the same authors showed that the LambdaCC LP relaxation has an $O(\log n)$ integrality gap. They also later proved new results for how learn the best value of $\lambda$ to use in practice~\cite{Veldt2019learning}.

\paragraph*{Graph Clustering Objectives with Resolution Parameter}
Many other graph clustering objectives rely on tunable resolution parameters and are closely related to LambdaCC and LambdaPrime. We previously demonstrated that the Hamiltonian objective of Reichardt and Bornholdt~\cite{ReichardtBornholdt2006} is equivalent to LambdaCC for the appropriate parameter settings~\cite{Veldt:2018:CCF:3178876.3186110}. This Hamiltonian objective in turn is also known to be equivalent at optimality to a generalization of the modularity objective that includes a resolution parameter~\cite{Arenas2008analysis,newman2004modularity}. The stability objective of Delvenne et al.~\cite{Delvenne2010stabilitypnas} is another generalization of modularity that includes a resolution parameter $t$. Roughly speaking, stability measures the probability that a random walker starting inside a cluster will end up inside the same cluster after a random walk of length $t$. Although Delvenne at al.~\cite{Delvenne2010stabilitypnas} proved that stability is related to a linearized version of the Hamiltonian objective of Reichardt and Bornholdlt~\cite{ReichardtBornholdt2006}, these objectives are not identical and do not necessarily share the same set of optimal solutions. Other parametric clustering objectives include the constant Potts model of Traag et al.~\cite{traag2011narrow} and a multiresolution variant of the map equation introduced by Schaub et al.~\cite{schaub2012multiscale}.

Despite the large number of parametric graph clustering objectives~\cite{Arenas2008analysis,Delvenne2010stabilitypnas,ReichardtBornholdt2004,ReichardtBornholdt2006,traag2011narrow,schaub2012multiscale} and accompanying heuristic algorithms with tunable resolution parameters~\cite{genLouvain_software,traag2019leiden}, there has been little to no focus on finding optimal or near optimal families of clusterings for an objective across a range of parameter settings. This is at least in part due to the challenges associated with approximating the modularity objective~\cite{Dinh:2015:NCV:2919336.2920600}. Thus, our results provide a rigorous theoretical foundation for a problem that has been explored at length in the applied community detection literature.

\paragraph*{Approximation Algorithms for Parametric Objectives}
Our work also shares similarities with research on solving or approximating other types of objective functions for a range of input parameters. For instance, Mettu and Plaxton~\cite{Mettu:2003:OMP:639069.773507} introduced the \emph{online} $k$-median problem where locations appear and must be assigned sequentially \emph{and the value of $k$ is not known in advance}, and so the goal is that when $k$ is finally chosen, there is a solution that is not far from the optimal offline placement of $k$ locations. Incremental, unknown $k$, versions, of several other problems, including $k$-vertex cover, $k$-minimum spanning tree, and $k$-set cover, have also been considered in practice. Lin et al.~\cite{ling2010incremental} present an overview of previous work on incremental problems, as well as a general approach for incremental approximation algorithms. 

In our work we seek multiple solutions (i.e., a family of clusterings), which provide an approximate solution for an objective in all parameter regimes. A distinct though related goal is to find a single solution to an optimization problem that provides the best approximate possible for an entire range of parameters. This is the aim of related research on all-norm minimization~\cite{azar2002allnorm,golovin_et_al:LIPIcs:2008:1753}. Azar et al. introduced the concept of all-norm approximation algorithms, where the goal is to obtain a feasible solution to an optimization problem that such that the $p$-norm of a certain expression is small for all values of $p$. Later, Golovin et al.~\cite{golovin_et_al:LIPIcs:2008:1753} presented new results for all $p$-norm variants of set cover. Related problems in bandwidth allocation have also been considered by Kleinberg et al.~\cite{kleinberg2001} and Goel et al.~\cite{Goel:2001:AMF:365411.365483}.

\section{Discussion and Future Work}
Solving a parametric graph clustering objective for a range of resolution parameters is a useful way to detect clustering structures that highlight different characteristics in the same graph. This approach has been used frequently in applied research on community detection, with applications to hierarchical graph clustering and the detection of stable clusterings in a graph. However, previous work focuses on using heuristic methods to find local minima of graph clustering objectives, which come with no global approximation guarantees. 

In this paper we establish a theoretical framework for graph clustering in all parameter regimes. Our results come with rigorous guarantees both in terms of the approximation factor, as well as in terms of the number of clusterings needed to approximate an objective in all parameter regimes. We specifically consider the LambdaPrime objective, which is closely related (and even equivalent at optimality) to a number of previously introduced parametric graph clustering objectives. Our contributions include both novel techniques for finding small approximating families of clusterings, as well as fundamental lower bounds on the number of clusterings needed to exactly or approximately solve an objective in all parameter regimes for certain simple graph classes (ring graphs and star graphs). 

All of our techniques for obtaining approximate graph clustering solutions are built on solving and rounding linear programming relaxations of the NP-hard LambdaPrime objective. In future work, we wish to  extend our analysis of the FE algorithm (Section~\ref{sec:fe}) to develop techniques for finding provably small approximating families, without having to evaluate the LP many times and then refine the output. We also wish to better understand upper and lower bounds on the number of clusterings needed to approximate or exactly solve the LambdaPrime objective on other special classes of graphs.

\bibliography{itcsbib}

\begin{thebibliography}{10}

\bibitem{Arenas2008analysis}
A~Arenas, A~Fern{\'{a}}ndez, and S~G{\'{o}}mez.
\newblock Analysis of the structure of complex networks at different resolution
  levels.
\newblock {\em New Journal of Physics}, 10(5):053039, May 2008.

\bibitem{azar2002allnorm}
Yossi Azar, Leah Epstein, Yossi Richter, and Gerhard~J. Woeginger.
\newblock All-norm approximation algorithms.
\newblock In {\em SWAT}, pages 288--297, 2002.

\bibitem{Bansal2004correlation}
Nikhil Bansal, Avrim Blum, and Shuchi Shawla.
\newblock Correlation clustering.
\newblock {\em Machine Learning}, 56:89--113, 2004.

\bibitem{Carstensen:1983:CPI:3119410.3119525}
Patricia~J. Carstensen.
\newblock Complexity of some parametric integer and network programming
  problems.
\newblock {\em Math. Program.}, 26(1):64--75, May 1983.

\bibitem{CharikarGuruswamiWirth2005}
Moses Charikar, Venkatesan Guruswami, and Anthony Wirth.
\newblock Clustering with qualitative information.
\newblock {\em Journal of Computer and System Sciences}, 71(3):360 -- 383,
  2005.

\bibitem{Delvenne2010stabilitypnas}
J.-C. Delvenne, S.~N. Yaliraki, and M.~Barahona.
\newblock Stability of graph communities across time scales.
\newblock {\em Proceedings of the National Academy of Sciences},
  107(29):12755--12760, 2010.

\bibitem{demain2006ccgen}
Erik~D. Demaine, Dotan Emanuel, Amos Fiat, and Nicole Immorlica.
\newblock Correlation clustering in general weighted graphs.
\newblock {\em Theoretical Computer Science}, 361(2):172 -- 187, 2006.
\newblock Approximation and Online Algorithms.

\bibitem{Dinh:2015:NCV:2919336.2920600}
Thang~N. Dinh, Xiang Li, and My~T. Thai.
\newblock Network clustering via maximizing modularity: Approximation
  algorithms and theoretical limits.
\newblock In {\em ICDM}, pages 101--110, 2015.

\bibitem{EmanuelFiat2003}
Dotan Emanuel and Amos Fiat.
\newblock {\em Correlation Clustering -- Minimizing Disagreements on Arbitrary
  Weighted Graphs}, pages 208--220.
\newblock 2003.

\bibitem{gleich2018ccgen}
David~F. Gleich, Nate Veldt, and Anthony Wirth.
\newblock {Correlation Clustering Generalized}.
\newblock In {\em ISAAC}, pages 44:1--44:13, 2018.

\bibitem{Goel:2001:AMF:365411.365483}
Ashish Goel, Adam Meyerson, and Serge Plotkin.
\newblock Approximate majorization and fair online load balancing.
\newblock In {\em SODA}, pages 384--390, 2001.

\bibitem{golovin_et_al:LIPIcs:2008:1753}
Daniel Golovin, Anupam Gupta, Amit Kumar, and Kanat Tangwongsan.
\newblock {All-Norms and All-$L_p$-Norms Approximation Algorithms}.
\newblock In {\em IARCS}, pages 199--210, 2008.

\bibitem{StackoverflowInterval}
Rafa\l{}~Dowgird (https://stackoverflow.com/users/12166/rafal dowgird).
\newblock Finding the minimal coverage of an interval with subintervals.
\newblock StackOverflow Thread.
\newblock
  https://stackoverflow.com/questions/293168/finding-the-minimal-coverage-of-an-interval-with-subintervals.

\bibitem{jansen1997sensitivity}
Benjamin Jansen, JJ~De~Jong, Cornelius Roos, and Tam{\'a}s Terlaky.
\newblock Sensitivity analysis in linear programming: just be careful!
\newblock {\em European Journal of Operational Research}, 101(1):15--28, 1997.

\bibitem{genLouvain_software}
Lucas G.~S. Jeub, Marya Bazzi, Inderjit~S. Jutla, and Peter~J. Mucha.
\newblock A generalized {Louvain} method for community detection implemented in
  {MATLAB}, 2011--2019.
\newblock https://github.com/GenLouvain/GenLouvain.

\bibitem{Jeub2018multiresolution}
Lucas G.~S. Jeub, Olaf Sporns, and Santo Fortunato.
\newblock Multiresolution consensus clustering in networks.
\newblock {\em Scientific Reports}, 8(1):3259, 2018.

\bibitem{kleinberg2001}
Jon Kleinberg, Yuval Rabani, and Éva Tardos.
\newblock Fairness in routing and load balancing.
\newblock {\em Journal of Computer and System Sciences}, 63(1):2 -- 20, 2001.

\bibitem{ling2010incremental}
G.~Lin, C.~Nagarajan, R.~Rajaraman, and D.~Williamson.
\newblock A general approach for incremental approximation and hierarchical
  clustering.
\newblock {\em SIAM Journal on Computing}, 39(8):3633--3669, 2010.

\bibitem{magnati2004ipco}
Thomas~L. Magnanti and Dan Stratila.
\newblock Separable concave optimization approximately equals piecewise linear
  optimization.
\newblock In {\em IPCO}, pages 234--243, 2004.

\bibitem{magnanti2012separable}
Thomas~L Magnanti and Dan Stratila.
\newblock Separable concave optimization approximately equals piecewise-linear
  optimization.
\newblock {\em arXiv preprint arXiv:1201.3148}, 2012.

\bibitem{Mettu:2003:OMP:639069.773507}
Ramgopal~R. Mettu and C.~Greg Plaxton.
\newblock The online median problem.
\newblock {\em SIAM J. Comput.}, 32(3):816--832, March 2003.

\bibitem{Murty1980}
Katta~G. Murty.
\newblock Computational complexity of parametric linear programming.
\newblock {\em Mathematical Programming}, 19(1):213--219, Dec 1980.

\bibitem{newman2004modularity}
Mark~EJ Newman and Michelle Girvan.
\newblock Finding and evaluating community structure in networks.
\newblock {\em Physical review E}, 69(026113), 2004.

\bibitem{nowozin2009solution}
Sebastian Nowozin and Stefanie Jegelka.
\newblock Solution stability in linear programming relaxations: Graph
  partitioning and unsupervised learning.
\newblock In {\em ICML}, pages 769--776. ACM, 2009.

\bibitem{ReichardtBornholdt2004}
J\"org Reichardt and Stefan Bornholdt.
\newblock Detecting fuzzy community structures in complex networks with a potts
  model.
\newblock {\em Phys. Rev. Lett.}, 93:218701, Nov 2004.

\bibitem{ReichardtBornholdt2006}
J{\"o}rg Reichardt and Stefan Bornholdt.
\newblock Statistical mechanics of community detection.
\newblock {\em Physical Review E}, 74(016110), 2006.

\bibitem{schaub2012multiscale}
Michael~T. Schaub, Renaud Lambiotte, and Mauricio Barahona.
\newblock Encoding dynamics for multiscale community detection: Markov time
  sweeping for the map equation.
\newblock {\em Phys. Rev. E}, 86:026112, Aug 2012.

\bibitem{ShamirSharanTsur2004}
Ron Shamir, Roded Sharan, and Dekel Tsur.
\newblock Cluster graph modification problems.
\newblock {\em Discrete Applied Mathematics}, 144:173--182, 2004.

\bibitem{traag2011narrow}
V.~A. Traag, P.~Van~Dooren, and Y.~Nesterov.
\newblock Narrow scope for resolution-limit-free community detection.
\newblock {\em Phys. Rev. E}, 84:016114, Jul 2011.

\bibitem{traag2019leiden}
V.~A. Traag, L.~Waltman, and N.~J. van Eck.
\newblock From louvain to leiden: guaranteeing well-connected communities.
\newblock {\em Scientific Reports}, 9(1):5233, 2019.

\bibitem{veldt2019optimization}
Nate Veldt.
\newblock {\em Optimization Frameworks for Graph Clustering}.
\newblock PhD thesis, Purdue University, 2019.

\bibitem{veldt2019simods}
Nate Veldt, David Gleich, Anthony Wirth, and James Saunderson.
\newblock Metric-constrained optimization for graph clustering algorithms.
\newblock {\em SIAM Journal on Mathematics of Data Science}, 1(2):333--355,
  2019.

\bibitem{Veldt:2018:CCF:3178876.3186110}
Nate Veldt, David~F. Gleich, and Anthony Wirth.
\newblock A correlation clustering framework for community detection.
\newblock In {\em WWW}, pages 439--448, 2018.

\bibitem{Veldt2019learning}
Nate Veldt, David~F. Gleich, and Anthony Wirth.
\newblock Learning resolution parameters for graph clustering.
\newblock In {\em WebConf}, pages 1909--1919, 2019.

\bibitem{wirth2004approximation}
Anthony Wirth.
\newblock {\em Approximation algorithms for clustering}.
\newblock PhD thesis, Princeton University, 2004.

\end{thebibliography}

\appendix
\section{Technical Details for LambdaPrime Variants}
\label{app:lamcc}
In the main text of our manuscript, we focus exclusively on the LambdaPrime objective, a parametric version of Correlation Clustering inspired by the LambdaCC graph clustering framework. In this appendix, we address how to adapt our proof techniques for LambdaPrime, to show similar results for LambdaCC. We also address how to obtain results for node-weighted variants of LambdaPrime and LambdaCC.

\subsection{Correlation Clustering}
Correlation Clustering~\cite{Bansal2004correlation} is a framework for partitioning datasets based on qualitative information about which pairs of objects are similar and which are dissimilar. This is typically cast as an optimization problem over signed graphs. A general instance is given by a signed graph $G = (V, W^+, W^-)$ where for each pair of distinct nodes $(i,j) \in V \times V$ we have similarity and dissimilarity scores $w_{ij}^+ \in W^+$ and $w_{ij}^- \in W^-$ respectively. Both types of weights are nonnegative, but can be interpreted as positive and negative edges between nodes in $G$. The goal of Correlation Clustering is to find a way to partition the graph in a way that minimizes the weight of positive edges between clusters, and negative edges inside clusters. More formally, the objective can be cast as an integer linear program (ILP):
\begin{equation}
\label{eq:gencc}
\begin{array}{lll} \text{minimize}  & \sum_{ij} w_{ij}^+ x_{ij} + w_{ij}^- (1-x_{ij}) &\\ \subjectto  & x_{ij} \leq x_{ik} + x_{jk} & \text{ for all $i,j,k$} \\ & x_{ij} \in \{0,1\} & \text{ for all $i,j$.} \end{array}
\end{equation}
In the above $x_{ij}$ represents the binary \emph{distance} between nodes in a clustering. In other words, if $x_{ij} = 1$, nodes $(i,j)$ are in different clusters, and $x_{ij} = 0$ indicates they are clustered together. Throughout the manuscript, we will repeatedly use the fact that clusterings of a graph are in one-to-one correspondence with feasible solutions to the Correlation Clustering ILP. Note that the objective involves a penalty term $w_{ij}^+ x_{ij}$ that measures the weight of positive edges that cross between clusters. These are \emph{positive mistakes}. The term $w_{ij}^-(1-x_{ij})$ counts the weight of a negative edge that is placed inside a cluster; i.e.\ a \emph{negative mistakes}. Thus, ILP~\eqref{eq:gencc} seek to minimize the weight of mistakes or \emph{disagreements}. A related goal of maximizing agreements for Correlation Clustering also exists, though here we focus on the minimizing disagreements version.
	
\subsection{LambdaCC and LambdaPrime}
Both LambdaPrime and LambdaCC cluster a graph $G = (V,E)$ by constructing a new signed graph and clustering it based on the Correlation Clustering objective. The standard version of LambdaCC replaces edges in $G$ with positive edges of weight $(1-\lambda)$, and replaces non-edges in $G$ with negative edges of weight 1. In other words, LambdaCC defines an instance of Correlation Clustering $\tilde{G} = (V, W^+, W^-)$ with the following weights:
\begin{align}
\label{standardlcc-weights}
\textbf{(LambdaCC weights)} \hspace{.5cm} 
(w_{ij}^+, w_{ij}^-) = \begin{cases}
(1 - \lambda,0) & \text{ if $(i,j) \in E$} \\
(0, \lambda) & \text{ if $(i,j) \notin E$}\,.
\end{cases}
\end{align}
The LambdaPrime framework is very similar, but instead sets weights $(w_{ij}^+, w_{ij}^-)$ as follows:
\begin{align}
\label{standardlamprime-weights}
\textbf{(LambdaPrime weights)} \hspace{.5cm} 
(w_{ij}^+, w_{ij}^-) = \begin{cases}
(1,\lambda) & \text{ if $(i,j) \in E$} \\
(0, \lambda) & \text{ if $(i,j) \notin E$}\,.
\end{cases}
\end{align}
This weighted variant was previously considered in Veldt's PhD thesis~\cite{veldt2019optimization} as an \emph{alternative} LambdaCC formulation. We have referred to it as LambdaPrime to emphasize that it is very closely related to LambdaCC, but not identical in every way. In contrast to LambdaCC, LambdaPrime adds negative edges with weight $\lambda$ between \emph{every} pair of nodes in the graph $G$. From this view, $\lambda$ represents a small amount of repulsion that is introduced between all nodes, which then acts as an implicit regularizer on cluster size and edge density. In the LambdaCC objective, all edges must have a strictly positive or strictly negative relationship. 
Before moving on we highlight several useful facts about the relationship between the two objectives. Most of these facts are easy to see by inserting the two different types of weights into the Correlation Clustering objective~\eqref{eq:gencc}. For more details and proofs, see Chapter 3 of Veldt's PhD thesis~\cite{veldt2019optimization}. 
\begin{theorem}
	\label{thm:2objs}
For any clustering $\mathcal{C}$, let $\textbf{LamCC}(\mathcal{C},\lambda)$ and $\textbf{LamPrime}(\mathcal{C},\lambda)$ represent the LambdaCC and LambdaPrime objective scores for $\mathcal{C}$. 

\begin{enumerate}
	\item For any clustering $\mathcal{C}$ and any $\lambda$,
	\[\textbf{LamPrime}(\mathcal{C},\lambda)=  \textbf{LamCC}(\mathcal{C},\lambda)+ \lambda|E|.\]
	The same relationship holds for LP relaxations of the objectives.
	\item For any $\lambda \in (0,1)$, LambdaPrime and LambdaCC share the same set of optimal solutions. The same is true for solutions to the LP relaxations.
	\item For any value of $\lambda$, a $c$-approximation algorithm for LambdaCC will be a $c$-approximation (or better) for LambdaPrime, though the converse does not necessarily hold.
	\item The optimal LambdaPrime solution scores strictly increase with $\lambda$. This does not hold for LambdaCC.
\end{enumerate}
\end{theorem}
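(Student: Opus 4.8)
The plan is to obtain all four claims from a single algebraic identity, so I would prove part~1 first and then read off parts 2--4 as consequences. Substituting the LambdaPrime weights~\eqref{standardlamprime-weights} into the Correlation Clustering objective~\eqref{eq:gencc}, the positive term contributes $\sum_{(i,j)\in E} x_{ij}$ and, since \emph{every} pair carries a negative edge of weight $\lambda$, the negative term contributes $\sum_{i<j}\lambda(1-x_{ij})$, which is exactly the LambdaPrime ILP~\eqref{eq:lccilp}. Substituting the LambdaCC weights~\eqref{standardlcc-weights} gives $\sum_{(i,j)\in E}(1-\lambda)x_{ij} + \sum_{(i,j)\notin E}\lambda(1-x_{ij})$. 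Taking the difference, the non-edge terms cancel identically, and on each edge $(i,j)\in E$ the contribution is $x_{ij}+\lambda(1-x_{ij})-(1-\lambda)x_{ij}=\lambda$. Summing over $E$ yields $\textbf{LamPrime}(\mathcal{C},\lambda)-\textbf{LamCC}(\mathcal{C},\lambda)=\lambda|E|$, independent of $\mathcal{C}$. Since no step invokes integrality of the $x_{ij}$, the identity holds verbatim with $x_{ij}\in[0,1]$, so it also holds for the LP relaxations; this establishes part~1.

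Part~2 is then immediate: for a fixed $\lambda$ the two objectives differ by the additive constant $\lambda|E|$, which does not depend on the clustering, so the two minimization problems have identical sets of minimizers, both over integral clusterings and over LP-feasible points. For part~3, let $\mathcal{C}^*$ be a common optimum (part~2) and suppose $\mathcal{C}$ is a $c$-approximation for LambdaCC with $c\ge 1$. Using part~1, $\textbf{LamPrime}(\mathcal{C},\lambda)=\textbf{LamCC}(\mathcal{C},\lambda)+\lambda|E|\le c\,\textbf{LamCC}(\mathcal{C}^*,\lambda)+\lambda|E|\le c\big(\textbf{LamCC}(\mathcal{C}^*,\lambda)+\lambda|E|\big)=c\,\textbf{LamPrime}(\mathcal{C}^*,\lambda)$, where the last inequality uses $\lambda|E|\le c\lambda|E|$. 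The converse fails precisely because this inequality points the wrong way: adding the fixed positive constant $\lambda|E|$ only inflates the objective, so a ratio guarantee for the larger LambdaPrime objective need not transfer back to the smaller LambdaCC objective.

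For part~4, I would write the LambdaPrime score as $P(\mathcal{C})+\lambda N(\mathcal{C})$ with $P(\mathcal{C})=\sum_{(i,j)\in E}x_{ij}\ge 0$ and $N(\mathcal{C})=\sum_{i<j}(1-x_{ij})\ge 0$, so that $\opt(\lambda)$ is the lower envelope of lines of non-negative slope and is therefore non-decreasing. The one non-trivial point of the theorem, and the step I expect to be the main obstacle, is upgrading this to \emph{strict} monotonicity, which requires showing that the optimal clustering has $N>0$ for every $\lambda\in(0,1)$ on any graph with at least one edge. I would argue by a local merge: the only clustering with $N=0$ is all-singletons, with score $|E|$; merging two adjacent nodes $i,j$ with $(i,j)\in E$ sets $x_{ij}=0$, decreasing $P$ by $1$ and increasing $N$ by $1$, for a net change $\lambda-1<0$, contradicting optimality. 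Hence at the optimum $N\ge 1$, and for $\lambda_1<\lambda_2$ with $\mathcal{C}_2$ optimal at $\lambda_2$ we get $\opt(\lambda_1)\le P(\mathcal{C}_2)+\lambda_1 N(\mathcal{C}_2)<P(\mathcal{C}_2)+\lambda_2 N(\mathcal{C}_2)=\opt(\lambda_2)$. Finally, for LambdaCC, part~1 gives $\textbf{LamCC}(\mathcal{C}^*,\lambda)=\opt(\lambda)-\lambda|E|$ at the shared optimum; subtracting the strictly increasing term $\lambda|E|$ destroys monotonicity, as one verifies directly on a small example (e.g.\ the star or a triangle) where the optimal LambdaCC score eventually decreases as $\lambda\to 1$.
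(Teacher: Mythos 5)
Your proposal is correct and follows exactly the route the paper indicates: the paper gives no self-contained proof, stating only that the facts are ``easy to see by inserting the two different types of weights into the Correlation Clustering objective'' and deferring details to Chapter~3 of Veldt's thesis, and your argument is precisely that substitution, with parts~2--4 read off from the resulting identity $\textbf{LamPrime}(\mathcal{C},\lambda)=\textbf{LamCC}(\mathcal{C},\lambda)+\lambda|E|$. Your extra care on part~4 (ruling out the all-singletons clustering via a merge of two adjacent nodes to get $N\ge 1$ at optimality, with the caveat $|E|\ge 1$) correctly supplies the strict-monotonicity step the paper leaves implicit.
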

The second point in Theorem~\ref{thm:2objs} implies that all results from Section~\ref{sec:optimal} on optimal solutions for LambdaPrime also hold for optimal solutions to LambdaCC. However, the third point indicates that approximations are different, and furthermore that LambdaCC is harder to approximate than LambdaPrime. Finally, the fourth point indicates that all results proven for LambdaPrime which relied on the objective being monotonic do not immediately hold for LambdaCC. Despite these differences, we can still obtain the same basic results for LambdaCC as we obtained for LambdaPrime. In the following sections, we note changes that must be made in order to adapt our results to the LambdaCC setting.

\subsection{Approximating the LP Relaxation}
In order to obtain a $(1+\varepsilon)$-approximation for the LambdaCC LP relaxation, we can slightly adapt the proof of Lemma~\ref{lem:approx} to show:
\begin{lemma}
	\label{lem:approx2}
	Let $(x_{ij}^t)$ and $(x_{ij}^{t+1})$ be optimal solutions to the LambdaCC LP relaxation for resolution parameters $\lambda_t < \lambda_{t+1}$. Let $\delta = \frac{\lambda_{t+1}}{\lambda_t}\frac{(1-\lambda_t)}{(1-\lambda_{t+1})}$. Then $(x_{ij}^t)$ is a $\delta$-approximate solution for the LambdaPrime LP relaxation when $\lambda_{t+1}$ is used, and $(x_{ij}^{t+1})$ is a $\delta$-approximate solution for the LP relaxation when $\lambda_t$ is used.
\end{lemma}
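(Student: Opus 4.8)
The plan is to mimic the proof of Lemma~\ref{lem:approx} almost verbatim, the only complication being that for LambdaCC the coefficient multiplying the positive-edge mistakes also varies with $\lambda$. First I would fix notation: for $k \in \{t, t+1\}$ set $P_k = \sum_{(i,j) \in E} x_{ij}^k$ and $N_k' = \sum_{(i,j) \notin E} (1 - x_{ij}^k)$, the number of positive mistakes and of negative mistakes on non-edges, respectively. From the LambdaCC weights~\eqref{standardlcc-weights}, the LambdaCC LP objective of $(x_{ij}^k)$ at an arbitrary parameter $\lambda$ is $(1-\lambda) P_k + \lambda N_k'$. The essential difference from Lemma~\ref{lem:approx}, where the objective was $P_k + \lambda N_k$ with a fixed unit coefficient on $P_k$, is that here \emph{both} coefficients depend on $\lambda$ and move in opposite directions as $\lambda$ increases.

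Then I would establish the forward bound --- that $(x_{ij}^t)$ is a $\delta$-approximation at $\lambda_{t+1}$ --- by a three-link chain. The first link rescales the objective of $x^t$ from parameter $\lambda_{t+1}$ up to parameter $\lambda_t$:
\[ (1-\lambda_{t+1}) P_t + \lambda_{t+1} N_t' \;\le\; \frac{\lambda_{t+1}}{\lambda_t}\bigl( (1-\lambda_t) P_t + \lambda_t N_t' \bigr), \]
which holds because the two $\lambda_{t+1} N_t'$ terms cancel and the residual inequality $(1-\lambda_{t+1}) \le \frac{\lambda_{t+1}}{\lambda_t}(1-\lambda_t)$ reduces, after clearing denominators, to $\lambda_t \le \lambda_{t+1}$. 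The middle link is optimality of $x^t$ at $\lambda_t$, giving $(1-\lambda_t)P_t + \lambda_t N_t' \le (1-\lambda_t)P_{t+1} + \lambda_t N_{t+1}'$. The third link rescales back to $\lambda_{t+1}$:
\[ (1-\lambda_t)P_{t+1} + \lambda_t N_{t+1}' \;\le\; \frac{1-\lambda_t}{1-\lambda_{t+1}}\bigl( (1-\lambda_{t+1}) P_{t+1} + \lambda_{t+1} N_{t+1}' \bigr), \]
where now the $P_{t+1}$ terms cancel and the inequality $\lambda_t \le \frac{(1-\lambda_t)\lambda_{t+1}}{1-\lambda_{t+1}}$ again collapses to $\lambda_t \le \lambda_{t+1}$ (using $\lambda_{t+1} < 1$). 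Multiplying the two scaling factors yields exactly $\delta = \frac{\lambda_{t+1}}{\lambda_t}\cdot\frac{1-\lambda_t}{1-\lambda_{t+1}}$, so $(x_{ij}^t)$ at $\lambda_{t+1}$ costs at most $\delta$ times the optimum $(1-\lambda_{t+1})P_{t+1} + \lambda_{t+1}N_{t+1}'$. For the backward direction I would run the symmetric chain starting from the objective of $x^{t+1}$ at $\lambda_t$: scale up by $\frac{1-\lambda_t}{1-\lambda_{t+1}}$ to parameter $\lambda_{t+1}$, invoke optimality of $x^{t+1}$ at $\lambda_{t+1}$, then scale back down by $\frac{\lambda_{t+1}}{\lambda_t}$ to $\lambda_t$; the same two factors reappear, giving the same $\delta$.

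The only real subtlety --- and the step I would flag as the main obstacle --- is isolating these two scaling factors correctly. In Lemma~\ref{lem:approx} a single factor $\lambda_{t+1}/\lambda_t$ sufficed because only the negative term carried the parameter, so increasing $\lambda$ inflated one term and left the other fixed. Here the positive coefficient $(1-\lambda)$ shrinks while the negative coefficient $\lambda$ grows, so neither term alone dominates the comparison; the clean fix is to rescale the positive and negative parts by \emph{different} factors, and the extra factor $\frac{1-\lambda_t}{1-\lambda_{t+1}} \ge 1$ is precisely the overhead incurred because positive edges are no longer fixed at unit weight. Each coefficient-ratio inequality should be checked to reduce to $\lambda_t \le \lambda_{t+1}$, which is the one place where the LambdaCC weights genuinely alter the argument relative to Lemma~\ref{lem:approx}. (I note in passing that the statement's reference to the ``LambdaPrime LP'' appears to be a typo for the LambdaCC LP, which is what the constant $\delta$ and the surrounding setup describe.)
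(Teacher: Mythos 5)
Your proof is correct and takes essentially the same approach as the paper's: a three-link chain in which the optimality of one solution at its own parameter is sandwiched between two rescaling inequalities, whose factors $\frac{\lambda_{t+1}}{\lambda_t}$ and $\frac{1-\lambda_t}{1-\lambda_{t+1}}$ (each reducing to $\lambda_t \le \lambda_{t+1}$) multiply to give $\delta$; the paper writes out the backward direction and dismisses the forward one as ``similar steps,'' whereas you write out the forward one. Your side remarks are also sound: the statement's ``LambdaPrime'' is indeed a typo for ``LambdaCC,'' and your restriction of the negative-mistake sum to non-edges matches the true LambdaCC weights (the paper's proof sums over all pairs, but the algebra is insensitive to this choice).
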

\begin{proof}
		We will prove that $(x_{ij}^{t+1})$ is a $\delta$-approximation for LambdaCC when $\lambda = \lambda_{t}$. Similar steps will yield the other direction. For $k \in \{t, t+1 \}$, define
		\begin{align*}
		P_k = \sum_{(i,j) \in E} x_{ij}^k \text{ and } N_k = \sum_{i<j} (1-x_{ij}^k).
		\end{align*}
		For arbitrary $\lambda$, the LambdaCC score for $(x_{ij}^k)$ is $(1-\lambda)P_k + \lambda N_k$. 
		Solutions $(x_{ij}^t)$ and $(x_{ij}^{t+1})$ are optimal for $\lambda_t$ and $\lambda_{t+1}$ respectively. Note that
		\begin{align*}
		(1-\lambda_t) P_{t+1} + \lambda_{t} N_{t+1} &\leq \frac{(1-\lambda_{t})}{(1-\lambda_{t+1})} \left( (1-\lambda_{t+1})P_{t+1} + \lambda_{t+1} N_{t+1} \right)\\
		& \leq \frac{(1-\lambda_{t})}{(1-\lambda_{t+1})} \left( (1-\lambda_{t+1})P_{t} + \lambda_{t+1}N_t ) \right)\\
	 & \leq \delta \left( (1-\lambda_t) P_t + \lambda_t N_t \right) .
		\end{align*}
	\end{proof}

For $\varepsilon > 0$, first define a sequence of values between $\frac{1}{n^2}$ to $1$ as follows:
\begin{align}
\label{eq:gammas1}
\gamma_1 &= \frac{1}{n^2}\\
\label{eq:gammas2}
\gamma_{i+1} &= (1+\varepsilon) \gamma_{i} \text{ for $i = 1,2, \hdots q-1$},
\end{align}
where $q = \lceil \log_{(1+\varepsilon)} n^4\rceil + 1$ so that $\gamma_{q+1} \geq n^2$. Then define a sequence of $\lambda$ values by setting 
\begin{equation}
\label{newlambda}
\lambda_{i+1} = \frac{\gamma_{i+1}}{1+\gamma_{i+1}} = \frac{(1+\eps)\frac{\lambda_i}{1 - \lambda_{i}}}{ 1 + (1+\eps)\frac{\lambda_i}{1 - \lambda_{i}} } 
= \frac{(1+\eps)\lambda_i}{ (1 - \lambda_i) + (1+\eps)\lambda_i}
= \frac{(1+\eps)\lambda_i}{1+ \eps \lambda_i}.
\end{equation}
Note that $\lambda_1 = \frac{1}{n^2 + 1} < \frac{4}{n^2}$ and $\lambda_q > n^2/(1+n^2)$, so this covers the entire range of non-trivial $\lambda$ values. Evaluating the LambdaCC LP relaxation at each value of $\lambda_i$ will then provide an approximate solution in every parameter regime, where by Lemma~\ref{lem:approx2}, the worst-case approximation factor is
\begin{equation}
\delta = \frac{\lambda_{i+1}}{\lambda_{i}} \frac{1-\lambda_{i}}{1-\lambda_{i+1}} = \frac{\gamma_{i+1}}{\gamma_i} = (1+\varepsilon).
\end{equation}

Thus, by evaluating the LP relaxation $p = \lceil \log_{(1+\varepsilon)} n^4\rceil + 1$ times, we can get a $(1+\varepsilon)$-approximation for LambdaCC in all parameter regimes.

\subsection{Lower Bound on Ring Graphs for LambdaCC}
Just as we did for LambdaPrime, we can prove that in order to get an approximating family of solutions for the LambdaCC LP relaxation on ring graphs, we will need at least $\Omega(\log n)$ feasible solutions. Here we outline slight differences in the results for LambdaCC, e.g. differences in constant factors when approximating the LP relaxation by other functions. We omit proof details, as they follow by applying the same basic steps we used for LambdaPrime.

Recall that the LambdaCC and LambdaPrime objectives differ by an additive constant $\lambda |E|$, where $|E| = n$ for the ring graph. Our proof for LambdaPrime relied on proving a special way to characterization the LP relaxation, and then bounding it in terms of simpler functions. We will define similar functions for LambdaCC, using an asterisk $(^*)$ to denote functions defined for LambdaCC. Note that throughout, we consider the range $\lambda \in (4/n^2, 1/2)$, since all non-trivial clustering solutions to the ring graph fall in this interval.

For LambdaPrime we proved the LP relaxation score on ring graphs can be written
\begin{equation}
\lp(\lambda) = \min_{t \in \mathbb{N} }  \frac{n}{t} \left( 1 + \lambda {t \choose 2} \right). 
\end{equation}
Since the the objectives differ by an additive constant $\lambda n$ but share the same optimizers, we note that the LP relaxation score for LambdaCC is
\begin{align*}
\lp^*(\lambda) = \min_{t \in \mathbb{N} }  \frac{n}{t} \left( (1-\lambda) + \lambda {t \choose 2} - \lambda(t-1) \right).
\end{align*}
For LambdaPrime we proved the following lower bound on $\lp$:
\begin{equation}
g(\lambda) = \min_{t \in \mathbb{R} } = n \left( \sqrt{2\lambda} - \frac{\lambda }{2}\right) \leq \lp(\lambda).
\end{equation}
With very minor changes, we can adapt Lemma~\ref{lem:g} to prove a lower bound on the LambdaCC relaxation:
\begin{equation}
g^*(\lambda) = n \left( \sqrt{2\lambda} - \frac{3\lambda}{2} \right) \leq \lp^*(\lambda).
\end{equation}
Note that for all $\lambda \in (4/n^2,1/2)$, $g^*$ can be bounded above and below in terms of the square root function as follows:
\begin{equation}
\label{sqrt}
 n\sqrt{2}\sqrt{\lambda} \geq g^*(\lambda) \geq \frac{n \sqrt{2}\sqrt{\lambda}}{4}. 
\end{equation}
In Lemma~\ref{lemma:gbounds} we bounded the LP relaxation value and optimal solution value for LambdaPrime in terms of $g$:
\begin{equation}
g(\lambda) \leq \lp(\lambda) \leq \opt(\lambda) \leq \sqrt{2} g(\lambda)\,.
\end{equation} 
Following the same proof steps, we prove a corresponding bound for the LambdaCC framework.
\begin{equation}
\label{gstarbounds}
g^*(\lambda) \leq \lp^*(\lambda) \leq \opt^*(\lambda) \leq (2\sqrt{2}-1) g^*(\lambda).
\end{equation} 
Setting $q^*(\lambda) = \frac{n \sqrt{2}}{4} \sqrt{\lambda}$, then combining~\eqref{sqrt} with~\eqref{gstarbounds} yields
\begin{equation}
\label{finalbounds}
q^*(\lambda) \leq \lp^*(\lambda) \leq \opt^*(\lambda) \leq 4(2\sqrt{2}-1) q^*(\lambda).
\end{equation}
Applying Lemma~\ref{lemma:hardtransfer} and the results of Magnanti and Stratila~\cite{magnanti2012separable} in Theorem~\ref{thm:magnanti}, we conclude that $\Omega(\log n)$ feasible solutions are needed to approximate the LambdaCC LP relaxation in all parameter regime. The number of feasible solutions differs only by a constant, as compared with the result for LambdaPrime.

\subsection{Applying the FE Algorithm to LambdaCC}
The FE algorithm and its analysis can also be adapted to apply to the LambdaCC framework. First of all, we can slightly change our techniques in Section~\ref{orlp} for finding the optimality range and the $\eps$-approximation range for an LP solution, to account for the slight difference between the LambdaCC and LambdaPrime LP objectives. In LP~\textbf{P1}, we change $\mbc^T \mbx + \lambda {n \choose 2}$ to $\mbc^T \mbx + \lambda ({n \choose 2} -|E|)$, and make minor adjustments to remainder of Section~\ref{orlp} accordingly. Once \textbf{ORLP} has been adapted so that we can compute $\eps$-approximation ranges for LambdaCC, we can define a new FE algorithm for LambdaCC (FE-LamCC). If $\textbf{x}$ is the optimal LambdaCC LP solution at some $\lambda_0$, we can use the new version of \textbf{ORLP} to find $\lambda_+$, the largest value of $\lambda$ for which  $\textbf{x}$ is still an $\eps$-approximation. We can then solve the LambdaCC LP relaxation at a new value $\lambda_1 = (1+\eps)\lambda_+/(1+\eps \lambda_+)$. Here, $\lambda_1$ is defined as in~\eqref{newlambda}, in such a way that by Lemma~\ref{lem:approx2}, the LP solution at $\lambda_1$ is a $(1+\eps)$-approximation in the range $[\lambda_+, \lambda_1]$. The FE-LamCC algorithm continues in this way until it has an $\eps$-cover for the entire range of $\lambda$ values. The remaining results regarding the performance of the FE-LamCC algorithm also follow after similar adjustments. This includes a guarantee that we can refine the FE-LamCC cover in order to find a sub-family of LP solutions that is still an $\eps$-cover for the LambdaCC LP, and contains at most two times the number of LP solutions in a minimum $\eps$-cover.

\subsection{Node-Weighted Variants of LambdaPrime}
We briefly discuss how our approximation results for LambdaPrime can be adapted for node weighted variants. For a graph $G = (V,E)$ that we wish to cluster, consider assigning a weight $\pi(v)$ for every node $v \in V$. When converting the graph into an instance of Correlation Clustering, we can introduce a negative edge between every pair of nodes $(i,j) \in V \times V$ with weight $\lambda \pi(i) \cdot \pi(j)$ for a resolution parameter $\lambda \in (0,1)$.

When $\pi(v) = 1$ for every $v \in V$, this reduces to the standard definition of LambdaPrime we considered throughout the manuscript. Another natural choice is to set $\pi(v) = d_v$, where $d_v$ is the degree of node $v$, i.e., the number of other nodes it is adjacent to. In this case, the weighted LambdaPrime objective shares the same set of optimal solutions with the degree-weighted version of LambdaCC~\cite{Veldt:2018:CCF:3178876.3186110}. The LambdaPrime objective for a clustering $\mathcal{C}$ can be written
\begin{equation}
\label{lpweighted}
\textbf{LamPrime-Weighted}(\mathcal{C}) = \frac{1}{2}\sum_{S \in \mathcal{C}} \cut(S) - \frac{\lambda}{2} \sum_{S \in \mathcal{C}}
\textbf{vol}(S) \textbf{vol}(\bar{S}) + c, 
\end{equation}
where $\textbf{vol}(S) = \sum_{v \in S} d_v$ is the volume of a cluster, and $c = \frac{\lambda}{2} \left( \textbf{vol}(V)^2 -\sum_{i \in V} d_i^2\right )$ is a constant with respect to $\lambda$ (see Chapter 3 of Veldt's work~\cite{veldt2019optimization} for a proof). We note that analogous to Theorem~\ref{thm:optimal}, we can show an upper bound of $|E|$ on the number of clusterings needed to optimally solve the degree-weighted LambdaPrime objective in all parameter regimes. In short, this can be shown by noting that as $\lambda$ increases, the number of positive edge mistakes must increase by at least 1 each time we reach a new breakpoint in the underlying ILP. This implies that the same upper bound of $|E|$ clusterings in an optimal family holds for generalizations of modularity with a resolution parameter~\cite{Arenas2008analysis,ReichardtBornholdt2006}, since these are also equivalent to LambdaPrime and LambdaCC with regard to optimal solutions.

\end{document}